\documentclass[11pt]{article}
\usepackage{geometry}
\usepackage{amsthm, amsmath, amssymb, bbm, bm, mathrsfs}
\usepackage[round]{natbib}
\usepackage[colorlinks, 
			linkcolor=blue,
			anchorcolor=blue,
			urlcolor=blue,
			citecolor=blue]{hyperref}
\usepackage{graphicx}
\usepackage{mathrsfs}
\usepackage{threeparttable}
\usepackage[title]{appendix}
\usepackage{url}
\graphicspath{{figure/}}

\usepackage{diagbox}
\usepackage{booktabs}
\usepackage{caption}
\usepackage{subfigure}
\usepackage{float}
\usepackage{listings}
\usepackage{multirow}
\usepackage{rotating}
\usepackage{longtable}
\usepackage{enumitem}
\usepackage{cases}
\usepackage{filecontents}
\usepackage{algorithm}
\usepackage{algorithmic}
\usepackage{bibunits}
\defaultbibliographystyle{apalike}

\defaultbibliography{reference} 

\usepackage{makecell}
\usepackage{adjustbox}
\usepackage{xr}

\newtheorem{theorem}{Theorem}[section]

\newtheorem{lemma}{Lemma}[section]

\newtheorem{assumption}{Assumption}

\newtheoremstyle{mycase}{5pt}{5pt}{\upshape}{}{\bfseries}{.}{ }{} \theoremstyle{mycase}

\newtheoremstyle{myexample}{5pt}{5pt}{\upshape}{}{\bfseries}{.}{ }{} \theoremstyle{myexample}
\newtheorem{example}{Example}

\renewcommand{\theequation}{\thesection.\arabic{equation}}
\numberwithin{equation}{section}

\renewcommand{\hat}{\widehat}
\renewcommand{\tilde}{\widetilde}

\newcommand{\Cov}{\mathrm{Cov}}
\newcommand{\Var}{\mathrm{Var}}
\newcommand{\dd}{\mathrm{d}}

\newcommand{\mbR}{\mathbb{R}}
\newcommand{\mbS}{\mathbb{S}}
\newcommand{\mbM}{\mathbb{M}}

\newcommand{\mcB}{\mathcal{B}}
\newcommand{\mcT}{\mathcal{T}}

\newcommand{\mcF}{\mathcal{F}}
\newcommand{\mcH}{\mathcal{H}}

\newcommand{\mcD}{\mathcal{D}}
\newcommand{\mcN}{\mathcal{N}}

\renewcommand{\baselinestretch}{1.1}
\begin{document}
\allowdisplaybreaks[3] 

\title{\bf\Large SPARK: A General Goodness-of-Fit Assessment via Residual Projection}
\author{{Xingwei Liu, Yuhong Yang, Wangli Xu\thanks{The corresponding author}}\\
	{\small \it Center for Applied Statistics and School of Statistics, Renmin University of China, Beijing, China}\\
    {\small \it Yau Mathematical Science Center, Tshinghua University, Beijing, China}}

\date{}
\maketitle

\vspace{-0.25in}
	
\begin{abstract}
Goodness-of-fit testing is a basic tool for assessing whether a fitted procedure has captured the systematic information contained in the covariates. 
While traditional theory has largely focused on parametric regression models, modern data analysis increasingly relies on flexible black-box learners, whose predictive success alone is insufficient to assess model accuracy. 
In this paper, we propose SPARK, a general framework for goodness-of-fit testing that applies to traditional statistical models and general black-box learning procedures, continuous and binary responses, and low- and high-dimensional predictors. 
Based on a debiasing strategy, the residuals from an initial fit of a learning procedure are projected onto nearly orthogonal directions to extract any remaining signal. 
To capture information across all projection directions, we propose a kernel-based projection method and establish both its asymptotic properties and the consistency of a bootstrap procedure. 
Comprehensive simulations and real data analyses illustrate the effectiveness and flexibility of our proposed method.

\medskip
		
\noindent {\it Keywords:} Goodness-of-fit testing; Kernel methods; Debiasing; General learning procedures 

\end{abstract}

\section{Introduction}

Goodness-of-fit testing is a fundamental problem in statistics and, increasingly, in machine learning. In classical statistical modeling, it safeguards inference against model misspecification, under which standard errors, confidence intervals and tests may be unreliable. 
In modern machine learning, the need for such assessment is arguably even more acute. 
Black-box procedures, including random forests \citep{Breiman2001RandomF}, gradient boosting \citep{Friedman2001GreedyFA} and neural networks \citep{Schmidhuber2015DeepLI}, are highly flexible and often achieve strong predictive performance, yet their complexity makes it difficult to determine whether the fitted learner has exhausted the predictive information in the covariates, or whether detectable structure remains in the residuals.


In the early literature, goodness-of-fit tests primarily examined whether the mean function belongs to a prescribed parametric class. One line of research, commonly referred to as global smoothing tests, constructed test statistics from empirical processes based on residuals; see \cite{Stute1997NonparametricMC,Stute1998ModelCF,Stute2008ModelDF}. Another line of goodness-of-fit tests relied on local smoothing techniques, typically estimating the conditional mean of the residuals given the predictors via nonparametric regression; see \cite{Hrdle1993ComparingNV,Zheng1996ACT,Guo2016ModelCF}. 
In particular, \cite{Fan2001GoodnessoffitTF} proposed an adaptive Neyman test to assess the goodness-of-fit for parametric regression models by testing whether the conditional mean of residuals given predictors is close to zero. 
For a recent review, we refer to \cite{Tan2025WeightedRE}.


More recently, in high-dimensional settings, \cite{Shah2018GoodnessoffitTF} introduced the residual prediction test for linear models, which fits an arbitrary prediction function to the residuals. \cite{Jankova2020GoodnessoffitTI} further extended the idea and proposed goodness-of-fit tests for generalized linear models. Using data splitting together with a debiasing strategy based on the square-root LASSO, they applied nonparametric techniques to predict the residual terms and constructed a direction orthogonal to the residuals, thereby assessing whether additional information could be captured. 
Moreover, several recent efforts have sought to extend traditional goodness-of-fit tests to black-box learners, a challenging problem due to the complexity of the underlying learning procedures. \cite{Zhang2023IsAC} and \cite{Javanmard2024GRASPAG} proposed unified frameworks for conducting goodness-of-fit tests for binary black-box classifiers. \cite{He2025AGA} generalized the framework of \cite{Zhang2023IsAC} to accommodate both binary and continuous responses in high-dimensional settings. 

In this paper, we propose a novel goodness-of-fit test, named the statistical projection assessment with reproducing kernels (SPARK), for both traditional model-based learners and general black-box learning procedures. It is constructed based on data splitting, where one subset is used to train the interested regression model and the other is used for evaluating the goodness-of-fit. Drawing on the spirit of \cite{Jankova2020GoodnessoffitTI}, we project the residuals orthogonally via a debiasing strategy, which can be regarded a degenerate case of double machine learning \citep{Chernozhukov2018DoubleDebiasedML}. 
Our work contributes in three aspects to the literature: 
(a) We develop a unified framework for conducting goodness-of-fit tests that is applicable to both classic model-based learners and general black-box learners. In contrast,  \cite{Shah2018GoodnessoffitTF} and \cite{Jankova2020GoodnessoffitTI} are designed for checking parametric models. 
(b) Our method accommodates both regression and binary classification tasks, in both low- and high-dimensional regions. 
\cite{He2025AGA} is limited to high-dimensional settings, while \cite{Zhang2023IsAC} and \cite{Javanmard2024GRASPAG} are restricted to binary classifiers. 
(c) From the perspective of projection, the SPARK can fully characterize the null hypothesis by considering all the projection directions, whereas \cite{He2025AGA} took several special projection directions. 
Comprehensive numerical studies illustrate the superior empirical performance of our proposal in terms of both Type-I error rate and statistical power. 
 
The rest of the paper is organized as follows. 
In Section~\ref{sec_02}, we formulate the testing problem and present a simple testing procedure to illustrate the main idea. 
The SPARK procedure is formally presented in Section~\ref{sec_03}. 
Section~\ref{sec_04} includes various topics related to implementing the proposed test. 
We present simulation results and real-data examples in Sections~\ref{sec_05} and~\ref{sec_06}, respectively. 
Section~\ref{sec_07} concludes the paper with a brief discussion. 
Technical proofs and additional simulations are included in the supplementary material. 

\textit{Notation.}
For a probability measure $\mu$ on $\mathbb{R}^d$ and $1 \le p \le \infty$, let $L_p(\mu) = \{ g : \mathbb{R}^d \to \mathbb{R} \mid \|g\|_{L_p(\mu)} < \infty \}$, where $\|g\|_{L_p(\mu)}$ denotes the usual $L_p$-norm. In particular, $L_\infty(\mu)$ denotes the space of essentially bounded functions, and $\|g\|_{L_\infty(\mu)}$ denotes the essential supremum with respect to $\mu$. 
For two sequences $a_n$ and $b_n$, the notation $a_n = O(b_n)$ indicates that there exists a constant $C>0$ such that $|a_n| \leq C|b_n|$ for all sufficiently large $n$. 
$a_n = o(b_n)$ denotes that $a_n/b_n \to 0$ as $n \to \infty$. 
For a random sequence $A_n$, the notation $A_n = O_p(b_n)$ means that $A_n/b_n$ is bounded in probability, and $A_n = \Omega_p(b_n)$ means that $b_n = O_p(A_n)$.


\section{Methodology}\label{sec_02}
\subsection{Problem Formulation}
Consider a general condition mean regression model as follows:
\begin{equation}\label{eq_2_01}
    Y=E(Y|X)+\epsilon =: m(X) + \epsilon,
\end{equation}
where $Y\in\mbR$ denotes the response and $X\in\mbS\subseteq\mbR^p$ is the vector of predictors. To facilitate theoretical derivations, we require a mild moment condition on model \eqref{eq_2_01} as stated in Assumption~\ref{ass_01}.
\begin{assumption}\label{ass_01}
For $C>0$, $0<E(\epsilon^2|X)\leq C$ a.e.
\end{assumption}
Except for the moment condition specified in Assumption~\ref{ass_01}, we impose no distributional assumption on the regression error on this model. Therefore, it encompasses a broad class of homoscedastic and heteroscedastic regression models that target the conditional mean of a scalar response and whose conditional variances satisfy Assumption~\ref{ass_01}. 
Examples include the linear regression model with $m(X)=\beta_0+X^\top\beta$,
generalized linear models with $m(X)=g^{-1}(\beta_0+X^\top\beta)$ for a link function $g(\cdot)$ \citep{McCullaghNelder1989}, and semiparametric models such as the partially linear model $m(X)=Z^\top\beta+h(W)$ with $X^\top=(Z^\top,W^\top)$ \citep{Robinson1988} and the single-index model $m(X)=h(X^\top\beta)$ for unknown function $h(\cdot)$ \citep{Ichimura1993}. The model also includes fully nonparametric regression models with an unspecified mean function. 

Let $\mcD_n=\{X_i,Y_i\}_{i=1}^{n}$ be $n$ independent observations from the joint distribution. 
Using the observed sample and a learning procedure of interest, we obtain $m_{\mcD_n}(x)$ as an estimate of the conditional expectation $m(x)$. The regression model~\eqref{eq_2_01} can be rewritten as 
\begin{equation*}
    Y=m_{\mcD_n}(X) + \varepsilon_n,
\end{equation*}
where $\varepsilon_n=Y-m_{\mcD_n}(X)= \epsilon + m(X) - m_{\mcD_n}(X)$ represents the residual term. 
From the perspective of controlling generalization errors, \cite{He2025AGA} considered to assess the learning procedure by quantifying $\sup_{x\in\mbS}|m_{\mcD_n}(x)-m(x)|$. 
Let $r_n=o(1)$ be the convergence rate of the learning procedure we assess under the null hypothesis. 
The hypotheses of interest can be formulated as 
\begin{equation}\label{eq_2_03}
\begin{aligned}
    H_0: &\exists\mbM_n^0\subseteq\mbS\text{ with }P(X\in\mbM_n^0)=1 \text{ such that }\sup_{x\in\mbM_n^0}|m_{\mcD_n}(x)-m(x)|=O_p(r_n),\\
    H_1: &\exists\mbM_n^1\subseteq\mbS \text{ with } P(X\in\mbM_n^1)\geq c\text{ such that }\inf_{x\in\mbM_n^1}|m_{\mcD_n}(x)-m(x)|=\Omega_p(r_n^{(a)}),
\end{aligned}
\end{equation}
where $c$ is some positive constant and $r_n^{(a)}$ is a sequence such that $r_n=o(r_n^{(a)})$. In other words, $r_n^{(a)}$ may either decay at a slower rate than $r_n$ or fail to converge to zero, indicating that the learning procedure converges more slowly or not at all to $m(X)$, respectively. 

We present several statistical examples relevant to hypotheses in \eqref{eq_2_03}. 
\begin{example}[Parametric models]
    Under suitable conditions, especially on signal strength, SCAD \citep{Fan2001VariableSV} achieves the convergence rate $\sup_{x\in\mbS}|m_{\mcD_n}(x)-m(x)|\leq sn^{-1/2}$ with high probability, where $s$ denotes the number of nonzero coefficients in the high-dimensional linear model \citep{Shi2019LINEARHT}. The null hypothesis holds by setting $r_n=sn^{-1/2}$. 
    Similarly, $r_n$ for LASSO \citep{Tibshirani1996RegressionSA} can be set as $s(\log p/n)^{1/2}$. 
\end{example}

\begin{example}[Black-box learners]
    Under suitable conditions, kernel regression \cite{Hrdle1988StrongUC} and deep learning regression \citep{Imaizumi2023SupNormCO} attain the (near) minimax rate $(\log n/n)^{\beta/(2\beta+p)}$ over the H\"older class, where $\beta$ is the smoothness parameter. The null hypothesis holds by setting $r_n=(\log n/n)^{\beta/(2\beta+p)}$. 
\end{example}

\begin{example}[Application to regression-assisted inference]
    Regression-assisted inference typically requires assumptions on the convergence rate of the regression methods, which are difficult to verify within the classical testing framework. Under the null hypothesis in \eqref{eq_2_03} and the uniform integrability condition on $\mathcal{D}_n$, the $L_2$ convergence rate satisfies that $E\!\left\{(m_{\mathcal{D}_n}(X) - m(X))^2\right\} = O(r_n^2)$, an assumption adopted, for example, in \cite{Chernozhukov2018DoubleDebiasedML} and \cite{Cai2025}. 
\end{example}

\subsection{Motivation: Starting from a Simple Statistic}\label{sec_22}
To facilitate our proposal, we reformulate $H_0$ based on Lemma~\ref{lemma_01}. 
Denote $\mcF_n:=\{f_{\mcD_n}\mid f_{\mcD_n}:\mbS\to \mbR, \|f_{\mcD_n}\|_{L_1(P_X)}\leq1\text{ for }\mcD_n\ a.e.\}$.

\begin{lemma}\label{lemma_01}
    Suppose that $\sup_{f_{\mcD_n}\in\mcF_n}E\{|\epsilon f_{\mcD_n}(X)|\,|\mcD_n\}<\infty$ and $\sup_{n\geq1}|m_{\mcD_n}-m|\in L_{\infty}(P_X)$ a.e.\ hold.
    Then the null hypothesis in \eqref{eq_2_03} is equivalent to
    $$
    \sup_{f_{\mcD_n}\in\mcF_n}|E\{\varepsilon_n f_{\mcD_n}(X)|\mcD_n\}|=O_p(r_n).
    $$
    Moreover, a necessary condition of the null hypothesis is that 
    \begin{equation}\label{eq_2_04}
    \sup_{f_{\mcD_n}\in\mcF_n}|E[\varepsilon_n \{f_{\mcD_n}(X)-E(f_{\mcD_n}(X)|\mcD_n)\}|\mcD_n]|=O_p(r_n).
    \end{equation}
\end{lemma}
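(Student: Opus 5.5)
The plan is to reduce everything to the function $g_n:=m-m_{\mcD_n}$, which is fixed once we condition on $\mcD_n$ and is bounded by the second assumption. Two facts drive the argument. First, $E(\epsilon\mid X,\mcD_n)=0$, because $(X,\epsilon)$ is a new observation independent of $\mcD_n$. Second, the first hypothesis guarantees that all conditional expectations below are finite, so the tower property can be applied. Together these give
$$E\{\varepsilon_n f_{\mcD_n}(X)\mid\mcD_n\}=E\{g_n(X)f_{\mcD_n}(X)\mid\mcD_n\}$$
for every $f_{\mcD_n}\in\mcF_n$. The equivalence then becomes a statement about the duality between $L_1(P_X)$ and $L_\infty(P_X)$, namely
$$\sup_{\|f\|_{L_1(P_X)}\le1}\Bigl|\int g_n f\,\dd P_X\Bigr|=\|g_n\|_{L_\infty(P_X)}.$$

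\textbf{Null implies the sup condition.} Under $H_0$ we have $\|g_n\|_{L_\infty(P_X)}\le\sup_{x\in\mbM_n^0}|g_n(x)|=O_p(r_n)$, since $P(X\in\mbM_n^0)=1$. Hölder's inequality then bounds $|E\{g_nf_{\mcD_n}\mid\mcD_n\}|$ by $\|g_n\|_{L_\infty}\|f_{\mcD_n}\|_{L_1}\le\|g_n\|_{L_\infty}$, uniformly over $\mcF_n$.

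\textbf{Sup condition implies the null.} Here I would show the reverse inequality, that the supremum is at least $\|g_n\|_{L_\infty(P_X)}$. Fix $\delta>0$ and let $A_\delta=\{x:|g_n(x)|\ge\|g_n\|_{L_\infty}-\delta\}$, which has positive $P_X$-mass by the definition of the essential supremum. Take
$$f_{\mcD_n}=\operatorname{sign}(g_n)\,\mathbbm{1}_{A_\delta}/P_X(A_\delta).$$
This function has $L_1(P_X)$-norm one and depends on the data only through $\mcD_n$, so it belongs to $\mcF_n$. It also satisfies $\int g_nf\,\dd P_X\ge\|g_n\|_{L_\infty}-\delta$.

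Letting $\delta\to0$ gives $\|g_n\|_{L_\infty(P_X)}=O_p(r_n)$. Now set $\mbM_n^0=\{x:|g_n(x)|\le\|g_n\|_{L_\infty}\}\cap\mbS$. This set has probability one and $\sup_{\mbM_n^0}|g_n|=\|g_n\|_{L_\infty}$, which is exactly $H_0$.

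\textbf{Main obstacle.} The delicate point is measurability: the choice of $\delta$ and $A_\delta$ must be a measurable function of $\mcD_n$. Using a fixed sequence $\delta=1/k$ avoids selection issues. I would also need the boundedness assumption on $\sup_n|m_{\mcD_n}-m|$ so that the essential supremum is finite and the duality argument makes sense.

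\textbf{Necessary condition.} Write $\bar f=f_{\mcD_n}-E\{f_{\mcD_n}(X)\mid\mcD_n\}$. Then
$$E\{\varepsilon_n\bar f\mid\mcD_n\}=E\{\varepsilon_nf_{\mcD_n}\mid\mcD_n\}-E\{f_{\mcD_n}\mid\mcD_n\}\,E\{\varepsilon_n\mid\mcD_n\}.$$
We have $|E\{f_{\mcD_n}\mid\mcD_n\}|\le\|f_{\mcD_n}\|_{L_1}\le1$. Also, the constant function $1$ lies in $\mcF_n$, so $|E\{\varepsilon_n\mid\mcD_n\}|$ is bounded by the supremum. Hence the supremum in \eqref{eq_2_04} is at most twice the supremum in the first display, which is $O_p(r_n)$ under $H_0$.
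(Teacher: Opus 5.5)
Your proposal is correct and follows essentially the same route as the paper. Both arguments use $E(\varepsilon_n\mid X,\mcD_n)=\pm g_n(X)$ to reduce to the $L_1$--$L_\infty$ duality $\sup_{f\in\mcF_n}|E\{\varepsilon_n f(X)\mid\mcD_n\}|=\|g_n\|_{L_\infty(P_X)}$, with H\"older for the upper bound, normalized sign-indicators on near-maximal level sets for the lower bound, and the full-measure set $\{x:|g_n(x)|\le\|g_n\|_{L_\infty(P_X)}\}$ to recover $H_0$; the only difference is cosmetic, in the centering step, where the paper uses $\|f-E(f\mid\mcD_n)\|_{L_1(P_X)}\le 2\|f\|_{L_1(P_X)}$ while you split off $E(f\mid\mcD_n)E(\varepsilon_n\mid\mcD_n)$ and use the constant function $1\in\mcF_n$, both giving the factor $2$.
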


It is worth noting that \eqref{eq_2_04} is equivalent to the null hypothesis under the additional assumption that $E(\varepsilon_n|\mcD_n)E(f_{\mcD_n}(X)|\mcD_{n})=O_p(r_n)$, which particularly holds if $E(\varepsilon_n|\mcD_n)=O_p(r_n)$, meaning that any systematic shift in the error conditional on the training data is of at most order $r_n$. 
The additional centering term is crucial for debiasing, which can be regarded as a degenerate case of the conditional mean in double machine learning \citep{Chernozhukov2018DoubleDebiasedML}. It is expected that the centered projection directions are nearly orthogonal to the residuals from an asymptotic perspective. 
Moreover, our methodology is similar to that of \cite{Jankova2020GoodnessoffitTI}, but is more general. Specifically, they projected the residuals orthogonally using the square-root LASSO to reduce biases in the context of generalized linear models, whereas our methodology is applicable for both parametric models and black-box learning procedures.

Motivated by Lemma~\ref{lemma_01}, we consider splitting the dataset $\mcD_n$ into two subsets, a training subset and a testing subset, denoted as $\mcD_{n_1}:=\{X_{1,i},Y_{1,i}\}_{i=1}^{n_1}$ and $\mcD_{n_2}:=\{X_{2,i},Y_{2,i}\}_{i=1}^{n_2}$ with $n_1+n_2=n$, respectively. Specifically, $\mcD_{n_1}$ is used to estimate the conditional expectation $m(X)$ using the learning procedure of interest, denoted as $m_{\mcD_{n_1}}(X)$. 
$\mcD_{n_2}$ is used to calculate the residuals, which are given by $\varepsilon_{n_{2,i}}=Y_{2,i}-m_{\mcD_{n_1}}(X_{2,i})$. 
Given $\mcD_{n_1}$, $f_{\mcD_{n_1}}(\cdot)$ in \eqref{eq_2_04} is a fixed function $f(\cdot)$ defined on $\mbR^p$ satisfying $\|f\|_{L_1(P_X)}\leq1$.  
To estimate $E[\varepsilon_n \{f(X)-E(f(X)|\mcD_n)\}|\mcD_n]$, we can construct the following statistic
\begin{equation}\label{eq_2
_05}
    T_f:=\frac{1}{n_2}\sum_{i=1}^{n_2}\{Y_{2,i}-m_{\mcD_{n_1}}(X_{2,i})\}\{f(X_{2,i})-\hat m_f\},
\end{equation}
where  $\hat m_f:= \sum_{i=1}^{n_1}f(X_{1,i})/n_1$. 
Let $f$ be any function such that $\Var(f(X))<\infty$ and $0<V_f<\infty$, where $V_f:=E[\{Y-m(X)\}^2\{f(X)-E(f(X))\}^2]$. The following theorem establishes the limiting distribution under the null hypothesis for the statistic constructed with such a choice of $f$.

\begin{theorem}\label{th_01}
    Suppose that Assumption~\ref{ass_01} holds. 
    Under the null $H_0$, if $n_1,n_2\to\infty$ and $n_2=o(r_{n_1}^{-2})$, it holds that
    $$
    \sqrt{n_2}T_f\overset{d}{\to}\mcN(0,V_f),
    $$
    where $\overset{d}{\to}$ denotes convergence in distribution. 
\end{theorem}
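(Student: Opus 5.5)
The plan is to decompose $T_f$ around its oracle version and show that everything except a centered i.i.d.\ average is $o_p(n_2^{-1/2})$. Write $\mu_f:=E\{f(X)\}$, $\delta(x):=m(x)-m_{\mcD_{n_1}}(x)$ and $\epsilon_{2,i}:=Y_{2,i}-m(X_{2,i})$. Then $Y_{2,i}-m_{\mcD_{n_1}}(X_{2,i})=\epsilon_{2,i}+\delta(X_{2,i})$ and $f(X_{2,i})-\hat m_f=\{f(X_{2,i})-\mu_f\}-(\hat m_f-\mu_f)$. Expanding gives
\begin{equation*}
\sqrt{n_2}T_f=\underbrace{\frac{1}{\sqrt{n_2}}\sum_{i=1}^{n_2}\epsilon_{2,i}\{f(X_{2,i})-\mu_f\}}_{A_1}
+\underbrace{\frac{1}{\sqrt{n_2}}\sum_{i=1}^{n_2}\delta(X_{2,i})\{f(X_{2,i})-\mu_f\}}_{A_2}
-\underbrace{(\hat m_f-\mu_f)\frac{1}{\sqrt{n_2}}\sum_{i=1}^{n_2}\{\epsilon_{2,i}+\delta(X_{2,i})\}}_{A_3}.
\end{equation*}
The term $A_1$ is a normalized sum of i.i.d.\ variables. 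By $E(\epsilon|X)=0$ they have mean zero, and their variance is exactly $V_f\in(0,\infty)$, so the classical CLT gives $A_1\overset{d}{\to}\mcN(0,V_f)$. By Slutsky, it then remains to show $A_2=o_p(1)$ and $A_3=o_p(1)$.

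For $A_3$, the factor $\hat m_f-\mu_f$ is $O_p(n_1^{-1/2})$ by Chebyshev, since $\Var(f(X))<\infty$, and it is in any case $o_p(1)$. For the second factor, the part $n_2^{-1/2}\sum\epsilon_{2,i}$ is $O_p(1)$ because Assumption~\ref{ass_01} gives $E(\epsilon^2)\le C$. For the part $n_2^{-1/2}\sum\delta(X_{2,i})$, I condition on $\mcD_{n_1}$. Under $H_0$ we have $|\delta(x)|\le \sup_{\mbM^0_{n_1}}|\delta|=O_p(r_{n_1})$ for $P_X$-a.e.\ $x$, so this part is at most $\sqrt{n_2}\,O_p(r_{n_1})=o_p(1)$ because $n_2=o(r_{n_1}^{-2})$. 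Multiplying the two factors gives $A_3=o_p(1)$.

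The main obstacle is $A_2$, because $\delta$ does not have mean zero, so there is no cancellation. I will bound it crudely by conditioning on $\mcD_{n_1}$, under which $\delta$ is a fixed function independent of the $X_{2,i}$. Let $D_{n_1}:=\sup_{x\in\mbM^0_{n_1}}|\delta(x)|$. Since $X_{2,i}\in\mbM^0_{n_1}$ almost surely,
\begin{equation*}
|A_2|\le \sqrt{n_2}\,D_{n_1}\cdot\frac{1}{n_2}\sum_{i=1}^{n_2}|f(X_{2,i})-\mu_f|.
\end{equation*}
The empirical mean is $O_p(1)$ by the law of large numbers, since $f\in L_2$. Also $\sqrt{n_2}D_{n_1}=\sqrt{n_2}r_{n_1}\cdot O_p(1)=o_p(1)$. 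A little care is needed in combining these: the $O_p(r_{n_1})$ bound is a statement about $\mcD_{n_1}$, and the empirical mean is independent of $\mcD_{n_1}$, so the product of tight sequences remains tight, and multiplying by a deterministic $o(1)$ gives $o_p(1)$.

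The step requiring attention is ensuring that the rate condition $n_2=o(r_{n_1}^{-2})$ is exactly what kills the bias terms. No centering trick is needed here, since the uniform rate under $H_0$ suffices. Combining the three terms via Slutsky yields $\sqrt{n_2}T_f\overset{d}{\to}\mcN(0,V_f)$.
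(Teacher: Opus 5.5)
Your proposal is correct and follows essentially the same route as the paper. The paper uses the same four-term expansion: up to the factor $\sqrt{n_2}$, your $A_1$ and $A_2$ are its $A_1$ and $A_3$, and your $A_3$ merges its $A_2$ and $A_4$. It then bounds these with the same ingredients you use (the CLT for the oracle term, Chebyshev for $\hat m_f-\mu_f$, and the sup-norm bound under $H_0$), so that $n_2=o(r_{n_1}^{-2})$ removes the bias terms. Your rate $O_p(n_1^{-1/2})$ for the $\epsilon$-cross term is the correct one; the paper's final display writes $O_p(n_1^{-1})$ there, which is a harmless slip.
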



Theorem~\ref{th_01} demonstrates that the size of $T_f$ can be asymptotically controlled when $n_2=o(r_{n_1}^{-2})$. However, its statistical power is not guaranteed in general. Intuitively, when $m(\cdot)-m_{\mcD_{n_1}}(\cdot)$ is nearly orthogonal to the projection direction $f(\cdot)$ with respect to $L_2(P_X)$ metric, the test statistic $T_f$ may fail to detect the signal under the alternatives, even if $\inf_x|m(x)-m_{\mcD_{n_1}}(x)|$ is large. To state this point more rigorously, we establish the asymptotic distribution of $T_f$ under the alternatives subject to additional Lyapounov conditions; see Appendix~\ref{app_sec_04} for details.

\section{Kernel-Based Projection}\label{sec_03}
As discussed in the last section, a general projection direction $f(\cdot)$ may suffer from an inconsistent power. Moreover, although Lemma~\ref{lemma_01} provides an equivalent characterization of the null hypothesis, it is not directly applicable in practice since it is infeasible to account for all functions in $\mcF_n$. In this section, we propose the SPARK that considers all the projection directions within a dense subset of $L_1(P_X)$. 

\subsection{Kernel-Based Test Statistic}
Let $k:\mbS\times\mbS\to\mbR$ denote a symmetric positive-definite kernel, and $\mcH_k$ denote the reproducing kernel Hilbert space (RKHS) associated with $k$. 
Denote $\mcH_k^0:=\{f_{\mcD_n}\mid f_{\mcD_n}:\mbS\to \mbR, f_{\mcD_n}\in\mcH_k,\|f_{\mcD_n}\|_{\mcH_k}\leq1 \text{ for every given }\mcD_n\}$. 
Motivated by Lemma~\ref{lemma_01}, we consider a kernel embedding as
\begin{equation}\label{eq_fk}
    \sup_{f_{\mcD_n}\in\mcH_k^0}E\big[\varepsilon_n\{f_{\mcD_n}(X)-E(f_{\mcD_n}(X)|\mcD_{n})\}|\mcD_{n}\big].
\end{equation}
The following lemma illustrates that taking the supremum over the subset $\mcH_k^0$ is equivalent to taking it over the whole set. The condition on the kernel particularly holds when the kernel $k$ is universal \cite[Definition~4]{Steinwart2001OnTI}, that is, $\mcH_k$ is dense in the Banach space of bounded continuous functions with respect to the supremum norm. Examples include the Gaussian kernel and the Laplacian kernel; see details in \cite{Sriperumbudur2010HilbertSE}. 
\begin{lemma}\label{lemma_02}
    Suppose $\mbS$ is a compact subset of $\mbR^p$ and conditions in Lemma~\ref{lemma_01} hold. Suppose $k$ is a continuous symmetric positive-definite kernel function with $k(x,x)>0$ for any $x\in\mbS$, and $\mcH_k$ is dense in $L_1(P_X)$ with respect to $L_1$-norm. 
    Then, \eqref{eq_2_04} holds if and only if 
    $$\sup_{f_{\mcD_n}\in\mcH_k^0}|E\big[\varepsilon_n\{f_{\mcD_n}(X)-E(f_{\mcD_n}(X)|\mcD_n)\}|\mcD_{n}\big]|=O_p(r_n).$$
\end{lemma}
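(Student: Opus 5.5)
The plan is to prove the two implications separately. Throughout, write
\[
\Phi_n(f):=E\big[\varepsilon_n\{f(X)-E(f(X)|\mcD_n)\}|\mcD_n\big]
\]
for a function $f$ chosen given $\mcD_n$. Note that $\Phi_n$ is linear in $f$ and satisfies $\Phi_n(-f)=-\Phi_n(f)$.

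\emph{Forward direction.} Assume \eqref{eq_2_04}. The first step is to show that $\mcH_k^0$ is, up to a constant, contained in $\mcF_n$. Because $k$ is continuous and $\mbS$ is compact, $\kappa:=\sup_{x\in\mbS}k(x,x)$ is finite, and $\kappa>0$ since $k(x,x)>0$. The reproducing property then gives, for $\|f\|_{\mcH_k}\le 1$,
\[
|f(x)|=|\langle f,k(\cdot,x)\rangle_{\mcH_k}|\le \sqrt{k(x,x)}\le\sqrt{\kappa}.
\]
Hence $\|f\|_{L_1(P_X)}\le\sqrt{\kappa}$, so $f/\sqrt{\kappa}\in\mcF_n$. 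By linearity of $\Phi_n$,
\[
\sup_{\mcH_k^0}|\Phi_n|\le\sqrt{\kappa}\,\sup_{\mcF_n}|\Phi_n|=O_p(r_n).
\]

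\emph{Reverse direction.} Assume $\sup_{\mcH_k^0}|\Phi_n|=O_p(r_n)$. By homogeneity this gives $|\Phi_n(h)|\le\|h\|_{\mcH_k}\cdot O_p(r_n)$ for every $h\in\mcH_k$. The difficulty is that the RKHS norm of an $L_1$-approximant of a generic $f\in\mcF_n$ is not controlled, so density alone does not give a uniform bound. I would therefore rewrite $\Phi_n$ as a linear functional and use the structure of $L_1$. Let $\delta_n:=m-m_{\mcD_n}$ and $\bar\varepsilon_n:=E(\varepsilon_n|\mcD_n)$. Using $E(\epsilon|X)=0$ and the integrability assumptions of Lemma~\ref{lemma_01}, I expect
\[
\Phi_n(f)=\int f\,g_n\,dP_X,\qquad g_n:=\delta_n-\bar\varepsilon_n,
\]
where $\bar\varepsilon_n=E\delta_n(X)$ and $g_n\in L_\infty(P_X)$ by the assumption $\sup_n|m_{\mcD_n}-m|\in L_\infty(P_X)$. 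By $L_1$--$L_\infty$ duality, $\sup_{f\in\mcF_n}|\Phi_n(f)|=\|g_n\|_{L_\infty(P_X)}$. The task is then to bound $\|g_n\|_{L_\infty}$ in terms of $\sup_{\mcH_k^0}|\Phi_n|$, which can be written as $\|\int k(\cdot,x)g_n(x)\,dP_X(x)\|_{\mcH_k}$, the RKHS norm of the kernel mean embedding of $g_n\,dP_X$.

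\emph{Main obstacle.} This last comparison is the hard step. Density of $\mcH_k$ in $L_1(P_X)$ implies that the embedding is injective, so $\|g\|_{L_\infty}$ and $\|\mu_g\|_{\mcH_k}$ are both norms on $L_\infty(P_X)$. But they are generally not equivalent, so a rate $O_p(r_n)$ does not transfer directly. What I would try first is to exploit the fact that $H_0$ is a sup-type statement together with the uniform bound on $\delta_n$. Concretely, for a fixed $f\in\mcF_n$ and $\eta>0$, choose $h\in\mcH_k$ with $\|f-h\|_{L_1}\le\eta$. Then
\[
|\Phi_n(f)|\le\|h\|_{\mcH_k}\cdot O_p(r_n)+2\eta\|\delta_n\|_{L_\infty}.
\]
The proof would go through if the $\eta$-term could be made $O(r_n)$, for example by letting $\eta=\eta_n\asymp r_n$ and using the boundedness of $\delta_n$. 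This requires controlling $\|h\|_{\mcH_k}$ uniformly over $\mcF_n$ at approximation level $\eta_n$, which is the genuinely delicate point. The first attempt would be a compactness argument: restrict to a totally bounded subclass, such as bounded continuous functions on the compact set $\mbS$, where a finite $\eta$-net reduces the problem to finitely many RKHS approximants. The gap between that subclass and all of $\mcF_n$ would then be closed using the $L_\infty$ bound on $g_n$. If this fails, the fallback is to note that the lemma only needs to hold as an $O_p$ statement. One could then accept constants depending on the kernel through finitely many approximants chosen independently of $n$.
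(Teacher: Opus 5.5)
\textbf{The reverse direction is not proved, and it cannot be proved as stated.}

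Your forward direction is correct, and it is cleaner than the paper's. The bound $\|h\|_{L_1(P_X)}\le\sup_x|h(x)|\le\sqrt{\kappa}\,\|h\|_{\mcH_k}$ holds for every $h\in\mcH_k$, so $\mcH_k^0\subseteq\sqrt{\kappa}\,\mcF_n$. Your reduction of the two suprema is also right: they equal $\|g_n\|_{L_\infty(P_X)}$ and $\|\mu_{g_n}\|_{\mcH_k}$ respectively, where $\mu_g:=\int k(\cdot,x)g(x)\,\dd P_X(x)$.

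The reverse direction fails under the stated hypotheses, so none of your proposed repairs can succeed. These were approximants at level $\eta_n\asymp r_n$ with controlled RKHS norm, a finite net, and $n$-independent approximants. Here is a counterexample:
\begin{itemize}
\item Take $\mbS=[0,1]$, $X\sim U[0,1]$, and $\epsilon\sim\mcN(0,1)$ independent of $X$.
\item Use a Gaussian kernel. It is universal, so $\mcH_k$ is dense in $L_1(P_X)$.
\item Set $m_{\mcD_n}=m-\sin(2\pi N_n\,\cdot)$ with integers $N_n\ge r_n^{-2}$.
\end{itemize}
The conditions of Lemma~\ref{lemma_01} hold. The function $g_n=\sin(2\pi N_n\,\cdot)$ has mean zero and $\|g_n\|_{L_\infty(P_X)}=1$, so \eqref{eq_2_04} fails.

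However, one integration by parts in $\int_0^1 k(y,x)\sin(2\pi N_n x)\,\dd x$ gives $\sup_y|\mu_{g_n}(y)|=O(N_n^{-1})$. Hence $\|\mu_{g_n}\|_{\mcH_k}^2=\int_0^1\mu_{g_n}(y)g_n(y)\,\dd y=O(N_n^{-1})$. Therefore $\sup_{\mcH_k^0}|\Phi_n|=O(N_n^{-1/2})=O(r_n)$.

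High-frequency errors are invisible to the unit ball of $\mcH_k$ but not to $\mcF_n$. This is exactly why the uniform control of $\|h\|_{\mcH_k}$ you were looking for does not exist.

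\textbf{The paper's proof breaks at the same place.}

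The paper uses your density step: Lemma~\ref{app_le_02} shows that $L_1$-continuity of $\Phi_n$ plus density makes the supremum over $\mcF_n$ equal to the supremum over $\mcH_k\cap\mcF_n$. It then bridges the gap you identified with Lemma~\ref{app_le_01}, from which it infers that $\|\cdot\|_{\mcH_k}$ and $\|\cdot\|_{L_1(P_X)}$ are equivalent.

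That lemma, however, compares the norms only on the kernel sections $k(x,\cdot)$. The inequality actually needed, $\|h\|_{\mcH_k}\lesssim\|h\|_{L_1(P_X)}$ for all $h\in\mcH_k$, cannot hold:
\begin{itemize}
\item It would make $\mcH_k$ complete, hence closed, in $L_1(P_X)$.
\item By density this would force $\mcH_k=L_1(P_X)$.
\item That contradicts $\mcH_k\subseteq L_\infty(P_X)$ unless $P_X$ is finitely supported.
\end{itemize}
The example above shows this hole cannot be patched.

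What does survive is the forward direction and the qualitative converse: $\sup_{\mcH_k^0}|\Phi_n|=0$ if and only if $\sup_{\mcF_n}|\Phi_n|=0$, which is injectivity of $g\mapsto\mu_g$. A rate-level converse needs an extra assumption on the error. One example is that $g_n$ stays in a fixed finite-dimensional subspace of $L_\infty(P_X)$. There $\|g\|_{L_\infty(P_X)}$ and $\|\mu_g\|_{\mcH_k}$ are both norms and therefore equivalent.
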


Thanks to the reproducing property, taking square on both sides of \eqref{eq_fk} yields a closed form as 
\begin{equation}\label{eq_fk2}
    E\big[\varepsilon_n\varepsilon_n'U_n(X,X')|\mcD_{n}\big],  
\end{equation}
where 
$U_n(X,X'):= k(X,X')-E_X(k(X,X')|\mcD_{n})-E_{X'}(k(X,X')|\mcD_{n})+E_{XX'}(k(X,X')|\mcD_{n})$ and 
$(X',\varepsilon_n')$ is an independent copy of $(X,\varepsilon_n)$. 

Drawing on the spirit in Section~\ref{sec_22}, we split the dataset $\mcD_n$ into a training subset $\mcD_{n_1}$ and a testing subset $\mcD_{n_2}$ and then estimate \eqref{eq_fk2} as 
\begin{equation*}
    T_k:=\frac{1}{n_2(n_2-1)}\sum_{i\neq j}^{n_2}\{Y_{2,i}-m_{\mcD_{n_1}}(X_{2,i})\}\{Y_{2,j}-m_{\mcD_{n_1}}(X_{2,j})\}\hat U(X_{2,i},X_{2,j}),
\end{equation*}
where 
$$
\hat U(X_{2,i},X_{2,j})=k(X_{2,i},X_{2,j})-\frac{1}{n_1}\sum_{l=1}^{n_1}k(X_{2,i},X_{1,l})-\frac{1}{n_1}\sum_{m=1}^{n_1}k(X_{1,m},X_{2,j})+\frac{1}{n_1(n_1-1)}\sum_{l\neq m}^{n_1}k(X_{1,m},X_{1,l}). 
$$

To end this subsection, we discuss the statistic proposed in \cite{He2025AGA}, which is performed based on the cumulative covariance \citep{Zhou2020ModelFreeFS}. 
The cumulative covariance essentially measures $\Cov\{\varepsilon_n,\mathbbm1(x_k<x_0)\}$ for all $x_0$ on the support, and then summarizes the $p$ marginal effects, where $x_k$ denotes the $k$-th component of $X$. 
Consequently, their test can be interpreted as taking the $p$ marginal distribution functions as projection directions $f(\cdot)$.  
However, the connection between this particular finite choice of projection directions and projections over the entire class $\mcF_n$ has not been illustrated. It is important because Lemma~\ref{lemma_01} requires the desired property to hold for all projection directions in $\mcF_n$. Moreover, they employed a fifth-order U-statistic to establish the limiting distribution when the dimension $p$ diverges. 
In contrast, the SPARK fully characterizes the projections over the entire set under mild conditions, while requiring only a second-order U-statistic. 

\subsection{Asymptotic Properties}
We present the asymptotic properties of $T_k$ under the null and alternatives in this subsection. Denote $U(x,x'):= k(x,x')-E_X(k(X,x'))-E_{X'}(k(x,X'))+E_{XX'}(k(X,X'))$, and $\Delta_{n_1}(\cdot)=m(\cdot)-m_{\mcD_{n_1}}(\cdot)$. The following assumption entails that the mean squared error of the training model is stochastically bounded conditionally on the training data. Similar assumptions can be found in semiparametric literature \citep{Chernozhukov2018DoubleDebiasedML}. 

\begin{assumption}\label{ass_02}
    $E\{\Delta_{n_1}(X)^2|\mcD_{n_1}\}$ and $E\{\Delta_{n_1}(X)^2k(X,X)|\mcD_{n_1}\}$ are well-defined and bounded in probability as $n_1\to\infty$. 
\end{assumption}

By classical theory for the $U$-statistic, the following theorem establishes the asymptotic distribution of $n_2T_k$ under the null. 

\begin{theorem}\label{th_02}
    Suppose Assumptions~\ref{ass_01}--\ref{ass_02} hold and $E(k(X,X))<\infty$. Under the null $H_0$, if $n_1,n_2\to\infty$ and $n_2=o(r_{n_1}^{-2})$, we have 
    $$
    n_2T_k \overset{d}{\to}\sum_{r=1}^\infty\lambda_r(Z_r^2-1),
    $$
    where $Z_r$ are independent standard Gaussian random variables, and $\lambda_r$ are eigenvalues of the compact self-adjoint operator on $L_2(P_{XY})$ induced by the kernel $\{y-m(x)\}\{y'-m(x')\}U(x,x')$. 
\end{theorem}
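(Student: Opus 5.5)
Conditionally on $\mcD_{n_1}$, the statistic $T_k$ is a second-order $U$-statistic in the i.i.d.\ pairs $(X_{2,i},Y_{2,i})$. I will write it as an oracle degenerate $U$-statistic plus remainders and show the remainders are $o_p(n_2^{-1})$. Write the residual as $\varepsilon_{n_2,i}=\epsilon_i+\Delta_{n_1}(X_{2,i})$ with $\epsilon_i=Y_{2,i}-m(X_{2,i})$. Replace $\hat U$ by its population version $U$, and set $\hat U=U+R$. Here $R(x,x')=-\{\hat\mu(x)-\mu(x)\}-\{\hat\mu(x')-\mu(x')\}+(\hat c-c)$, where $\mu(x)=E k(x,X)$, $\hat\mu$ is its $\mcD_{n_1}$-empirical version, and $c,\hat c$ are the corresponding double averages. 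Expanding gives
$$
T_k=\underbrace{\tfrac{1}{n_2(n_2-1)}\textstyle\sum_{i\neq j}\epsilon_i\epsilon_jU(X_{2,i},X_{2,j})}_{T_0}+\text{(cross terms in }\Delta_{n_1}\text{)}+\text{(terms involving }R).
$$

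\textbf{Step 1 (oracle term).} The kernel $h(z,z')=\{y-m(x)\}\{y'-m(x')\}U(x,x')$ is degenerate: $E\{h(z,Z')\}=0$, already because $E(\epsilon'|X')=0$, and $U$ is doubly centered as well. By Assumption~\ref{ass_01} and $E k(X,X)<\infty$, we have $E h^2\le C^2E\,U(X,X')^2<\infty$, since $|k(x,x')|^2\le k(x,x)k(x',x')$. The classical limit theorem for degenerate $U$-statistics, via the spectral decomposition $h=\sum_r\lambda_r\phi_r\otimes\phi_r$ in $L_2(P_{XY})$, then gives $n_2T_0\overset{d}{\to}\sum_r\lambda_r(Z_r^2-1)$. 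This term does not depend on $\mcD_{n_1}$, so it is unaffected by the conditioning.

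\textbf{Step 2 (bias terms from $\Delta_{n_1}$).} There are two such terms. The first is the linear cross term $\frac{2}{n_2(n_2-1)}\sum_{i\ne j}\epsilon_i\Delta_{n_1}(X_{2,j})U(X_{2,i},X_{2,j})$. It has conditional mean zero and is a degenerate-in-$i$ statistic. Its conditional second moment is of order $n_2^{-1}E\{\epsilon^2\Delta_{n_1}(X')^2U^2\}$ plus lower-order pieces. Hence $n_2\cdot$(term) has variance $O\big(n_2E\{\Delta^2\}\cdot\text{const}\big)$. Under $H_0$, $\Delta_{n_1}=O_p(r_{n_1})$ uniformly on a full-measure set, and with $n_2r_{n_1}^2\to0$ this term is $o_p(1)$. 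Assumption~\ref{ass_02}, specifically the bound on $E\{\Delta^2k(X,X)|\mcD_{n_1}\}$, controls the $U^2$ factor.

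The second is the quadratic term $\frac{1}{n_2(n_2-1)}\sum_{i\ne j}\Delta(X_{2,i})\Delta(X_{2,j})U(X_{2,i},X_{2,j})$. Its conditional mean is $\|E[\Delta(X)\{k(X,\cdot)-\mu\}]\|_{\mcH_k}^2\le E\{\Delta^2\}E\,k(X,X)=O_p(r_{n_1}^2)$, and its fluctuation is of smaller order. So $n_2$ times this term is $O_p(n_2r_{n_1}^2)=o_p(1)$. This step is exactly where $n_2=o(r_{n_1}^{-2})$ enters.

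\textbf{Step 3 (plug-in of $\hat U$).} By Hilbert-space Chebyshev applied to the mean embedding, $\|\hat\mu-\mu\|_{\mcH_k}=O_p(n_1^{-1/2})$ and $\hat c-c=O_p(n_1^{-1/2})$. The $R$-terms have a rank-one product structure, for example $\frac{1}{n_2(n_2-1)}\sum_{i\ne j}\varepsilon_{n_2,i}\varepsilon_{n_2,j}\{\hat\mu(X_{2,i})-\mu(X_{2,i})\}$, which factor as
$$
\Big(\tfrac{1}{n_2}\textstyle\sum_i\varepsilon_{n_2,i}g(X_{2,i})\Big)\Big(\tfrac1{n_2}\textstyle\sum_j\varepsilon_{n_2,j}\Big)
$$
up to the diagonal terms. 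Each factor is $O_p(n_2^{-1/2}+r_{n_1})$. The factor involving $g=\hat\mu-\mu$ carries an extra $O_p(n_1^{-1/2})$ via the reproducing property.

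\textbf{Main obstacle.} The delicate point is Step 3 together with the constant term $(\hat c-c)\big(n_2^{-1}\sum\varepsilon_{n_2,i}\big)^2$. Multiplied by $n_2$, this term is $O_p(n_1^{-1/2})(1+n_2r_{n_1}^2)=o_p(1)$, which works. The mean-type terms need the centering structure in $\hat U$ to cancel the leading non-degenerate part. I would first verify this cancellation carefully, because the individual summands are only $O_p(n_1^{-1/2})$ and not $o(n_2^{-1})$. Finally I would combine the steps with Slutsky's lemma, applied conditionally on $\mcD_{n_1}$ and then unconditionally by dominated convergence of conditional characteristic functions.
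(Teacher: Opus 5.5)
Your proof is correct, but it orders the decomposition differently from the paper. The paper first replaces the residuals by the true errors while \emph{keeping} $\hat U$: $T_k=\hat T_k+B_1+B_2+B_3$. Here $\hat T_k$ uses $\epsilon_{2,i}$ with $\hat U$, and the $B$'s are the $\Delta_{n_1}$-cross and quadratic terms with $\hat U$. These are bounded by the sup-norm bound under $H_0$ together with $E\{\hat U(X_{2,i},X_{2,j})^2\}=O(1)$, giving $O_p(r_{n_1}n_2^{-1/2}+r_{n_1}^2)$. Only then does the paper swap $\hat U$ for $U$, and only inside the pure-noise statistic. There any unpaired $\epsilon_{2,i}$ kills the cross products, so the single bound $E(\hat U-U)^2=O(n_1^{-1})$ gives $\hat T_k-\tilde T_k=O_p(n_1^{-1/2}n_2^{-1})$, and the explicit form of $\hat U$ is never used.

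You do the swap first, globally. This buys you two things. The bias terms involve the population $U$, so the quadratic term's conditional mean has the clean form $\|E[\Delta_{n_1}(X)\{k(X,\cdot)-\mu\}\mid\mcD_{n_1}]\|_{\mcH_k}^2\le E\{\Delta_{n_1}^2\mid\mcD_{n_1}\}E\,k(X,X)$. Also, all your bounds are naturally conditional on $\mcD_{n_1}$, which the paper's unconditional step $E\{\Delta_{n_1}^2\hat U^2\}\le r_{n_1}^2E\hat U^2$ tacitly needs. The cost is that the plug-in remainder now multiplies the non-centered residuals $\epsilon_{2,i}+\Delta_{n_1}(X_{2,i})$. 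The paper's orthogonality trick is therefore unavailable, and you must use the additive form $R(x,x')=-g(x)-g(x')+(\hat c-c)$ to factor these terms into products of sample means, as you do.

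On your ``main obstacle'': there is nothing left to verify, because your Step~3 bounds already close the argument.
\begin{itemize}
\item The $g$-term is $O_p\{n_1^{-1/2}(n_2^{-1/2}+r_{n_1})^2\}$ plus a diagonal piece $O_p(n_1^{-1/2}n_2^{-1})$.
\item The constant term is $O_p\{n_1^{-1/2}(n_2^{-1}+r_{n_1}^2)\}$.
\end{itemize}
After multiplying by $n_2$, every $R$-term is $O_p\{n_1^{-1/2}(1+n_2r_{n_1}^2)\}=o_p(1)$ with no cancellation needed. Summands of size $O_p(n_1^{-1/2})$ that must cancel appear only if you expand $\hat U$ into its $k$, $\hat\mu$, $\hat c$ pieces separately, not with your $U+R$ split.

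Likewise, the final conditional-characteristic-function step is superfluous. The remainders are $o_p(1)$ unconditionally by conditional Chebyshev, and the leading term does not involve $\mcD_{n_1}$, so ordinary Slutsky suffices.

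A minor point on Step~2: the factor $r_{n_1}^2$ comes from the sup-norm bound under $H_0$ combined with $E\,U(X,X')^2<\infty$. Assumption~\ref{ass_02} alone would only give $O_p(1)$ there.
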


    Next, we introduce an assumption to investigate the asymptotic property under the alternatives. Suppose $E(k(X,X))<\infty$ and $k$ is characteristic \citep{Sriperumbudur2010HilbertSE}, it can be verified that the operator $\mcT_U$ induced by the kernel $U(x,x')$ defined on $L_2(P_X)$ is a compact self-adjoint Hilbert-Schmidt operator. Then, the spectral decomposition is given by
    $$
    U(x,x') = \sum_{j=1}^{\infty} \lambda_{Uj}\phi_{Uj}(x)\phi_{Uj}(x'),
    $$
    where $\{\phi_{Uj}\}_{j=1}^{\infty}$ is an orthonormal basis of the orthogonal complement of the null space of $\mcT_U$, and eigenvalues $\lambda_{U1}\geq\lambda_{U2}\geq\dots>0$. 
    Denote $b_{n_1}=E(\Delta_{n_1}(X)|\mcD_{n_1})$ and $\Delta_{n_1}^c(\cdot) = \Delta_{n_1}(\cdot)-b_{n_1}$. 
    The centered error $\Delta_{n_1}^c$ has the orthogonal decomposition
$$
\Delta_{n_1}^c(\cdot)=\sum_{j\geq1}a_{j,n_1}\phi_{Uj}(\cdot),\quad
  a_{j,n_1}=E\{\Delta_{n_1}^c(X)\phi_{Uj}(X)|\mcD_{n_1}\}.
$$

\begin{assumption}\label{ass_03}
    There exist sequences $J_{n_1}\in\mathbb N$ and $\rho_{n_1}\in[0,1)$, possibly depending on $\mcD_{n_1}$, such that
\begin{equation}\label{eq_ass_031}
    \sum_{j=1}^{J_{n_1}}a_{j,n_1}^2\geq (1-\rho_{n_1})E\{\Delta_{n_1}(X)^2|\mcD_{n_1}\}
\end{equation}
with probability tending to $1$, and 
\begin{equation}\label{eq_ass_032}
    n_2\lambda_{UJ_{n_1}}(1-\rho_{n_1})E\{\Delta_{n_1}(X)^2|\mcD_{n_1}\}\overset{p}{\to}\infty,
\end{equation}
where the probability is over the training sample.
\end{assumption}

Assumption~\ref{ass_03} requires $m(\cdot)-m_{\mcD_{n_1}}(\cdot)$ to be well approximated by the space spanned by the first $J_{n_1}$ eigenfunctions of the integral operator induced by $U(x,x')$, in relative $L_2(P_X)$ distance. 
More precisely, since $E\{\Delta_{n_1}(X)^2|\mcD_{n_1}\}=b_{n_1}^2+\sum_{j\geq1}a_{j,n_1}^2$, the inequality \eqref{eq_ass_031} controls both $b_{n_1}$ and the portion of $\Delta_{n_1}$ lying in the spaces associated with small eigenvalues. 
The condition \eqref{eq_ass_032} requires the kernel-weighted signal contained in these leading eigenfunctions to be sufficiently strong. This condition characterizes the alternatives since $E\{\Delta_{n_1}(X)^2|\mcD_{n_1}\}=\Omega_p({r_{n_1}^{(a)}}^{2})$ under $H_1$. 
For example, if $\Delta_{n_1}(x)=r_{n_1}^{(a)}g(x)$ for some fixed $g\in L_2(P_X)$ with $E\{g(X)\}=0$, then $J_{n_1}$ and $\rho_{n_1}$ can be chosen fixed, and Assumption~\ref{ass_03} reduces to $n_2{r_{n_1}^{(a)}}^{2}\to\infty$ under $H_1$. 
Moreover, since $r_{n_1}/r_{n_1}^{(a)}\to0$ as $n_1\to\infty$, one can distinguish the null hypothesis and the alternatives with a splitting ratio such that $n_2=o(r_{n_1}^{-2})$ and $n_2=\omega\big({r_{n_1}^{(a)}}^{-2}\big)$. The following theorem presents the property of $T_k$ under the alternatives.


\begin{theorem}\label{th_power}
    Suppose that Assumptions~\ref{ass_01}--\ref{ass_03} hold and $E(k(X,X))<\infty$ with a characteristic kernel $k$. If $n_1,n_2\to\infty$, $n_1=O(n_2)$ and $b_{n_1}=O_p(1)$, then $n_2T_k$ diverges to infinity in probability. 
\end{theorem}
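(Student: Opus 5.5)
Throughout, I condition on the training sample $\mcD_{n_1}$, so that $m_{\mcD_{n_1}}$ and hence $\Delta_{n_1}$ are fixed functions. The plan is to split $T_k$ into a leading \emph{signal} U-statistic, \emph{noise} terms, and remainder terms, and then show that the signal grows faster than everything else.

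\textbf{Step 1: decomposition.} Write $\varepsilon_{n_2,i}=\epsilon_{2,i}+\Delta_{n_1}(X_{2,i})$, and write $\hat U=U+R$. Here $R$ is the error from replacing the population centering means by their $\mcD_{n_1}$-sample averages. Expanding gives
$$T_k=T^{\epsilon\epsilon}+2T^{\epsilon\Delta}+T^{\Delta\Delta}+(\text{terms involving }R),$$
where $T^{ab}$ denotes the U-statistic with kernel $a_i b_j U(X_{2,i},X_{2,j})$.
\begin{itemize}
\item By Theorem~\ref{th_02}'s argument (degenerate U-statistic, Assumption~\ref{ass_01}), $n_2T^{\epsilon\epsilon}=O_p(1)$.
\item $T^{\epsilon\Delta}$ has mean zero. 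Its variance is bounded by $C\,n_2^{-1}E\{\Delta_{n_1}(X)^2k(X,X)\mid\mcD_{n_1}\}$ plus a smaller term, so by Assumption~\ref{ass_02}, $n_2T^{\epsilon\Delta}=O_p(n_2^{1/2})$. I will sharpen this to $O_p\bigl((n_2\mu_n)^{1/2}\bigr)$, where $\mu_n:=E\{\Delta^c_{n_1}(X)\Delta^c_{n_1}(X')U(X,X')\mid\mcD_{n_1}\}$. The point is that the Hoeffding projection of $T^{\epsilon\Delta}$ is $n_2^{-1}\sum_i\epsilon_i\,h(X_i)$ with $h(x)=E\{\Delta(X')U(x,X')\}$, and $E h^2=\sum_j\lambda_{Uj}^2a_{j,n_1}^2\le\lambda_{U1}\mu_n$.
\end{itemize}

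\textbf{Step 2: the signal term.} Because $U$ is doubly centered, $U$ annihilates constants. Hence $E(T^{\Delta\Delta}\mid\mcD_{n_1})=\mu_n$, and the spectral decomposition gives
$$\mu_n=\sum_j\lambda_{Uj}a_{j,n_1}^2\ge \lambda_{UJ_{n_1}}\sum_{j\le J_{n_1}}a_{j,n_1}^2\ge\lambda_{UJ_{n_1}}(1-\rho_{n_1})E\{\Delta_{n_1}(X)^2\mid\mcD_{n_1}\}.$$
The last inequality uses \eqref{eq_ass_031}. Then \eqref{eq_ass_032} yields $n_2\mu_n\to\infty$ in probability.

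For the fluctuation of $T^{\Delta\Delta}$, its Hoeffding decomposition has variance of order $n_2^{-1}E h^2+n_2^{-2}E\{\Delta^2\Delta'^2U^2\}$. The first part is at most $n_2^{-1}\lambda_{U1}\mu_n$. The second part is controlled via $U^2\lesssim k(X,X)k(X',X')+\ldots$ together with Assumption~\ref{ass_02}, giving $O_p(n_2^{-2})$. Therefore
$$n_2T^{\Delta\Delta}=n_2\mu_n\{1+O_p((n_2\mu_n)^{-1/2})\}+O_p(1).$$
Combined with Step 1, this shows that $n_2T^{\epsilon\epsilon}+2n_2T^{\epsilon\Delta}$ is $o_p(n_2\mu_n)$.

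\textbf{Step 3: the remainder from $R$ (the main obstacle).} $R(x,x')$ is a sum of terms such as $E_{X}k(X,x')-n_1^{-1}\sum_l k(X_{1,l},x')$, which are of order $n_1^{-1/2}$ in $\mcH_k$-norm, since the empirical kernel mean embedding converges at that rate when $E\,k(X,X)<\infty$. Writing $\hat\mu_1,\mu$ for the empirical and population embeddings, the $R$-contribution factors as inner products of the form
$$\Bigl\langle \tfrac1{n_2}\textstyle\sum_i\varepsilon_i k(X_{2,i},\cdot),\ \hat\mu_1-\mu\Bigr\rangle_{\mcH_k}\times\tfrac1{n_2}\textstyle\sum_j\varepsilon_j,$$
up to diagonal corrections. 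Two subtleties need care here. First, $X_{1,l}$ also enter $m_{\mcD_{n_1}}$, so $\hat\mu_1-\mu$ is not independent of $\Delta_{n_1}$. I plan to handle this by Cauchy--Schwarz with unconditional norms rather than conditional independence. Second, the factor $n_2^{-1}\sum_j\varepsilon_j$ contains $b_{n_1}=O_p(1)$, which is not small. Each factor is $O_p(n_2^{-1/2}+\|\Delta\|)$ times $O_p(n_1^{-1/2})$. With $n_1=O(n_2)$ (I would also want $n_2/n_1$ bounded, or else use $n_1^{-1/2}$ directly), the $R$-terms are $O_p\bigl(n_1^{-1/2}(n_2^{-1/2}+\mu_n^{1/2}+|b_{n_1}|)\bigr)$ times the signal scale. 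I expect to need to show this is $o_p(\mu_n)+O_p(n_2^{-1})$, exploiting $b_{n_1}=O_p(1)$ and the fact that the centered part of $\hat\mu_1-\mu$ kills the constant $b_{n_1}$ up to a second-order $n_1^{-1}$ term. This bookkeeping is where I expect difficulty.

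\textbf{Conclusion.} Dividing by $n_2\mu_n\to\infty$, we get $n_2T_k/(n_2\mu_n)\to1$ in probability, so $n_2T_k\to\infty$ in probability.
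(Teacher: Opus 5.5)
Your route is organized differently from the paper's. The paper never splits $\varepsilon=\epsilon+\Delta_{n_1}$ or $\hat U=U+R$. Conditional on $\mcD_{n_1}$, it treats $T_k$ as a single U-statistic with kernel $\{y-m_{\mcD_{n_1}}(x)\}\{y'-m_{\mcD_{n_1}}(x')\}\hat U(x,x')$. It then compares the conditional mean with $S_c(\Delta)$ (your $\mu_n$) and lower-bounds $S_c(\Delta)$ spectrally, exactly as in your Step~2. It bounds $\Var(T_k\mid\mcD_{n_1})$ through the conditional Hoeffding decomposition and finishes with conditional Chebyshev.

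Your split lets you reuse Theorem~\ref{th_02} for the pure-noise part and isolates the cost of estimating the centering, at the price of more bookkeeping. Your sharpened bound on $T^{\epsilon\Delta}$ is correct. However, one step is wrong as written and Step~3 is unfinished. Both are closed by the device the paper relies on: with $\mu=E\,k(X,\cdot)$,
\[
\mu_n=\bigl\|E[\Delta^c_{n_1}(X)\{k(X,\cdot)-\mu\}\mid\mcD_{n_1}]\bigr\|_{\mcH_k}^2 ,
\]
so Cauchy--Schwarz in $\mcH_k$ produces factors $\sqrt{\mu_n}$.

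\textbf{Step~2.} The H\'ajek projection of $T^{\Delta\Delta}$ is $2n_2^{-1}\sum_i\{\Delta_{n_1}(X_{2,i})h(X_{2,i})-\mu_n\}$, not a sum of $h(X_{2,i})$. So you need $E\{\Delta_{n_1}(X)^2h(X)^2\mid\mcD_{n_1}\}$. The bound $Eh^2\le\lambda_{U1}\mu_n$ does not control this, since $\Delta_{n_1}$ is not assumed bounded. Instead write $h(x)=\langle k(x,\cdot)-\mu,\,E[\Delta^c_{n_1}(X')\{k(X',\cdot)-\mu\}\mid\mcD_{n_1}]\rangle_{\mcH_k}$. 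This gives $h(x)^2\le\mu_n\|k(x,\cdot)-\mu\|_{\mcH_k}^2\lesssim\mu_n\{k(x,x)+1\}$, and Assumption~\ref{ass_02} yields $O_p(\mu_n)$. This is how the paper bounds $E\{g_n^2(Z)\mid\mcD_{n_1}\}$.

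\textbf{Step~3.} There is no dependence issue: given $\mcD_{n_1}$, $\hat\mu_1-\mu$ is a fixed element with $\|\hat\mu_1-\mu\|_{\mcH_k}=O_p(n_1^{-1/2})$. Write $R(x,x')=-d(x)-d(x')+e$ with $d(x)=\hat\mu_1(x)-\mu(x)$ and $e=\hat\eta-\eta$. Then the conditional mean of the $R$-part is
\[
-2b_{n_1}\bigl\langle E[\Delta^c_{n_1}(X)\{k(X,\cdot)-\mu\}\mid\mcD_{n_1}],\,\hat\mu_1-\mu\bigr\rangle_{\mcH_k}+\frac{b_{n_1}^2}{n_1(n_1-1)}\sum_{l\neq m}U(X_{1,l},X_{1,m}).
\]
This is $O_p(|b_{n_1}|\sqrt{\mu_n/n_1}+b_{n_1}^2n_1^{-1})$, matching the paper's $V_{1,n_1}$ and \eqref{app_eq_34}. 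For the spread, use $d(x)^2\le k(x,x)\|\hat\mu_1-\mu\|^2_{\mcH_k}$ together with Assumptions~\ref{ass_01}--\ref{ass_02}. The conditional standard deviation of the $R$-part is then $O_p\{(|b_{n_1}|+\mu_n^{1/2})(n_1n_2)^{-1/2}+n_2^{-1}\}$. This is the bookkeeping you left open, and it has exactly the cancellation you anticipated.

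\textbf{The sample-size ratio.} Your wish for $n_2/n_1$ bounded is not optional. The remainders above are $o_p(\mu_n)$ precisely when $n_1\mu_n\to\infty$ in probability. That follows from \eqref{eq_ass_032} only if $n_2=O(n_1)$. Moreover, the $b_{n_1}^2$ coefficient has exact order $n_1^{-1}$ with a sign-changing limit, so the condition is genuinely needed.

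For instance, take $\Delta_{n_1}=b+r\phi_{U1}$ with $b\neq0$ fixed, $n_2=n_1^2$ and $r^2=\log n_2/n_2$. This meets every hypothesis of the theorem, yet $n_2^{1/2}T_k\overset{d}{\to}b^2\sum_j\lambda_{Uj}(Z_j^2-1)$, so $n_2T_k<0$ with non-vanishing probability.

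The paper's proof uses $n_1S_c(\Delta)\to\infty$ implicitly when it asserts $V_{1,n_1}/S_c(\Delta)=o_p(1)$. Its stated hypothesis $n_1=O(n_2)$ points the wrong way and should read $n_2=O(n_1)$.
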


\subsection{Wild Bootstrap Calibration}
As illustrated in Theorem~\ref{th_02}, the limiting null distribution of $n_2T_k$ is not pivotal and creates extra difficulty on estimating unknown parameters. To construct a valid test for the proposed statistic, we adopt a wild bootstrap procedure \citep{Chernozhukov2013Gaussian} to approximate the null distribution of $n_2T_k$.
Specifically, we use $B$ to denote the bootstrap sample size, and for each $b=1,2,\dots,B$, we generate $n_2$ i.i.d.\ random multipliers $\{e_{bi}\}_{i=1}^{n_2}$ from Rademacher distribution
independent of data. The bootstrap version of $T_k$ is defined as 
\begin{equation*}
    T_k^b:=\frac{1}{n_2(n_2-1)}\sum_{i\neq j}^{n_2}\{Y_{2,i}-m_{\mcD_{n_1}}(X_{2,i})\}\{Y_{2,j}-m_{\mcD_{n_1}}(X_{2,j})\}\hat U(X_{2,i},X_{2,j})e_{bi}e_{bj}.
\end{equation*}
We reject $H_0$ at level $\alpha$ if $B^{-1}\sum_{b=1}^B\mathbbm{1}(T_k^b>T_k)<\alpha$. Let $P^*(\cdot):=P(\cdot|\mcD_n)$ denote the bootstrap probability conditional on $\mcD_{n}$, and $\overset{D^*}{\to}$ denote convergence in distribution in probability under $P^*$.
The following theorem illustrates bootstrap consistency under $H_0$. 

\begin{theorem}\label{th_03}
    Suppose Assumptions~\ref{ass_01}--\ref{ass_02} hold and $E(k(X,X))<\infty$. Under the null $H_0$, as $n_1,n_2\to\infty$, we have 
    $$
    n_2T_k^b \overset{D^*}{\to}\sum_{r=1}^\infty\lambda_r(Z_r^2-1),
    $$
    where $\lambda_r,Z_r$ are defined in Theorem~\ref{th_02}.  
\end{theorem}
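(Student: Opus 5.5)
The plan is to compare the bootstrap statistic with an oracle multiplier $U$-statistic built from the true errors and the population kernel $U$, and then invoke known limit theory for degenerate multiplier $U$-statistics. Write $\hat\varepsilon_i=Y_{2,i}-m_{\mcD_{n_1}}(X_{2,i})=\epsilon_i+\Delta_{n_1}(X_{2,i})$ and $\hat U=U+R$. Here $R(x,x')$ collects the estimation errors of the three centering terms, namely $E_{X'}k(x,X')-n_1^{-1}\sum_l k(x,X_{1,l})$, its symmetric counterpart, and the $U$-statistic estimate of $E_{XX'}k$. Then
$$
n_2T_k^b=\frac{1}{n_2-1}\sum_{i\neq j}e_{bi}e_{bj}\hat\varepsilon_i\hat\varepsilon_j\hat U(X_{2,i},X_{2,j}),
$$
and I decompose it as $n_2\tilde T^b+\mathcal R^b$. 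The oracle term is $\tilde T^b=\{n_2(n_2-1)\}^{-1}\sum_{i\ne j}e_{bi}e_{bj}\epsilon_i\epsilon_jU(X_{2,i},X_{2,j})$, and $\mathcal R^b$ contains every term involving at least one $\Delta_{n_1}$ factor or an $R$ factor. This is the same decomposition used for Theorem~\ref{th_02}, except that here the Rademacher multipliers make every cross term conditionally mean zero under $P^*$.

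First step: bound the remainder. For any weight array $w_{ij}$, the Rademacher structure gives $E^*[(\sum_{i\ne j}e_{bi}e_{bj}w_{ij})^2]=2\sum_{i\ne j}w_{ij}^2$. It therefore suffices to show that $(n_2-1)^{-2}\sum_{i\ne j}w_{ij}^2=o_p(1)$ for each remainder array, after which conditional Chebyshev gives $\mathcal R^b=o_{p}(1)$ in $P^*$-probability. Consider the array with weights $\Delta_{n_1}(X_{2,i})\epsilon_jU(X_{2,i},X_{2,j})$. Its unconditional expectation, given $\mcD_{n_1}$, is of order
$$
n_2^2\,E\{\Delta_{n_1}(X)^2\epsilon'^2U(X,X')^2\mid\mcD_{n_1}\}\lesssim n_2^2\,E\{\Delta^2_{n_1}(X)k(X,X)\mid\mcD_{n_1}\}\cdot C\cdot(\text{moment of }k).
$$
This uses Assumption~\ref{ass_01}, the bound $U(x,x')^2\lesssim k(x,x)k(x',x')+\dots$ from Cauchy--Schwarz in $\mcH_k$, and Assumption~\ref{ass_02}. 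Under $H_0$ the prefactor $E\{\Delta_{n_1}^2k\}$ is $O_p(r_{n_1}^2)\to0$ (I would use $\sup|\Delta_{n_1}|=O_p(r_{n_1})$ on $\mbM^0$ together with $E k(X,X)<\infty$). Dividing by $n_2^2$ therefore gives $o_p(1)$.

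The key point is that, unlike Theorem~\ref{th_02}, no condition $n_2=o(r_{n_1}^{-2})$ is needed, because the multipliers kill the bias term $n_2E\{\Delta\Delta'U\}$. This explains why the theorem is stated without that rate. The $\Delta\Delta'$ array is handled in the same way and is $O_p(r_{n_1}^4)$. The $R$ terms are controlled because $\|n_1^{-1}\sum_l k(\cdot,X_{1,l})-E k(\cdot,X)\|_{\mcH_k}=O_p(n_1^{-1/2})$, which gives $|R(x,x')|\lesssim n_1^{-1/2}\{k(x,x)^{1/2}+k(x',x')^{1/2}+1\}O_p(1)$.

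Second step: derive the limit of the oracle term, which is the main obstacle. Conditionally on the data, $n_2\tilde T^b$ is a Rademacher chaos of order two with matrix $A_{ij}=\epsilon_i\epsilon_jU(X_{2,i},X_{2,j})/(n_2-1)$, $i\ne j$. I would use the spectral decomposition $h(z,z')=(y-m(x))(y'-m(x'))U(x,x')=\sum_r\lambda_r\psi_r(z)\psi_r(z')$, which is valid in $L_2$ by Assumption~\ref{ass_01} and $E k(X,X)<\infty$. Truncating at level $K$ gives
$$
n_2\tilde T^b\approx\sum_{r\le K}\lambda_r\Big[\Big(n_2^{-1/2}\sum_ie_{bi}\psi_r(Z_i)\Big)^2-n_2^{-1}\sum_ie_{bi}^2\psi_r(Z_i)^2\Big].
$$
The correction term equals $n_2^{-1}\sum_i\psi_r(Z_i)^2\to1$ almost surely by the LLN. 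Conditionally on the data, the vector $(n_2^{-1/2}\sum_ie_{bi}\psi_r(Z_i))_{r\le K}$ is a sum of independent bounded-multiplier vectors whose conditional covariance $n_2^{-1}\sum_i\psi_r\psi_s\to\delta_{rs}$ almost surely. The conditional Lindeberg condition follows from $E\psi_r^2<\infty$ via $\max_i\psi_r(Z_i)^2/n_2\to0$. The multivariate CLT therefore gives the $K$-truncated limit $\sum_{r\le K}\lambda_r(Z_r^2-1)$ in $P^*$-probability.

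The truncation error is handled by the same Rademacher second-moment identity: its conditional variance is $2(n_2-1)^{-2}\sum_{i\ne j}h_{>K}(Z_i,Z_j)^2$, which converges by the $U$-statistic LLN to $2\sum_{r>K}\lambda_r^2$. This tends to zero as $K\to\infty$ because $\sum\lambda_r^2<\infty$. A standard $K\to\infty$ approximation argument (Billingsley's Theorem 3.2 applied conditionally, e.g.\ through a bounded-Lipschitz metric) then combines the pieces. The delicate part is making the conditional convergence rigorous along almost sure subsequences; I would phrase everything in terms of the bounded-Lipschitz distance between the conditional law and the target and show that this distance is $o_p(1)$.
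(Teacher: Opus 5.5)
Your proposal is correct and follows the paper's proof. Both split $T_k^b$ into the oracle multiplier statistic plus $\Delta_{n_1}$- and $(\hat U-U)$-remainders, and both show each remainder has conditional second moment $o_p(n_2^{-2})$ because the Rademacher multipliers remove the bias term, which is why no condition $n_2=o(r_{n_1}^{-2})$ is needed. One small correction: your identity $E^*[(\sum_{i\ne j}e_{bi}e_{bj}w_{ij})^2]=2\sum_{i\ne j}w_{ij}^2$ should read ``$\le$'' for non-symmetric arrays such as $\Delta_{n_1}(X_{2,i})\epsilon_{2,j}\hat U(X_{2,i},X_{2,j})$, and the inequality is all you need. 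The only difference is that you derive the conditional limit of the oracle term yourself, via spectral truncation, a conditional Lindeberg CLT and the tail bound $2(n_2-1)^{-2}\sum_{i\ne j}h_{>K}(Z_i,Z_j)^2\to 2\sum_{r>K}\lambda_r^2$, whereas the paper simply invokes Dehling and Mikosch (1994, Theorem~3.1).
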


The following theorem along with Theorem~\ref{th_power} implies the consistent power of the test using bootstrap.

\begin{theorem}\label{th_04}
    Suppose Assumptions~\ref{ass_01}--\ref{ass_02} hold and $E(k(X,X))<\infty$. Under the alternatives, as $n_1,n_2\to\infty$, we have $n_2T_k^b=O_{P^*}(1)$. That is, for every $\delta>0$, 
    $$\lim_{M\to\infty}\limsup_{n_1,n_2\to\infty}P(P^*(|n_2T_k^b|>M)>\delta)=0.$$
    Suppose the conditions in Theorem~\ref{th_power} also hold, then $P(n_2T_k\geq T_{k,\alpha}^*)\to1$ with $T_{k,\alpha}^*$ denoting the upper $\alpha$-quantile of $n_2T_k^b$ conditioning on the data. 
\end{theorem}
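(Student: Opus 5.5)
We will prove the two claims in turn: first that $n_2T_k^b$ is bounded in conditional probability under the alternatives, then that this bound combined with Theorem~\ref{th_power} gives consistency.

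\textbf{Bounding $n_2T_k^b$.} Condition on $\mcD_n$ and write $\hat\varepsilon_i:=Y_{2,i}-m_{\mcD_{n_1}}(X_{2,i})$ and $\hat U_{ij}:=\hat U(X_{2,i},X_{2,j})$. Then $n_2T_k^b=(n_2-1)^{-1}\sum_{i\neq j}\hat\varepsilon_i\hat\varepsilon_j\hat U_{ij}e_{bi}e_{bj}$ is a Rademacher chaos of order two with fixed coefficients. It therefore has conditional mean zero, and since $E^*(e_ie_je_{i'}e_{j'})$ is nonzero only when $\{i,j\}=\{i',j'\}$,
$$
E^*\{(n_2T_k^b)^2\}=\frac{2}{(n_2-1)^2}\sum_{i\neq j}\hat\varepsilon_i^2\hat\varepsilon_j^2\hat U_{ij}^2 .
$$
By Chebyshev's inequality under $P^*$, we have $P^*(|n_2T_k^b|>M)\le M^{-2}E^*\{(n_2T_k^b)^2\}$. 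It thus suffices to show that the right-hand side above is $O_p(1)$ unconditionally. The claimed statement then follows from Markov's inequality applied to the outer probability.

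To bound the sum, we decompose $\hat\varepsilon_i=\epsilon_{2,i}+\Delta_{n_1}(X_{2,i})$ and use $\hat U_{ij}^2\le 4\{k(X_{2,i},X_{2,j})^2+\bar k_i^2+\bar k_j^2+\bar k^2\}$, where the bars denote the empirical averages appearing in $\hat U$. The Cauchy--Schwarz inequality for the positive-definite kernel gives $k(x,x')^2\le k(x,x)k(x',x')$, and the averages are bounded similarly via Jensen. Hence each term is dominated by expressions of the form $\hat\varepsilon_i^2k(X_{2,i},X_{2,i})\,\hat\varepsilon_j^2k(X_{2,j},X_{2,j})$, or by $\hat\varepsilon_i^2k(X_{2,i},X_{2,i})\,\hat\varepsilon_j^2$ times an average of $k(X_{1,l},X_{1,l})$. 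For $i\neq j$ the observations in $\mcD_{n_2}$ are independent given $\mcD_{n_1}$, so the conditional expectation of the normalized double sum factorizes into products of $E\{(\epsilon+\Delta_{n_1}(X))^2k(X,X)\mid\mcD_{n_1}\}$ and $E\{(\epsilon+\Delta_{n_1}(X))^2\mid\mcD_{n_1}\}$. Using Assumption~\ref{ass_01}, these are at most $2C\,E\,k(X,X)+2E\{\Delta_{n_1}^2k(X,X)\mid\mcD_{n_1}\}$ and $2C+2E\{\Delta_{n_1}^2\mid\mcD_{n_1}\}$. Both are $O_p(1)$ by Assumption~\ref{ass_02} and $E\,k(X,X)<\infty$, while $n_1^{-1}\sum_l k(X_{1,l},X_{1,l})=O_p(1)$ by the law of large numbers. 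A conditional Markov inequality, which is valid because the conditional bounds are $O_p(1)$, then gives the $O_p(1)$ bound. Note that no null hypothesis is used anywhere in this step; this is why the bound holds under the alternatives.

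\textbf{Consistency.} Using the Chebyshev bound again, for any $\alpha$ the conditional upper quantile satisfies $T_{k,\alpha}^*\le \{E^*(n_2T_k^b)^2/\alpha\}^{1/2}=O_p(1)$. Under the conditions of Theorem~\ref{th_power}, $n_2T_k\to\infty$ in probability. Hence $P(n_2T_k\ge T^*_{k,\alpha})\ge P(n_2T_k>M)-P(T^*_{k,\alpha}>M)$. Letting $n\to\infty$ and then $M\to\infty$ drives the right-hand side to $1$.

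The main obstacle is the bookkeeping in the second-moment bound. The cross terms in $\hat U^2$ involve training-sample averages, and terms with $i=j$-type overlap, such as the average $\bar k_i$ built from the same $X_{2,i}$ as $\hat\varepsilon_i$, need care. We plan to handle these by conditioning on $\mcD_{n_1}$ and exploiting independence across indices, using only the square-integrability assumptions on $\Delta_{n_1}$ weighted by $k(X,X)$ from Assumption~\ref{ass_02}.
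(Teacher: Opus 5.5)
Your proof is correct. For the first claim it takes a more direct route than the paper; for the consistency claim it matches the paper.

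\textbf{First claim (boundedness of $n_2T_k^b$).} The paper reuses its null-case machinery:
\begin{itemize}
\item It writes $T_k^b=\tilde T_k^b+B_1^b+B_2^b+B_3^b+B_4^b$.
\item It bounds $E\{E^*(B_1^{b\,2})\mid\mcD_{n_1}\}$ and $E\{E^*(B_3^{b\,2})\mid\mcD_{n_1}\}$ by $O_p(n_2^{-2})$, using Assumption~\ref{ass_02} and $k(x,x')^2\le k(x,x)k(x',x')$.
\item It recycles the bound on $B_4^b$ from the proof of Theorem~\ref{th_03}.
\item It handles the oracle term $\tilde T_k^b$ through its bootstrap weak limit.
\end{itemize}
You instead compute $E^*\{(n_2T_k^b)^2\}$ exactly, from the zero-diagonal order-two Rademacher chaos structure. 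You then bound the whole quantity at once, with the same kernel Cauchy--Schwarz inequality and the conditional independence of the $Z_{2,i}$ given $\mcD_{n_1}$.

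\textbf{What your route buys.} It is more economical, and it avoids two loose steps in the paper:
\begin{itemize}
\item The paper infers $E^*(n_2^2T_k^{b\,2})=O_p(1)$ partly from the weak convergence of $n_2\tilde T_k^b$. Weak convergence alone gives no moment control.
\item The paper then applies Markov's inequality using $E\{E^*(n_2^2T_k^{b\,2})\}<\infty$. That was never established; only $O_p(1)$ bounds on conditional expectations were shown.
\end{itemize}
Your argument needs only $E^*\{(n_2T_k^b)^2\}=O_p(1)$, obtained via the conditional-Markov step given $\mcD_{n_1}$. The conclusion then follows from tightness of this quantity, since $P(P^*(|n_2T_k^b|>M)>\delta)\le P(E^*\{(n_2T_k^b)^2\}>M^2\delta)$. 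Strictly speaking this is tightness, not ``Markov on the outer probability'' as you phrase it.

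The decomposition also buys nothing extra here. Under the alternatives, the paper's bounds make $B_1^b$ and $B_3^b$ of the same order $n_2^{-1}$ as $\tilde T_k^b$, not negligible.

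\textbf{Your flagged obstacle.} This is not actually an obstacle. Given $\mcD_{n_1}$, the average $\bar k_i$ is a function of $X_{2,i}$ alone. Jensen plus Cauchy--Schwarz give $\bar k_i^2\le k(X_{2,i},X_{2,i})\,n_1^{-1}\sum_l k(X_{1,l},X_{1,l})$, so the factorization goes through exactly as you wrote.

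\textbf{Second claim (consistency).} Your explicit bound $T_{k,\alpha}^*\le\{E^*(n_2T_k^b)^2/\alpha\}^{1/2}$ is a compact version of the paper's event inclusion $\{T_{k,\alpha}^*>M\}\subseteq\{P^*(n_2T_k^b>M)>\alpha\}$. The final step is the same as the paper's. You correctly invoke Theorem~\ref{th_power}, whereas the paper's text cites Theorem~\ref{th_02} there by mistake.
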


It is remarked that our main theoretical development here focuses on fixed-dimensional predictors. Our methodology also allows the dimension of the predictors to diverge, subject to several conditions required in high-dimensional regimes. The corresponding asymptotic properties and consistency of the bootstrap are established in Appendix~\ref{app_sec_E} and empirically validated in Simulations~1, 4, and 5 in Section~\ref{sec_05}.

\section{Implementation}\label{sec_04}
\subsection{Splitting Ratios in Assessing General Learning Procedures}\label{sec_41}
In this subsection, we consider different splitting ratios to assess the goodness-of-fit of the learning procedure. 
Typically, if we reject the null with a larger train set, it is more confident to regard the learning procedure as inappropriate. 
By Theorems~\ref{th_02}--\ref{th_04}, it entails that the splitting ratio satisfying $n_2=o(r_{n_1}^{-2})$ and $n_2=\omega\big({r_{n_1}^{(a)}}^{-2}\big)$ to guarantee the testing size and power. Consequently, the size of training subset could be less than the testing subset. 
Inspired by \cite{Zhang2023IsAC}, we consider three splitting ratios, where the training set size is $25\%$, $50\%$, and $75\%$, and four patterns of the assessment result are as follows. 
\begin{itemize}
    \item \textit{Pattern 1.} The test fails to reject $H_0$ under all splitting ratios, which means the learning procedure converges fast to the underlying $m(X)$.
    \item \textit{Pattern 2.} The test rejects $H_0$ only at $25\%$ training sample, which means the learning procedure converges moderately fast to the true $m(X)$. 
    \item \textit{Pattern 3.} The test rejects $H_0$ at both $25\%$ and $50\%$ and fails to reject at $75\%$, which means the learning procedure converges to $m(X)$ at a relatively slow rate. 
    \item \textit{Pattern 4.} The test rejects $H_0$ under all splitting ratios, which means the learning procedure fails to converge to the true $m(X)$. 
\end{itemize}
Practically, the practitioner could deploy more kinds of splitting ratios to evaluate the convergence of the learning procedure comprehensively.

\subsection{Stabilizing Algorithm by Multiple Splitting}\label{sec_42}

A single data splitting may cause power loss and fluctuations for the testing results. To mitigate the shortcoming, we consider a multiple-splitting procedure based on cross-validation. Specifically, the dataset is divided into $K$ folds, with $K-1$ folds used for testing and the remaining fold used for training. 
For the splitting ratios of $25\%$, $50\%$, and $75\%$ discussed in Section~\ref{sec_41}, the dataset can be partitioned into $4$, $2$, and $4$ folds, respectively. 
Once $K$ $p$-values are calculated based on the $K$-fold cross-validation, we could aggregate the result by $p$-value combination under arbitrary dependence structures \citep{Liu2020CauchyCT,Ouyang2024EffectivePC}.

\section{Simulations}\label{sec_05}
In this section, we present comprehensive numerical studies to evaluate the finite-sample performance of our proposed method. Specifically, we compare our method with existing methods associated with goodness-of-fit test in Section~\ref{sec_51}, while we use it to assess general learning procedures in Section~\ref{sec_52}.

\subsection{Comparison with Existing Methods}\label{sec_51}
In this subsection, we consider a general goodness-of-fit test for high-dimensional predictors (HCZ) proposed by \cite{He2025AGA} as a competitive method. 
For the continuous responses in Simulations~1--2, we also consider two parametric goodness-of-fit tests for linear models: the adaptive Neyman (AN) test \citep{Fan2001GoodnessoffitTF} and the residual prediction (RP) test proposed by \cite{Shah2018GoodnessoffitTF}. 
We use the default settings of RP from its corresponding R package \texttt{RPtests}. 
For the binary responses in Simulation~3, we also compare our method with BAGofT \citep{Zhang2023IsAC} implemented with the default settings of its R package \texttt{BAGofT} and GRASP with parameters $\tau=0$ and $L=10$ \citep{Javanmard2024GRASPAG}.

\textit{Simulation~1 (High-dimensional model).} Let $X=(x_1,\dots,x_p)^\top\sim\mcN(0,\Sigma)$ denote a $p$-dimensional predictor. The response variable $Y$ is generated as
$$
Y= x_1+2x_2+\exp\{\xi x_3x_4\}+\epsilon, 
$$
where the error term $\epsilon\sim\mcN(0,0.1^2)$, and $\xi\in\mbR$ is a tuning parameter. 
We fix the dimension $p=50$ and vary the sample size $n$ over $\{100,200,500,1000\}$. 
Two settings are considered for the covariance matrix: identity matrix with $\Sigma=I_p$ and correlated matrix with $\Sigma=(0.2^{|i-j|})_{p\times p}$. 

We set $\xi=0$ to examine the Type-I error and $\xi=0.5,1$ to evaluate power,  
and we consider SCAD as the interested learning procedure used for the training subset. 
As presented in Example~1, SCAD achieves a convergence rate of $(s/n_1)^{1/2}$ under the null hypothesis. Since the AN test is designed for low-dimensional settings, we provide it with oracle information by fitting least squares with covariates $(x_1,\dots,x_4)$. 
To ensure a fair comparison, we adopt a single data-splitting scheme for the tests. The splitting ratios are fixed at $95\%$ throughout the numerical studies for the HCZ test, while are set at $75\%,70\%,65\%,$ and $60\%$ for $n=100,200,500$, and $1000$ for our test, respectively. 
We consider the Gaussian kernel with bandwidth chosen according to the mean heuristic \citep{Gretton2012AKT}. 
We remark that our test is not overly sensitivity to the number of bootstrap $B$, so we simply fix $B=500$. 

\begin{figure}[htbp]\centering
\includegraphics[width=1\textwidth]{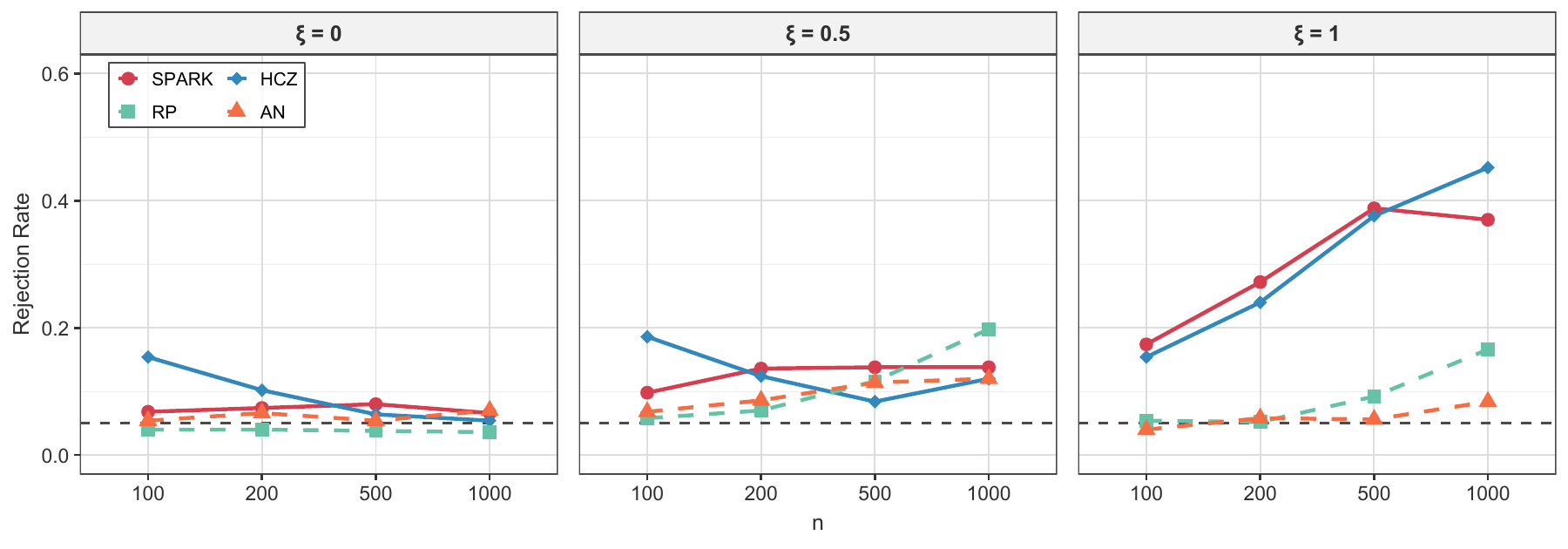}
\caption{Empirical Type-I error and power under Simulation~1 with independent predictors. The horizontal axis denotes the sample size $n$, and the vertical axis shows the rejection rate. The black dashed line indicates the nominal significance level of $0.05$. }\label{figure_1}
\end{figure}

\textit{Simulation~2 (Low-dimensional model).} Let $X=(x_1,x_2)^\top\sim\mcN(0,\Sigma)$ denote a $2$-dimensional predictor. The response variable $Y$ is generated as
$$
Y= x_1+2x_2+\exp\{\xi x_1x_2\}+\epsilon. 
$$
The linear model using ordinary least squares serves as the interested learning procedure. For the proposed test, we fix the splitting ratio at $85\%$ as a conservative choice to stabilize the finite sample performance. 
The other settings keep the same as Simulation~1. 


\begin{figure}[htbp]\centering
\includegraphics[width=1\textwidth]{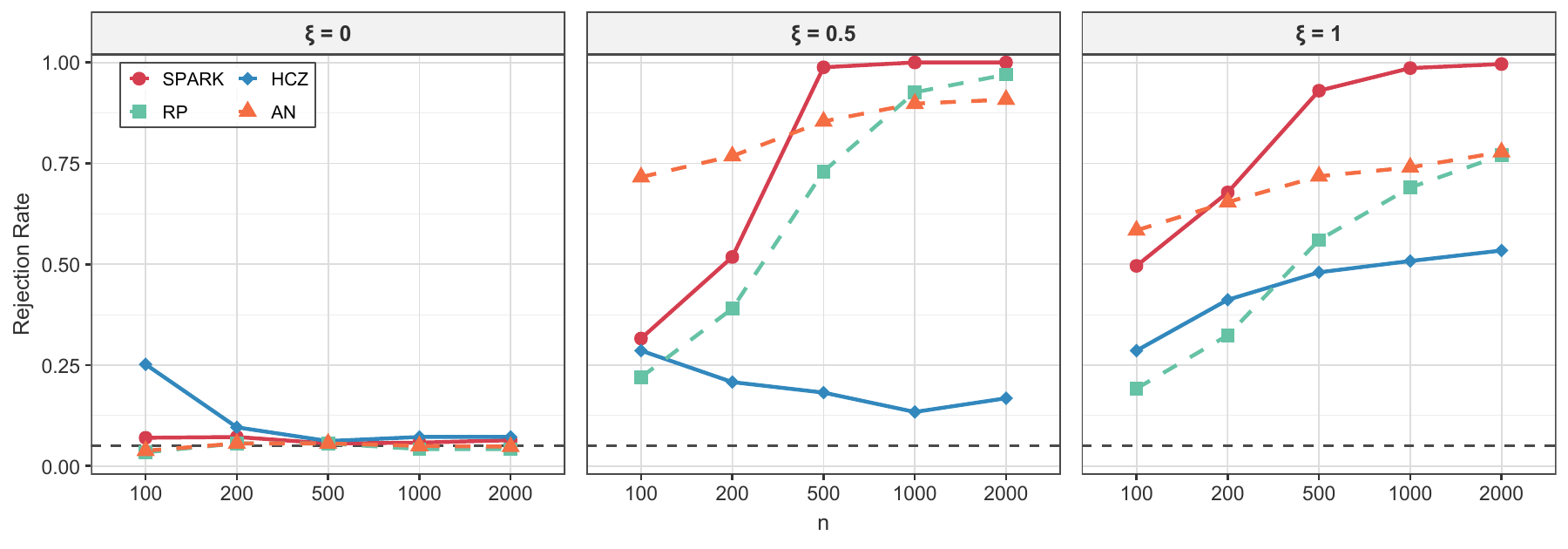}
\caption{Empirical Type-I error and power under Simulation~2 with independent predictors. The horizontal axis denotes the sample size $n$, and the vertical axis shows the rejection rate. The black dashed line indicates the nominal significance level of $0.05$. }\label{figure_2}
\end{figure}

\textit{Simulation~3 (Binary responses).} 
Let $X=(x_1,x_2,x_3)^\top\sim\mcN(0,\Sigma)$ denote a $3$-dimensional predictor. The responses are generated by 
$$
P(Y=1|x_1,x_2,x_3)=1/\{1+\exp(-(\beta_1x_1+\beta_2x_2+\beta_3x_3+\xi x_2x_3))\}, 
$$
where $\beta_j\sim \mcN(1,0.1^2)$. 
We consider the generalized linear model with logit link as the interested learning procedure, and the splitting ratio is fixed at $85\%$ for our proposed test. The sample size $n$ varies from $200$ to $4000$. 
The other settings keep the same as Simulation~1. 

Results are calculated based on $500$ replications.
The empirical performance with independent predictors is presented in Figures~\ref{figure_1}--\ref{figure_3}, and similar results for correlated predictors are reported in Appendix~\ref{app_sec_simu}. 
Overall, the SPARK maintains good size control across a wide range of scenarios, while HCZ may exhibit an inflated Type-I error rate when the sample size is small ($n=100,200$). 
In addition, our method exhibits power comparable to that of HCZ only for the high-dimensional model in Simulation~1 with a large sample size, whereas it achieves the highest power in most of the other scenarios under $H_1$. 


\begin{figure}[htbp]\centering
\includegraphics[width=1\textwidth]{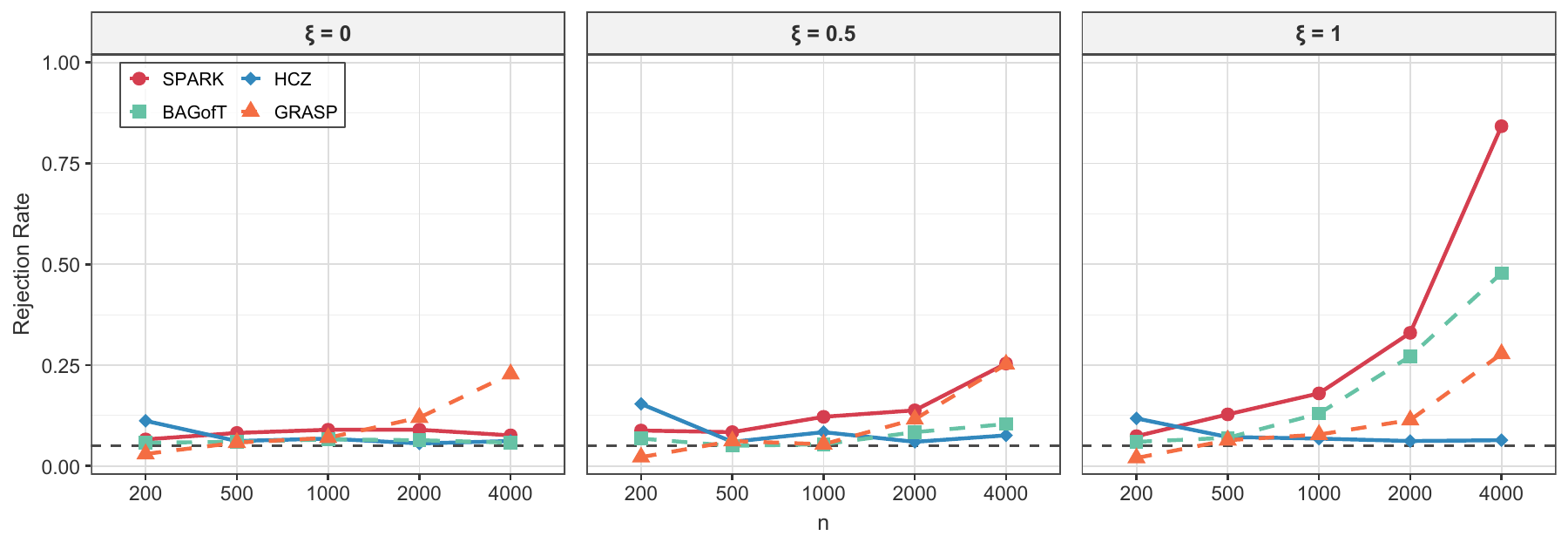}
\caption{Empirical Type-I error and power under Simulation~3 with independent predictors. The horizontal axis denotes the sample size $n$, and the vertical axis shows the rejection rate. The black dashed line indicates the nominal significance level of $0.05$. }\label{figure_3}
\end{figure}

\subsection{Performance on General Learning Procedures}\label{sec_52}
We evaluate the SPARK using different learning procedures in high-dimensional and low-dimensional, linear and nonlinear models as described in Simulations~4--6. 
We consider HCZ \citep{He2025AGA} as a competing method in these scenarios. 
We vary the data-splitting ratio as suggested in Section~\ref{sec_41}, while vary the ratio of HCZ as their suggestion. 
We use $\text{Ratio}_1,\text{Ratio}_2,\text{Ratio}_3$ to represent $20\%$, $25\%$, $50\%$ for our methods, and $50\%$, $75\%$, $90\%$ for theirs, respectively. 
We aggregate the result as described in Section~\ref{sec_42} using the Cauchy combination test \citep{Liu2020CauchyCT}. 
The sample size is fixed at $n = 1000$ and results are based on $100$ replications.

\textit{Simulation~4 (High-dimensional linear model).} 
Let $X=(x_1,\dots,x_p)^\top\sim\mcN(0,I_p)$ denote a $p$-dimensional predictor. The response variable $Y$ is generated as
$$
Y= \beta_1x_1+\beta_2x_2+\beta_3x_3+\beta_4x_4+\beta_5x_5+\epsilon, 
$$
where $\beta_j\sim\mcN(1,0.1^2)$ for $j=1,\dots,5$ and  $\epsilon\sim\mcN(0,1)$. We fix the dimension $p = 500$. 

\begin{figure}[htbp]\centering
\includegraphics[width=1\textwidth]{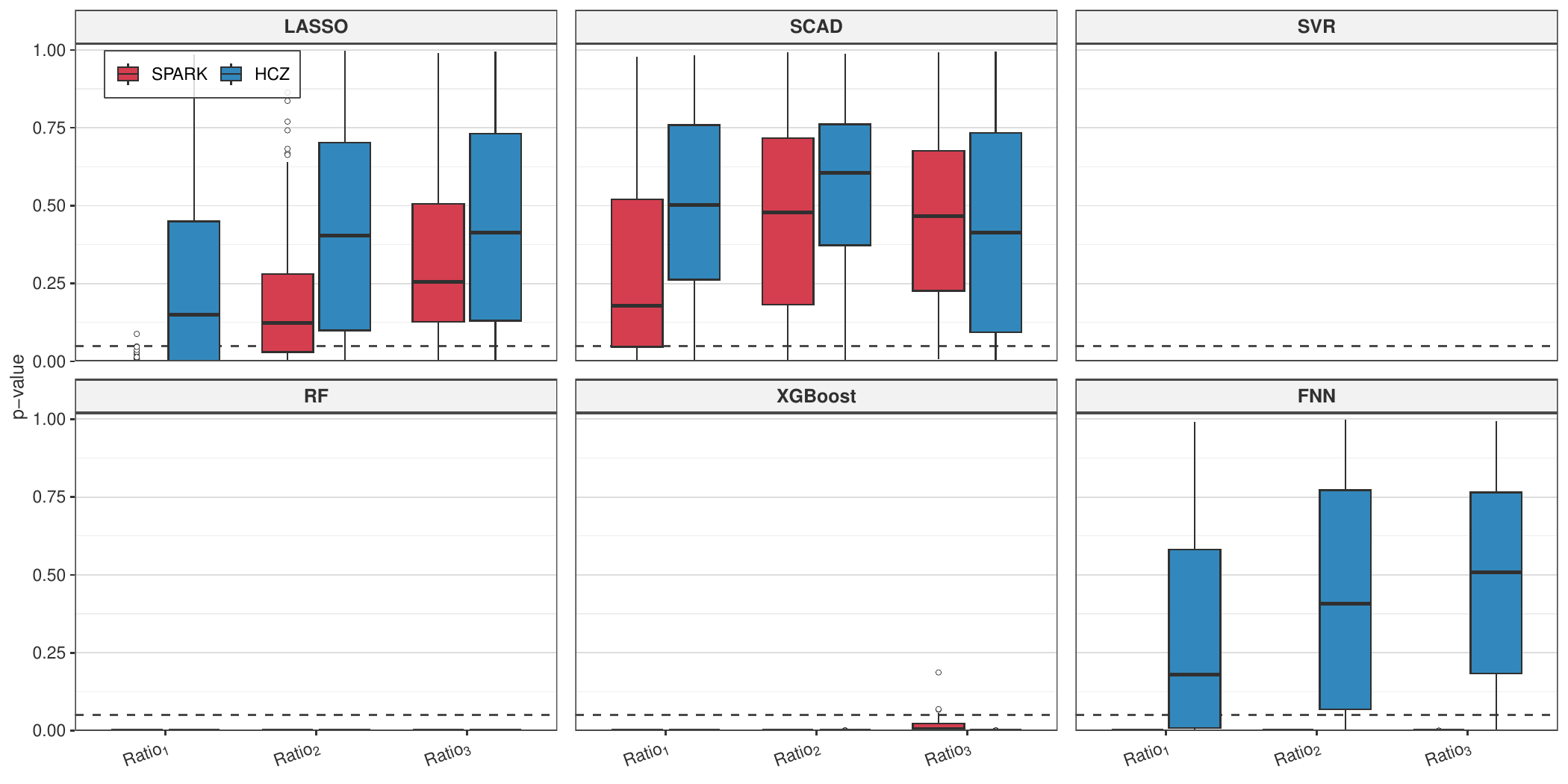}
\caption{Boxplot of the $p$-values in Simulation~4. The horizontal axis indicates different splitting ratios, while the vertical axis displays the corresponding $p$-values. 
The black dashed line marks the significance level at $0.05$.}\label{figure_4}
\end{figure}


\textit{Simulation~5 (High-dimensional nonlinear model).}
Let $X=(x_1,\dots,x_p)^\top\sim\mcN(0,\Sigma)$ denote a $p$-dimensional predictor with $\Sigma=(0.5^{|i-j|})_{p\times p}$. The response variable $Y$ is generated as
$$
Y= \beta_1x_1^2+\beta_2x_2+\beta_3x_3+\beta_4x_4+\beta_5x_5+\epsilon, 
$$
where $\beta_j\sim\mcN(1,0.1^2)$ for $j=1,\dots,5$ and  $\epsilon\sim\mcN(0,1)$. We fix the dimension $p = 500$. 

\begin{figure}[htbp]\centering
\includegraphics[width=1\textwidth]{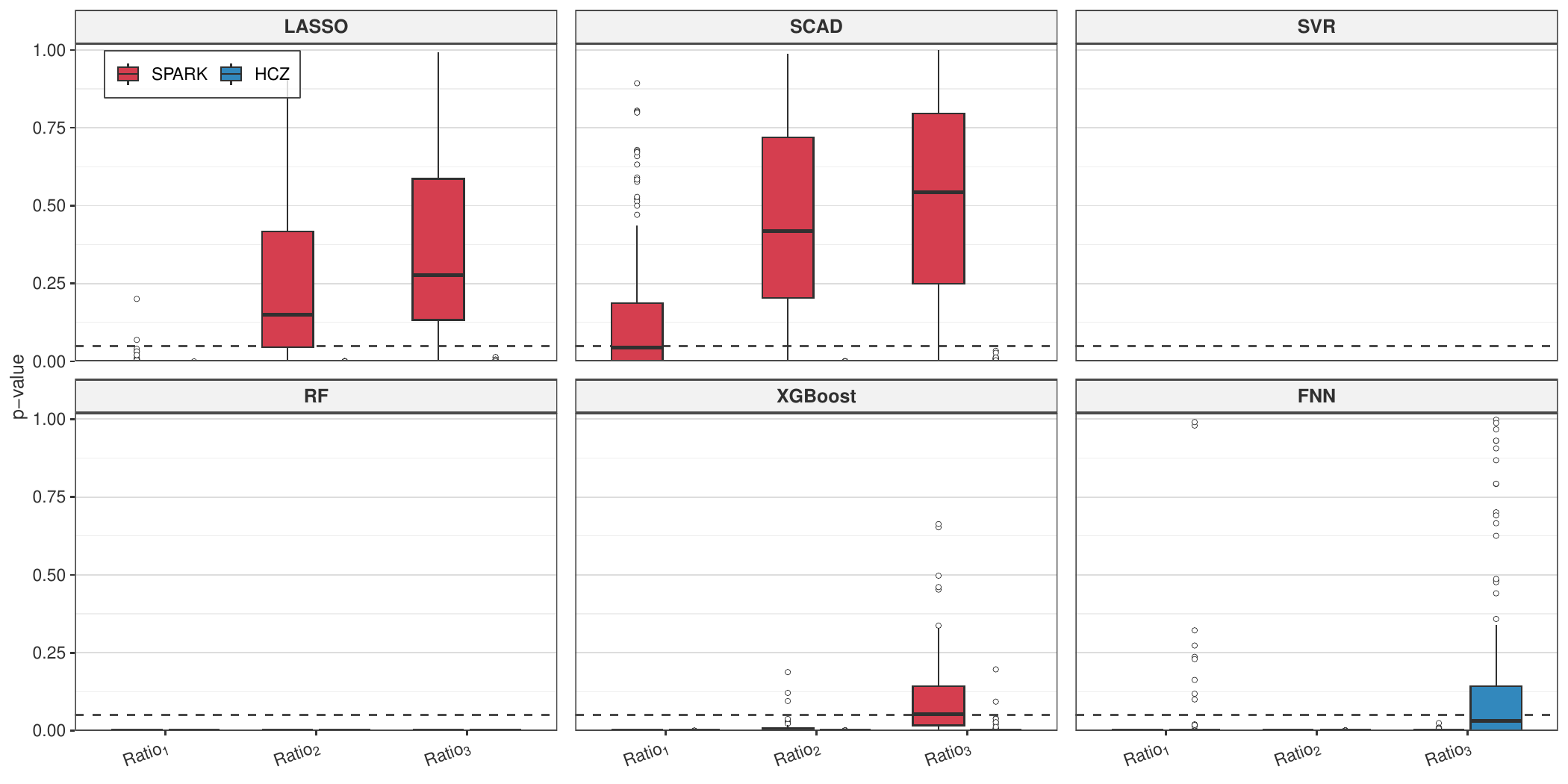}
\caption{Boxplot of the $p$-values in Simulation~5. The horizontal axis indicates different splitting ratios, while the vertical axis displays the corresponding $p$-values. 
The black dashed line marks the significance level at $0.05$. }\label{figure_5}
\end{figure}

\textit{Simulation~6 (Low-dimensional nonlinear model).} 
Let $X=(x_1,x_2)^\top\sim\mcN(0,\Sigma)$ denote a $2$-dimensional predictor with $\Sigma=(0.5^{|i-j|})_{2\times 2}$. The response variable $Y$ is generated as
$$
Y= \beta_1x_1^2+\beta_2\exp(x_2)+\epsilon, 
$$
where $\beta_j\sim\mcN(1,0.1^2)$ for $j=1,2$ and $\epsilon\sim\mcN(0,1)$. 

\begin{figure}[htbp]\centering
\includegraphics[width=1\textwidth]{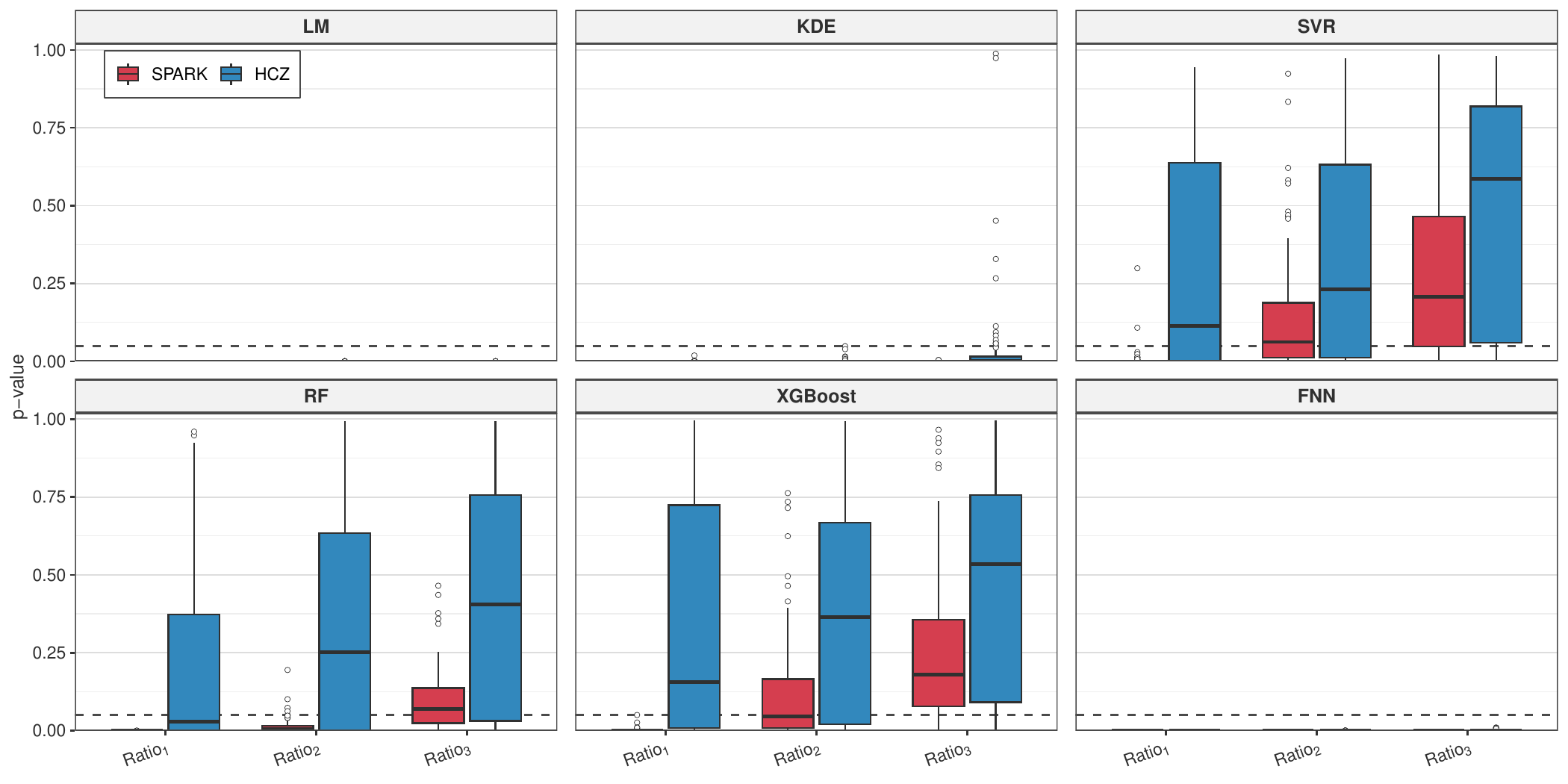}
\caption{Boxplot of the $p$-values in Simulation~6. The horizontal axis indicates different splitting ratios, while the vertical axis displays the corresponding $p$-values. 
The black dashed line marks the significance level at $0.05$. }\label{figure_6}
\end{figure}

Following \cite{He2025AGA}, we apply a range of learning procedures to the data generated from Simulations~4--6, including random forest (RF) with $100$ trees \citep{Breiman2001RandomF}, XGBoost with default settings from the \texttt{xgboost} R package \citep{Chen2016XGBoostAS}, a feedforward neural network (FNN) with one hidden layer containing $80$ neurons \citep{Schmidhuber2015DeepLI}, support vector regression (SVR) with a radial basis function kernel \citep{Smola2004ATO}. 
We also consider LASSO \citep{Tibshirani1996RegressionSA} and SCAD \citep{Fan2001VariableSV} for high-dimensional settings in Simulations~4--5, and ordinary least squares (LM) and kernel density estimation (KDE) for the low-dimensional setting in Simulation~6. 

Figures~\ref{figure_4}--\ref{figure_6} show the boxplots of the $p$-values in Simulations~4--6. 
The SPARK suggests that, LASSO and SCAD exhibit superior performance in high-dimensional settings in Simulations~4--5, with SCAD performs better than LASSO. 
In Simulation~6, the SPARK indicates that the estimations using SVR and XGBoost converge to the true conditional mean at a slow rate, with SVR performs slightly better, while the other learning procedures fail to converge to the true one. 

Moreover, our results differ from those of HCZ in several cases. For example, their results suggest that FNN achieves a fast convergence rate in Simulation~4, whereas the SPARK implies that FNN fails to converge to true conditional mean. 
To ensure a fair comparison, we split the dataset evenly, using one half for training and the other half for testing. The resulting training mean squared error (MSEs) and test MSEs are reported in Figure~\ref{figure_mse}, which shows that the test MSEs of FNN are substantially higher than those of LASSO and SCAD. 
In this sense, the SPARK is consistent with the conclusions drawn from cross-validated MSEs. Similar patterns are observed in Simulations~5--6 in Figure~\ref{figure_mse}, where the test MSE aligns more closely with our method. 

\begin{figure}[htbp]\centering
\includegraphics[width=1\textwidth]{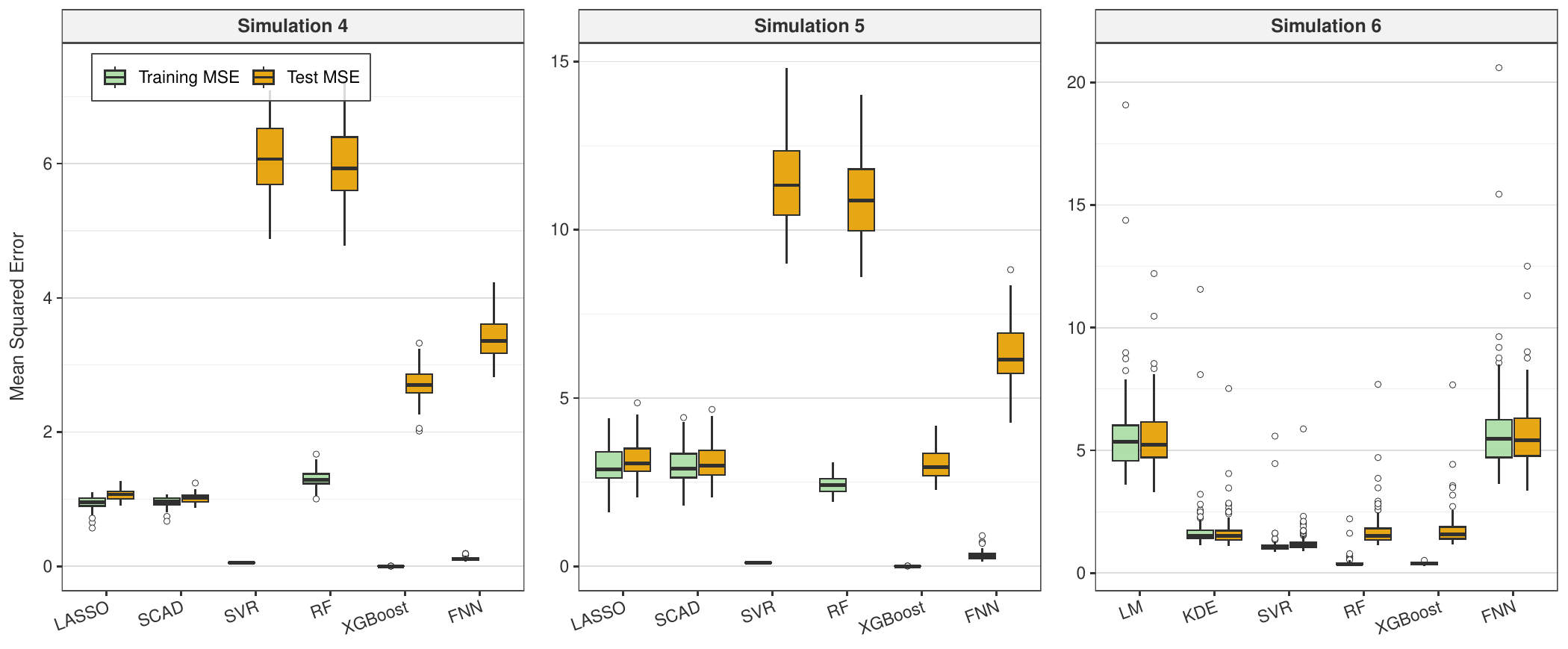}
\caption{Boxplots of the training MSEs and test MSEs from Simulations~4--6. The horizontal axis indicates different learning procedure. Results are based on $100$ replications.}\label{figure_mse}
\end{figure}

\section{Real Data Analysis}\label{sec_06}
We demonstrate the practical applicability of our method through three real-world datasets. 
We study a regression problem with a large sample size and low-dimensional predictors in Section~\ref{sec_61}, and a classification task with a large sample size and high-dimensional predictors in Section~\ref{sec_62}. Appendix~\ref{app_sec_simu_2} includes analysis about a regression problem with a small sample size and high-dimensional predictors. 

\subsection{Wine Quality Dataset}\label{sec_61}
The wine quality dataset, originally introduced by \cite{CORTEZ2009547}, has been widely studied using a range of methods, including multiple regression, neural networks, and support vector regression. The dataset is publicly available at \url{https://archive.ics.uci.edu/dataset/186/wine+quality}.

In our analysis, the response variable is the quality rating of white wine. The predictors include fixed acidity, volatile acidity, citric acid, residual sugar, chlorides, free sulfur dioxide, density, and pH. The dataset contains $n=4898$ observations. 
We assess the performance  LASSO, SCAD, SVR, RF, XGBoost, and FNN under different data-splitting ratios, following the setup described in Section~\ref{sec_52}. 
The test results are presented in Table~\ref{tab_wine}, where the $p$-values are obtained based on $2$ multiple-splitting procedures, while the MSEs are averaged over $100$ replications.

\newcommand{\pval}[3]{\makecell[c]{#1\\{\scriptsize(#2,#3)}}}
{\renewcommand{\baselinestretch}{1.1}
\begin{table}[htbp]
\centering
\caption{Test results and MSEs for the wine quality dataset. For $p$-values, each entry is reported as median (lower $25\%$-quantile, upper $25\%$-quantile).}\label{tab_wine}
\setlength{\tabcolsep}{3.5pt}
\renewcommand{\arraystretch}{1.0}
\small
\begin{adjustbox}{max width=\textwidth}
\begin{tabular}{lccccccc}
\toprule
 & Ratio & LASSO & SCAD & SVR & RF & XGBoost & FNN \\
\midrule
\multirow{4}{*}{SPARK}
& 25\% & \pval{.000}{.000}{.000} & \pval{.000}{.000}{.000} & \pval{.010}{.000}{.044} & \pval{.000}{.000}{.000} & \pval{.008}{.000}{.049} & \pval{.000}{.000}{.000} \\
& 50\% & \pval{.000}{.000}{.000} & \pval{.000}{.000}{.000} & \pval{.079}{.027}{.173} & \pval{.000}{.000}{.006} & \pval{.111}{.036}{.220} & \pval{.000}{.000}{.000} \\
& 75\% & \pval{.000}{.000}{.000} & \pval{.000}{.000}{.000} & \pval{.220}{.096}{.481} & \pval{.040}{.014}{.094} & \pval{.253}{.063}{.491} & \pval{.000}{.000}{.000} \\
\midrule
\multirow{3}{*}{HCZ}
& 50\% & \pval{.000}{.000}{.000} & \pval{.000}{.000}{.000} & \pval{.008}{.000}{.172} & \pval{.000}{.000}{.000} & \pval{.016}{.000}{.241} & \pval{.000}{.000}{.000} \\
& 75\% & \pval{.000}{.000}{.000} & \pval{.000}{.000}{.000} & \pval{.483}{.089}{.714} & \pval{.000}{.000}{.000} & \pval{.460}{.119}{.791} & \pval{.000}{.000}{.000} \\
& 90\% & \pval{.006}{.000}{.075} & \pval{.023}{.001}{.147} & \pval{.512}{.249}{.732} & \pval{.052}{.001}{.380} & \pval{.557}{.354}{.731} & \pval{.000}{.000}{.000} \\
\midrule
\multirow{3}{*}{MSE (Train)}
& 25\% & .557 & .557 & .212 & .118 & .103 & .747 \\
& 50\% & .561 & .561 & .250 & .106 & .175 & .685 \\
& 75\% & .563 & .562 & .267 & .097 & .216 & .675 \\
\midrule
\multirow{3}{*}{MSE (Test)}
& 25\% & .575 & .575 & .581 & .473 & .521 & .751 \\
& 50\% & .572 & .571 & .518 & .424 & .473 & .692 \\
& 75\% & .572 & .570 & .482 & .387 & .447 & .682 \\
\bottomrule
\end{tabular}
\end{adjustbox}
\end{table}
}

It implies that only SVR and XGBoost converge to the real conditional mean at a relatively slow rate, while the other four learning procedures may not fit the data well. 
The results are consistent with those of HCZ and show higher power. 
Since HCZ is restricted to high-dimensional predictors and Simulation~2 further demonstrates superior empirical performance of our methods, our conclusions are more convincing and reliable.
This conclusion is further supported by \cite{CORTEZ2009547}, where SVR achieves the best performance in terms of mean absolute deviation and accuracy. Moreover, the test MSEs of SVR and XGBoost are significantly lower than those of LASSO, SCAD, and FNN.

It is also worth noting that RF achieves the lowest test MSEs across several splitting ratios. 
This mismatch may be explained by that RF relies on local averaging, which enables it to capture structure well in high-density regions and achieve a low test MSE even when the bias is relatively large in boundary or sparse regions. 
It also suggests that our assessment and the MSE reflect different aspects of model performance.

\subsection{MNIST Dataset}\label{sec_62}
In this subsection, we study a classification task using the MNIST dataset, a dataset contains a vast collection of handwritten digits. 
Following \cite{He2025AGA}, we focus on two visually similar digits, ``4'' and ``9'', and randomly select a subset of $n = 1000$ images with  $28\times28$ pixels. 

We employed several classification algorithms, including XGBoost and five FNN configurations, which are designed to differ in network depth, hidden-layer dimensionality, and activation function. Specifically, FNN-1 and FNN-2 each consist of a single hidden layer with one neuron, using ReLU and sigmoid activation functions, respectively. FNN-3 comprises two hidden layers with $64$ and $16$ neurons and used ReLU activation. FNN-4 and FNN-5 adopt a three-hidden-layer architecture with $128$, $64$, and $16$ neurons, using ReLU and sigmoid activation functions, respectively.


Table~\ref{tab_mnist} reports the performance of the competing methods under different splitting ratios. Compared with the existing goodness-of-fit procedures, the SPARK generally provides stronger evidence against the null hypothesis and exhibits greater sensitivity to model misspecification. In particular, it remains capable of detecting lack of fit in settings where HCZ and GRASP yield substantially weaker rejection signals. 
Moreover, the SPARK suggests that only the deeper neural networks, namely FNN-3, FNN-4, and FNN-5, exhibit a slow convergence rate, whereas the remaining methods fail to converge. 
This finding is consistent with the test MSE results, since these three neural networks achieve the three lowest test MSEs across the splitting ratios.

Another analysis for a cortisol stress-reactivity dataset, presented in Appendix~\ref{app_sec_simu_2}, examines our method for data with a small sample size and high-dimensional predictors. The results lead to a similar conclusion and further confirm the reliability of our method.

{\renewcommand{\baselinestretch}{1.1}
\begin{table}[htbp]
\centering
\caption{Test results and MSEs for the MNIST dataset. For $p$-values, each entry is reported as median (lower $25\%$-quantile, upper $25\%$-quantile).}
\label{tab_mnist}
\setlength{\tabcolsep}{3.5pt}
\renewcommand{\arraystretch}{1.0}
\begin{adjustbox}{max width=\textwidth}
\small
\begin{tabular}{lccccccc}
\toprule
 & Ratio & XGBoost & FNN-1 & FNN-2 & FNN-3 & FNN-4 & FNN-5 \\
\midrule
\multirow{3}{*}{SPARK} 
 & 25\% & \pval{.000}{.000}{.000} & \pval{.000}{.000}{.000} & \pval{.000}{.000}{.000} & \pval{.000}{.000}{.000} & \pval{.000}{.000}{.004} & \pval{.000}{.000}{.000} \\
 & 50\% & \pval{.000}{.000}{.004} & \pval{.000}{.000}{.000} & \pval{.000}{.000}{.000} & \pval{.013}{.002}{.077} & \pval{.013}{.000}{.081} & \pval{.036}{.008}{.114} \\
 & 75\% & \pval{.057}{.010}{.177} & \pval{.000}{.000}{.000} & \pval{.000}{.000}{.000} & \pval{.125}{.026}{.424} & \pval{.118}{.047}{.336} & \pval{.230}{.106}{.458} \\
\midrule
\multirow{3}{*}{HCZ} & 50\% & \pval{.295}{.004}{.677} & \pval{.000}{.000}{.000} & \pval{.000}{.000}{.000} & \pval{.417}{.160}{.644} & \pval{.435}{.213}{.740} & \pval{.612}{.200}{.815} \\
 & 75\% & \pval{.548}{.372}{.776} & \pval{.000}{.000}{.000} & \pval{.000}{.000}{.000} & \pval{.523}{.256}{.681} & \pval{.361}{.140}{.581} & \pval{.539}{.333}{.762} \\
 & 90\% & \pval{.553}{.335}{.740} & \pval{.000}{.000}{.000} & \pval{.000}{.000}{.000} & \pval{.503}{.247}{.732} & \pval{.358}{.180}{.709} & \pval{.479}{.294}{.699} \\
\midrule
\multirow{3}{*}{BAGofT} & 50\% & \pval{.050}{.003}{.362} & \pval{.000}{.000}{.000} & \pval{.000}{.000}{.000} & \pval{.000}{.000}{.000} & \pval{.000}{.000}{.000} & \pval{.001}{.000}{.025} \\
 & 75\% & \pval{.196}{.020}{.559} & \pval{.000}{.000}{.000} & \pval{.000}{.000}{.000} & \pval{.000}{.000}{.000} & \pval{.000}{.000}{.003} & \pval{.009}{.001}{.069} \\
 & 90\% & \pval{.410}{.155}{.705} & \pval{.000}{.000}{.000} & \pval{.000}{.000}{.000} & \pval{.000}{.000}{.024} & \pval{.000}{.000}{.058} & \pval{.085}{.007}{.318} \\
\midrule
\multirow{3}{*}{GRASP} & 50\% & \pval{.016}{.004}{.041} & \pval{.008}{.001}{.035} & \pval{.000}{.000}{.000} & \pval{.128}{.023}{.340} & \pval{.088}{.032}{.275} & \pval{.036}{.007}{.117} \\
 & 75\% & \pval{.140}{.031}{.274} & \pval{.042}{.006}{.198} & \pval{.000}{.000}{.008} & \pval{.211}{.082}{.539} & \pval{.236}{.064}{.479} & \pval{.169}{.070}{.343} \\
 & 90\% & \pval{.351}{.130}{.554} & \pval{.249}{.099}{.449} & \pval{.037}{.010}{.209} & \pval{.375}{.231}{.679} & \pval{.424}{.181}{.720} & \pval{.335}{.187}{.598} \\
\midrule
\multirow{3}{*}{MSE (Train)} & 25\% & .000 & .227 & .214 & .007 & .012 & .018 \\
 & 50\% & .000 & .221 & .184 & .010 & .008 & .014 \\
 & 75\% & .000 & .200 & .154 & .008 & .009 & .016 \\
\midrule
\multirow{3}{*}{MSE (Test)} & 25\% & .075 & .236 & .220 & .067 & .065 & .047 \\
 & 50\% & .055 & .226 & .190 & .051 & .045 & .040 \\
 & 75\% & .047 & .205 & .159 & .041 & .037 & .038 \\
\bottomrule
\end{tabular}
\end{adjustbox}
\end{table}
}

\section{Conclusion}\label{sec_07}
This paper introduces a novel test procedure to assess the goodness-of-fit of general learning procedures. The testing procedure splits the dataset, with a subset used for training an interested learning procedure and the other is used for evaluating the goodness-of-fit. 
We project the residuals on the space of the predictors to evaluate whether any information left. Our proposal characterizes all the projection directions, whereas \cite{He2025AGA} selected several special projection directions. Future avenues include taking the predictors and the responses as complex objects. 



\newpage
\begin{appendices}
\label{app}
\allowdisplaybreaks[3]

\renewcommand{\theequation}{\thesection.\arabic{equation}}

\setcounter{table}{0}
\renewcommand{\thetable}{\thesection\arabic{table}}
\setcounter{figure}{0}
\renewcommand{\thefigure}{\thesection\arabic{figure}}
\setcounter{algorithm}{0}
\renewcommand{\thealgorithm}{\thesection\arabic{algorithm}}
\setcounter{assumption}{0}
\renewcommand{\theassumption}{\thesection\arabic{assumption}}
\section{Overview}
This supplementary material includes technical details and proofs omitted in the main text, and additional simulations and data analysis. The remaining material is organized as follows.
\begin{itemize}
    \item In Section~\ref{app_sec_02}, we present technical proofs for the results in the main text, including the reformulations of the hypothesis testing (Section~\ref{app_sec_021} and Section~\ref{app_sec_lemma_02}), limiting distributions for a general $f$ (Section~\ref{app_sec_022}), limiting distributions for the kernel-based projection (Section~\ref{app_sec_023}), asymptotic properties of the wild bootstrap under the null (Section~\ref{app_sec_024}), and the alternatives (Section~\ref{app_sec_025}). 
    \item Section~\ref{app_sec_04} further discusses the asymptotic behaviour of the statistic $T_f$ under the alternatives.
    \item Section~\ref{app_sec_E} includes the dimension-agnostic theories for the kernel-based statistic. 
    \item Section~\ref{app_sec_05} contains additional numerical results, including the rejection rate of correlated predictors (Section~\ref{app_sec_simu}), and real data analysis for the cortisol stress reactivity dataset (Section~\ref{app_sec_simu_2}). 
\end{itemize}

\section{Technical Proofs}\label{app_sec_02}
\paragraph{Notation.}
For a probability measure $\mu$ on $\mathbb{R}^d$ and $1 \le p \le \infty$, let $L_p(\mu) = \{ g : \mathbb{R}^d \to \mathbb{R} \mid \|g\|_{L_p(\mu)} < \infty \}$, where $\|g\|_{L_p(\mu)}$ denotes the usual $L_p$-norm. In particular, $L_\infty(\mu)$ denotes the space of essentially bounded functions, and $\|g\|_{L_\infty(\mu)}$ denotes the essential supremum with respect to $\mu$. 
For two sequences $a_n$ and $b_n$, the notation $a_n = O(b_n)$ or $a_n \lesssim b_n$ indicates that there exists a constant $C>0$ such that $|a_n| \leq C|b_n|$ for all sufficiently large $n$. The notation $a_n \asymp b_n$ means that $a_n = O(b_n)$ and $b_n = O(a_n)$. 
Furthermore, $a_n = o(b_n)$ denotes that $a_n/b_n \to 0$ as $n \to \infty$. 
For a random sequence $A_n$, the notation $A_n = O_p(b_n)$ means that $A_n/b_n$ is bounded in probability, and $A_n = o_p(b_n)$ means that $A_n/b_n \to 0$ in probability. Similarly, $A_n = \omega_p(b_n)$ and $A_n = \Omega_p(b_n)$ signify that $b_n = o_p(A_n)$ and $b_n = O_p(A_n)$, respectively.
Throughout the proof, $C_1,C_2,\dots$ stand for some positive constants. 

\subsection{Proof of Lemma~\ref{lemma_01}}\label{app_sec_021}
\begin{proof}
Let $\sigma(\mcD_n)$ denote the $\sigma$-field generated by $\mcD_n$ and $\mcD_n(\omega)$ denote the realization of $\mcD_n$ at the sample point $\omega$. 
Denote $g_{n,\omega}(x)=m_{\mcD_n(\omega)}(x)-m(x)$ for any fixed sample point $\omega$, and we write $g_{n}(x)=m_{\mcD_n}(x)-m(x)$ as the corresponding random variable in $\sigma(\mcD_n)$. We first show that 
\begin{equation}\label{app_eq_01}
    \sup_{f_{\mcD_n}\in\mcF_n}|E\{\varepsilon_n f_{\mcD_n}(X)|\mcD_n\}(\omega)|=\|g_{n,\omega}\|_{L_\infty(P_X)}\ a.e.,
\end{equation}
where $\|g_{n,\omega}\|_{L_\infty(P_X)}:=\inf\{a\geq0:P(|g_{n,\omega}(X)|>a)=0\}$ is the essential supremum of $g_{n,\omega}$ with respect to measure $P_X$. In fact, note that 
$$
    \varepsilon_n= \epsilon + m(X)-m_{\mcD_n}(X)= \epsilon-g_n(X). 
$$
For $f_{\mcD_n}\in\mcF_n$, $E\{|\epsilon f_{\mcD_n}(X)|\,|\mcD_n\}<\infty$ almost surely, and thus
$$
\int |g_{n,\omega}(x)f_{\mcD_n(\omega)}(x)|\,\dd P_X(x)
    \le
   \|g_{n,\omega}\|_{L_\infty(P_X)}\int |f_{\mcD_n(\omega)}(x)|\,\dd P_X(x)
    \le \|g_{n,\omega}\|_{L_\infty(P_X)}<\infty.
$$
It follows that $E\{|\varepsilon_n f_{\mcD_n}(X)|\,|\mcD_n\}<\infty$ a.e. 
Therefore, $E\{\varepsilon_n f_{\mcD_n}(X)|\mcD_n\}$ is well-defined. 
Moreover, note that since 
$$
    E(\varepsilon_n|X,\mcD_n)= E(\epsilon|X,\mcD_n) + m(X) - m_{\mcD_n}(X)=m(X) - m_{\mcD_n}(X)=-g_n(X)\ a.e,  
$$
it follows that 
$$
E(\varepsilon_nf_{\mcD_n}(X)|X,\mcD_n)=-g_n(X)f_{\mcD_n}(X)\ a.e.
$$
Taking conditional expectation with respect to $\mcD_n$ gives
$$
E\{\varepsilon_n f_{\mcD_n}(X)|\mcD_n\}=E[E\{\varepsilon_n f(X)|X,\mcD_n\}|\mcD_n]=-E\{g_n(X) f_{\mcD_n}(X)|\mcD_n\}\ a.e.
$$
Since $X$ is independent of $\mcD_n$, it holds that 
$$
E\{g_n(X) f_{\mcD_n}(X)|\mcD_n\}(\omega)=\int g_{n,\omega}(x)f_{\mcD_n(\omega)}(x)\dd P_X(x)
$$
for almost all $\omega$.
Taking absolute values and then the supremum over all measurable $f_{\mcD_n}$ with $f_{\mcD_n}\in\mcF_n$, we have 
\begin{equation}\label{app_eq_02}
    \sup_{f_{\mcD_n}\in\mcF_n}|E\{g_n(X) f_{\mcD_n}(X)|\mcD_n\}(\omega)|=\sup_{f_{\mcD_n}\in\mcF_n}\bigg|\int g_{n,\omega}(x)f_{\mcD_n(\omega)}(x)\dd P_X(x)\bigg|. 
\end{equation}

We now show that the right-hand side of \eqref{app_eq_02} equals $\|g_{n,\omega}\|_{L_\infty(P_X)}$ for almost all $\omega$. For the upper bound of the right-hand side, we have
$$
\begin{aligned}
    \bigg|\int g_{n,\omega}(x)f_{\mcD_n(\omega)}(x)\dd P_X(x)\bigg|&\leq\int |g_{n,\omega}(x)||f_{\mcD_n(\omega)}(x)|\dd P_X\\
    &\leq \|g_{n,\omega}\|_{L_\infty(P_X)}\int |f_{\mcD_n(\omega)}(x)|\dd P_X\\
    &\leq \|g_{n,\omega}\|_{L_\infty(P_X)}, 
\end{aligned}
$$
where the last inequality holds if $\|f_{\mcD_n(\omega)}\|_{L_1(P_X)}\leq1$. 

For the other direction, we fix a sample point $\omega$ outside a null set such that $\|g_{n,\omega}\|_{L_\infty(P_X)}<\infty$. 
For any positive rational number $a<\|g_{n,\omega}\|_{L_\infty(P_X)}$, by the definition of essential supremum, $A_a(\omega):=\{x:|g_{n,\omega}(x)|>a\}$ has a positive measure, i.e., $p_a(\omega):=P_X(A_a(\omega))>0$. 
Define 
$$
f^{a}_{\mcD_n(\omega)}(x):=\frac{\operatorname{sgn}(g_{n,\omega}(x))\mathbbm{1}_{A_a(\omega)}(x)}{p_a(\omega)},$$
where $\operatorname{sgn}(\cdot)$ denotes the sign function and $\mathbbm{1}_{A_a(\omega)}(\cdot)$ is the indicator function of event $A_a(\omega)$ at $x$. 
For any given $\omega$, $f^{a}_{\mcD_n(\omega)}$ is $\mcB(\mbS)$-measurable, where $\mcB(\mbS):=\{B\cap\mbS:B\in\mcB(\mbR^p)\}$ and $\mcB(\mbR^p)$ denotes the Borel $\sigma$-field on $\mbR^p$. 
A calculation yields that $\|f^{a}_{\mcD_n(\omega)}\|_{L_1(P_X)}=1$.
Thus $f^{a}_{\mcD_n(\omega)}\in\mcF_n$. 
Moreover, for every $a<\|g_{n,\omega}\|_{L_\infty(P_X)}$, it holds that 
$$
\bigg|\int g_{n,\omega}(x)f^{a}_{\mcD_n(\omega)}(x)\dd P_X(x)\bigg|=\frac{1}{p_a(\omega)}\int_{A_a(\omega)} |g_{n,\omega}(x)|\dd P_X(x)\geq a. 
$$
Then, taking $a\uparrow \|g_{n,\omega}\|_{L_\infty(P_X)}$, we have 
$$
    \sup_{f_{\mcD_n}\in\mcF_n}\bigg|\int g_{n,\omega}(x)f_{\mcD_n(\omega)}(x)\dd P_X(x)\bigg|\geq \|g_{n,\omega}\|_{L_\infty(P_X)}\ a.e. 
$$
Combining the two bounds together and \eqref{app_eq_02} together yields \eqref{app_eq_01}.


To complete the proof, it is sufficient to show that the null hypothesis is equivalent to 
$$
\|g_{n,\omega}\|_{L_\infty(P_X)}=O_p(r_n).
$$
First we suppose that there exists a set $\mbM_n^0(\omega)\subseteq\mbS$ with $P(X\in\mbM_n^0(\omega))=1$ such that 
$$\sup_{x\in\mbM_n^0(\omega)}|g_{n,\omega}(x)|=O_p(r_n).$$ 
Since $\mbM_n^0(\omega)$ has full $P_X$-measure, by the definition of essential supremum, it holds that 
$$
\|g_{n,\omega}\|_{L_\infty(P_X)}\leq\sup_{x\in\mbM_n^0}|g_{n,\omega}(x)|.
$$
Hence $\|g_n\|_{L_\infty(P_X)}=O_p(r_n)$. 

For the other direction, suppose that $\|g_{n,\omega}\|_{L_\infty(P_X)}=O_p(r_n)$. Define 
$$\mbM_n^0(\omega):=\{x\in\mbS:|g_{n,\omega}(x)|\leq\|g_{n,\omega}\|_{L_\infty(P_X)}\}.$$ 
Then the definition of essential supremum, $P(X\in\mbM_n^0(\omega))=1$ for almost all $\omega$. It follows that 
$$
\sup_{x\in\mbM_n^0(\omega)}|g_{n,\omega}(x)|\leq\|g_{n,\omega}\|_{L_\infty(P_X)}=O_p(r_n),\ a.e.,
$$
which yields the first statement of Lemma~\ref{lemma_01} about the equivalence. 

For the second statement on the necessary condition, by Jensen's inequality and the triangle inequality, it holds that 
$$
\|f_{\mcD_n}(X)-E(f_{\mcD_n}(X)|\mcD_n)\|_{L_1(P_X)}\leq2\|f_{\mcD_n}(X)\|_{L_1(P_X)}, 
$$ 
for $\mcD_n$ a.e., which completes the proof. 
\end{proof}

\subsection{Proof of Theorem~\ref{th_01}}\label{app_sec_022}
\begin{proof}
It is noted that $T_f$ can be decomposed as 
\begin{align*}
    T_f&=\frac{1}{n_2}\sum_{i=1}^{n_2}\big\{Y_{2,i}-m(X_{2,i})+m(X_{2,i})-m_{\mcD_{n_1}}(X_{2,i})\big\}\big\{f(X_{2,i})-E(f(X))+E(f(X))-\hat m_f\big\}\\
    &=\frac{1}{n_2}\sum_{i=1}^{n_2}\big\{Y_{2,i}-m(X_{2,i})\big\}\big\{f(X_{2,i})-E(f(X))\big\}+
    \frac{1}{n_2}\sum_{i=1}^{n_2}\big\{Y_{2,i}-m(X_{2,i})\big\}\big\{E(f(X))-\hat m_f\big\}
    \\
    &\quad\ 
    +\frac{1}{n_2}\sum_{i=1}^{n_2}\big\{m(X_{2,i})-m_{\mcD_{n_1}}(X_{2,i})\big\}\big\{f(X_{2,i})-E(f(X))\big\}\\
    &\quad\ +\frac{1}{n_2}\sum_{i=1}^{n_2}\big\{m(X_{2,i})-m_{\mcD_{n_1}}(X_{2,i})\big\}\big\{E(f(X))-\hat m_f\big\}\\
    &=:A_1+A_2+A_3+A_4. 
\end{align*}
We now calculate the four terms in order. Note that since 
\begin{align*}
    E(A_1)&=E\big[\{Y-m(X)\}\{f(X)-E(f(X))\}\big]\\
    &=E\big[E[\{Y-m(X)\}\{f(X)-E(f(X))\}|X]\big]\\
    &=E\big[E[\{Y-m(X)\}|X]\{f(X)-E(f(X))\}\big]=0, 
\end{align*}
a direct use of central limit theorem yields that, as $n_2\to\infty$, 
\begin{equation}\label{app_eq_03}
    {\sqrt{n_2}A_1}=\frac{1}{\sqrt{n_2}}\sum_{i=1}^{n_2}\{Y_{2,i}-m(X_{2,i})\}\{f(X_{2,i})-E(f(X))\}\overset{d}{\to}\mcN(0,V_f), 
\end{equation}
where $V_f=E[\{Y-m(X)\}^2\{f(X)-E(f(X))\}^2]$. 
Similarly, it is noted that 
\begin{align*}
    E(A_2)=E\big[\{Y-m(X)\}\{E(f(X))-\hat m_f\}\big]=E\{Y-m(X)\}E\{E(f(X))-\hat m_f\}=0,
\end{align*}
where the second equality holds since the two terms are independent. 
Then, a calculation leads to 
\begin{align*}
    E(A_2^2)&=E\bigg[\frac{1}{n_2^2}\sum_{i=1}^{n_2}\big\{Y_{2,i}-m(X_{2,i})\big\}^2\big\{E(f(X))-\hat m_f\big\}^2\bigg]\\
    &={n_2^{-1}}E\{Y-m(X)\}^2E\big\{E(f(X))-\hat m_f\big\}^2\\
    &=O(n_1^{-1}n_2^{-1}).
\end{align*}
By Markov's inequality, it follows that 
\begin{equation}\label{app_eq_05}
A_2=O_p\big(n_1^{-1/2}n_2^{-1/2}\big). 
\end{equation}

Under the null hypothesis, H\"older's inequality yields that 
$$
E|A_3|\leq \sup_{x\in\mbM_n^0}|m_{\mcD_{n_1}}(x)-m(x)|\cdot E|f(X_{2,i})-E(f(X))|\ a.e., 
$$
which is followed by 
\begin{equation}\label{app_eq_04}
A_3=O_p(r_{n_1}). 
\end{equation}
Moreover, under the null hypothesis, we have 
$$
|A_4|\leq \sup_{x\in\mbM_n^0}|m_{\mcD_{n_1}}(x)-m(x)|\cdot|E(f(X))-\hat m_f|\ a.e., 
$$
which is followed by 
\begin{equation}\label{app_eq_06}
A_4=O_p\big(r_{n_1}n_1^{-1/2}\big). 
\end{equation}

Combining the \eqref{app_eq_03}--\eqref{app_eq_06} together, we have 
$$
\sqrt{n_2}T_f=\sqrt{n_2}A_1+O_p\big(n_1^{-1}\big)+O_p\big(r_{n_1}n_2^{1/2}\big)+O_p\big(r_{n_1}n_1^{-1/2}n_2^{1/2}\big)\overset{d}{\to}\mcN(0,V_f). 
$$
The last convergence uses the condition that $n_2=o(r_{n_1}^{-2})$. 
The proof is then completed. 
\end{proof}

\subsection{Proof of Lemma~\ref{lemma_02}}\label{app_sec_lemma_02}
\begin{proof}
    We prove the equivalence between 
\begin{equation}\label{app_eq_10}
    \sup_{f_{\mcD_n}\in\mcH_k^0}E\big[\varepsilon_n\{f_{\mcD_n}(X)-E(f_{\mcD_n}(X)|\mcD_n)\}|\mcD_{n}\big]=O_p(r_n),
\end{equation}
with 
\begin{equation}\label{app_eq_11}
\sup_{f_{\mcD_n}\in\mcF_n}E\big[\varepsilon_n\{f_{\mcD_n}(X)-E(f_{\mcD_n}(X)|\mcD_n)\}|\mcD_{n}\big]=O_p(r_n),
\end{equation}
with some appropriate kernel $k$. The proof includes two steps, which are formulated as Lemma~\ref{app_le_01} and \ref{app_le_02}, respectively. 
For the first step, by Lemma~\ref{app_le_01}, we illustrate that the metric $\|\cdot\|_{\mcH_k}$ is equivalent to $\|\cdot\|_{L_1(P_X)}$. It follows that \eqref{app_eq_10} is equivalent to 
\begin{equation}\label{app_eq_12}
\sup_{\substack{f_{\mcD_n(\omega)}\in\mcH_k\\\|f_{\mcD_n(\omega)}\|_{L_1(P_X)}\leq1}}E\big[\varepsilon_n\{f_{\mcD_n}(X)-E(f_{\mcD_n}(X)|\mcD_n)\}|\mcD_{n}\big]=O_p(r_n),
\end{equation}
for every $\omega$. 
In the second step, we proof that the equivalence holds if $\mcH_k$ is dense in $L_1(P_X)$ and the metric $\|\cdot\|_{\mcH_k}$.
Recall that 
    $$
    \mcF_n=\{f_{\mcD_n}\mid f_{\mcD_n}:\mbS\to \mbR, \|f_{\mcD_n}\|_{L_1(P_X)}\leq1\text{ for every given }\mcD_n\}.
    $$
    We further denote 
    $$
    \mcF_n(\omega):= \{f_{\mcD_n(\omega)}\mid f_{\mcD_n}\in\mcF_n\}, 
    $$
    for any fixed sample point $\omega$. Since $\mcF_n(\omega)$ is a subset of $L_1(P_X)$, it follows that $\mcH_k\cap \mcF_n(\omega)$ is dense in $\mcF_n(\omega)$ with respect to $L_1$-norm.
By Lemma~\ref{app_le_02}, \eqref{app_eq_12} is equivalent to \eqref{app_eq_11}, which completes the proof. 
\end{proof}

\begin{lemma}\label{app_le_01}
    Suppose $\mbS$ is a compact subset of $\mbR^p$. Suppose $k:\mbS\times \mbS\to\mbR$ is a continuous symmetric positive-definite kernel function with $k(x,x)>0$ for any $x\in\mbS$. Then $\|\cdot\|_{L_1(P_X)}$ and $\|\cdot\|_{\mcH_k}$ are uniformly equivalent on $\{k(x,\cdot):x\in\mbS\}$, i.e., there exist some universal constant $c,C>0$ such that 
    $$
    c\|k(x,\cdot)\|_{\mcH_k}\leq \|k(x,\cdot)\|_{L_1(P_X)}\leq C\|k(x,\cdot)\|_{\mcH_k}\quad \forall x\in\mbS.
    $$
\end{lemma}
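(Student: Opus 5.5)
The idea is to reduce both inequalities to two scalar quantities, $\|k(x,\cdot)\|_{\mcH_k}=k(x,x)^{1/2}$ and $\|k(x,\cdot)\|_{L_1(P_X)}=\int|k(x,x')|\,\dd P_X(x')$. Each of these is a function of $x$ alone. By the reproducing property, $\|k(x,\cdot)\|_{\mcH_k}^2=\langle k(x,\cdot),k(x,\cdot)\rangle_{\mcH_k}=k(x,x)$. The claim therefore reduces to showing that
$$
h(x):=\frac{\int|k(x,x')|\,\dd P_X(x')}{k(x,x)^{1/2}}
$$
is bounded above and bounded away from zero on $\mbS$. I would establish this by a continuity-plus-compactness argument.

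\textbf{Upper bound.} By Cauchy--Schwarz in $\mcH_k$,
$$
|k(x,x')|=|\langle k(x,\cdot),k(x',\cdot)\rangle_{\mcH_k}|\le k(x,x)^{1/2}k(x',x')^{1/2}.
$$
Integrating over $x'$ gives
$$
\|k(x,\cdot)\|_{L_1(P_X)}\le k(x,x)^{1/2}\sup_{x'\in\mbS}k(x',x')^{1/2}.
$$
The supremum is finite because $x\mapsto k(x,x)$ is continuous on the compact set $\mbS$. So $C=\sup_{\mbS}k(x,x)^{1/2}$ works.

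\textbf{Lower bound.} First, $m:=\min_{\mbS}k(x,x)>0$, since $k(x,x)$ is continuous and positive on a compact set. It then suffices to show that the numerator $N(x):=\int|k(x,x')|\,\dd P_X(x')$ is continuous and strictly positive on $\mbS$; compactness then gives $\min N>0$, and
$$
c=\frac{\min_{\mbS} N}{\max_{\mbS} k^{1/2}}
$$
works.

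\emph{Continuity of $N$.} Because $k$ is uniformly continuous on the compact set $\mbS\times\mbS$, we have $\sup_{x'}|k(x_n,x')-k(x,x')|\to0$ whenever $x_n\to x$. This implies $N(x_n)\to N(x)$.

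\emph{Positivity of $N$ (main obstacle).} Here I would use continuity of $k(x,\cdot)$ together with $k(x,x)>0$. These give a neighborhood $B_x$ of $x$ on which $|k(x,x')|>k(x,x)/2$. If $P_X(B_x)>0$, then $N(x)\ge \tfrac12 k(x,x)\,P_X(B_x)>0$.

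This requires every point of $\mbS$ to have positive-measure neighborhoods, i.e.\ $\mbS$ must be contained in the support of $P_X$. I would make this explicit by taking $\mbS$ to be the support of $X$. This is consistent with the paper's setup, where $\mbS$ is the range of the predictors, and the support is closed and hence compact inside a compact set.

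A point outside the support could not be handled this way. There $k(x,\cdot)$ could in principle vanish $P_X$-a.e. For universal kernels like the Gaussian this cannot happen, since $k>0$ everywhere. As a fallback I would note that for strictly positive kernels, $N(x)=\int k(x,x')\,\dd P_X>0$ holds trivially.

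With positivity established, the continuous positive function $N$ attains a positive minimum on compact $\mbS$. Combining the two bounds gives the uniform equivalence with universal constants $c,C>0$.
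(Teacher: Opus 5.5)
Your argument follows the paper's route. The upper bound comes from Cauchy--Schwarz in $\mcH_k$; the paper uses the slightly sharper constant $C=\int\sqrt{k(y,y)}\,\dd P_X(y)$ in place of your $\sup_{\mbS}k(x,x)^{1/2}$. The lower bound comes from continuity of the ratio $\int|k(x,y)|\,\dd P_X(y)/\sqrt{k(x,x)}$ together with compactness of $\mbS$.

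Where you differ is the positivity step, and there you are more careful than the paper. The paper only asserts that the ratio is positive ``almost everywhere'' and then takes its minimum over $\mbS$, which does not yield a positive minimum.

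The hypothesis you add ($\mbS\subseteq\operatorname{supp}P_X$, or $k>0$ on $\mbS\times\mbS$) is genuinely necessary, not just convenient. Take $\mbS=[0,3]$, $P_X$ uniform on $[0,1]$, $\phi(t)=(2-t)_+$, $\psi(t)=(t-1)_+$, and $k(x,y)=\{\phi(x)\phi(y)+\psi(x)\psi(y)\}e^{-(x-y)^2}$. This kernel is continuous and strictly positive definite (even universal), with $k(x,x)>0$ on $\mbS$. Yet $k(3,\cdot)$ vanishes on $[0,1]$, so $\|k(3,\cdot)\|_{L_1(P_X)}=0<2=\|k(3,\cdot)\|_{\mcH_k}$.

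The same example shows that universality alone would not rescue the statement. It is strict positivity of $k$ (as for the Gaussian) that makes your fallback work, so your phrase ``for universal kernels like the Gaussian'' should be read as ``for everywhere-positive kernels.''
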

\begin{proof}
    Denote $k_x:=k(x,\cdot)\in\mcH_k$. 
    For the upper bound, the reproducing property along with Cauchy-Schwarz inequality  
    $$
    |k(x,y)|=|\langle k_x,k_y\rangle_{\mcH_k}|\leq\|k_x\|_{\mcH_k}\|k_y\|_{\mcH_k}= \sqrt{k(x,x)}\sqrt{k(y,y)}.
    $$
    The integration with respect to $y$ gives 
    $$
    \|k_x\|_{L_1(P_X)}=\int_{\mbS} |k(x,y)|\dd P_Y\leq\sqrt{k(x,x)}\int_{\mbS} \sqrt{k(y,y)}\dd P_Y.
    $$
    Note that since $k$ is continuous and $\mbS$ is compact, we have
    $$
    C:=\int_{\mbS} \sqrt{k(y,y)}<\infty,
    $$
    which follows that 
    $$
    \|k_x\|_{L_1(P_X)}\leq C\sqrt{k(x,x)}=C\|k_x\|_{\mathcal H_k}.
    $$

    For the lower bound, we consider the ratio function 
    $$
    \Phi(x):=\frac{\|k_x\|_{L_1(P_Y)}}{\|k_x\|_{\mcH_k}}=\frac{\int_{\mbS} |k(x,y)|\dd P_Y}{\sqrt{k(x,x)}}. 
    $$
    It can be verified that $\Phi(x)$ is continuous on $\mbS$ and positive almost everywhere since $k$ is continuous on a compact set $\mbS\times\mbS$. Therefore, 
    $$
    c:=\min_{x\in\mbS}\Phi(x)>0, 
    $$
    which completes the proof. 
\end{proof}

\begin{lemma}\label{app_le_02}
    Suppose that $E|\epsilon|<\infty$ and $\sup_{n\geq1}|m_{\mcD_n}-m|\in L_{\infty}(P_X)$ a.e.\ hold. Suppose $\mcH_k\cap \mcF_n(\omega)$ is dense in $\mcF_n(\omega)$ for fixed sample point with respect to $L_1$-norm. Then \eqref{app_eq_11} holds if and only if \eqref{app_eq_12} holds. 
\end{lemma}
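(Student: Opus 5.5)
The plan is to show that, for each fixed sample point $\omega$ outside a null set, the two suprema in \eqref{app_eq_11} and \eqref{app_eq_12} coincide as real numbers; the $O_p(r_n)$ statements are then equivalent. As in the proof of Lemma~\ref{lemma_01}, I would first rewrite the criterion as a linear functional of the direction. Since $X$ is independent of $\mcD_n$ and $E(\varepsilon_n|X,\mcD_n)=-g_n(X)$ with $g_n=m_{\mcD_n}-m$, for $f=f_{\mcD_n(\omega)}$ we get
$$
E\big[\varepsilon_n\{f(X)-E(f(X)|\mcD_n)\}|\mcD_n\big](\omega)=-\int \{g_{n,\omega}(x)-\bar g_{n,\omega}\}f(x)\,\dd P_X(x)=:L_\omega(f),
$$
where $\bar g_{n,\omega}=\int g_{n,\omega}\,\dd P_X$. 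The hypothesis $E|\epsilon|<\infty$ together with $\sup_n|m_{\mcD_n}-m|\in L_\infty(P_X)$ makes every term integrable. It also gives $h_\omega:=g_{n,\omega}-\bar g_{n,\omega}\in L_\infty(P_X)$ with $\|h_\omega\|_{L_\infty}\le 2\|g_{n,\omega}\|_{L_\infty}<\infty$ for almost every $\omega$. Hence $L_\omega$ is a bounded linear functional on $L_1(P_X)$:
$$
|L_\omega(f)-L_\omega(f')|\le \|h_\omega\|_{L_\infty(P_X)}\,\|f-f'\|_{L_1(P_X)}.
$$

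The key step is continuity plus density. Because $\mcH_k\cap\mcF_n(\omega)\subseteq\mcF_n(\omega)$, we immediately get $\sup_{\mcH_k\cap\mcF_n(\omega)}L_\omega\le\sup_{\mcF_n(\omega)}L_\omega$. For the reverse inequality, fix $f\in\mcF_n(\omega)$ and $\eta>0$. By the assumed density, there is $f'\in\mcH_k\cap\mcF_n(\omega)$ with $\|f-f'\|_{L_1}<\eta/\|h_\omega\|_{L_\infty}$, and then $L_\omega(f')>L_\omega(f)-\eta$. Taking the supremum over $f$ and letting $\eta\downarrow0$ shows that the two suprema are equal. The same argument works with absolute values if the statement is read with $|\cdot|$, because $f\in\mcF_n(\omega)$ implies $-f\in\mcF_n(\omega)$.

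To pass from these pointwise identities to the stochastic statements, note that the set of $\omega$ where they fail is contained in the null set where $\|g_{n,\omega}\|_{L_\infty}=\infty$. So the two random suprema agree almost surely for each $n$, and one is $O_p(r_n)$ exactly when the other is. I expect the main obstacle to be measurability: the suprema run over uncountable, $\omega$-dependent classes. I would handle this by observing that each supremum equals a deterministic functional of $g_{n,\omega}$. Indeed, by the duality argument of Lemma~\ref{lemma_01} it equals $\|h_\omega\|_{L_\infty(P_X)}$, which is $\sigma(\mcD_n)$-measurable. With that, the $O_p$ statements are well defined, and no measurable-selection argument is needed.
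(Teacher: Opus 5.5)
Your proposal is correct and follows essentially the same route as the paper. Both treat the criterion as a continuous linear functional on $L_1(P_X)$ and use density of $\mcH_k\cap\mcF_n(\omega)$ in $\mcF_n(\omega)$ to equate the two suprema for each fixed $\omega$; your explicit bound $|L_\omega(f)-L_\omega(f')|\le\|h_\omega\|_{L_\infty(P_X)}\|f-f'\|_{L_1(P_X)}$ supplies the continuity that the paper only asserts from linearity.

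One small correction: $E|\epsilon|<\infty$ alone does not make $\epsilon f(X)$ integrable for every $f$ in the $L_1(P_X)$ unit ball. For example, take $X\sim U[0,1]$, $\epsilon=X^{-1/2}e$ with $e$ Rademacher, and $f(x)=x^{-1/2}/2$. So the vanishing of the $\epsilon$-term should be justified, as the paper does, by the condition $\sup_{f}E\{|\epsilon f(X)|\,|\mcD_n\}<\infty$ of Lemma~\ref{lemma_01}, or by Assumption~\ref{ass_01}, which gives $E(|\epsilon|\,|X)\le C^{1/2}$.
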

\begin{proof}
    We define a map 
    $$
    \Phi_n:L_1(P_X)\to\mbR,\quad \Phi(f_{\mcD_n(\omega)}):=E\big[\varepsilon_n\{f_{\mcD_n}(X)-E(f_{\mcD_n}(X)|\mcD_n)\}|\mcD_{n}\big](\omega). 
    $$
    $\Phi_n$ is well-defined by the proof of Lemma~\ref{lemma_01}. It can be verified that $\Phi_n$ is a continuous linear operator in $L_1(P_X)$ since the expectation is linear. We now prove that, for any fixed $n$ and fixed sample point almost everywhere, it holds 
    \begin{equation}\label{app_eq_13}
        \sup_{f_{\mcD_n}\in\mcF_n}\Phi_n(f_{\mcD_n(\omega)}) \leq \sup_{f_{\mcD_n(\omega)}\in\mcH_k\cap\mcF_n(\omega)}\Phi_n(f_{\mcD_n(\omega)})
    \end{equation}
    For any $f_{\mcD_n}\in\mcF_n$ and fixed sample point $\omega$, there exist a sequence $f_{m,\mcD_n(\omega)}\in \mcH_k\cap\mcF_n(\omega)$ such that $\|f_{\mcD_n(\omega)}-f_{m,\mcD_n(\omega)}\|_{L_1(P_X)}\to0$. 
    By the continuity, it follows that $\Phi_n(f_{m,\mcD_n(\omega)})\to\Phi_n(f_{\mcD_n(\omega)})$. Then, it is followed by 
    $$
    |\Phi_n(f_{\mcD_n(\omega)})|=\lim_{m\to\infty}|\Phi_n(f_{m,\mcD_n(\omega)})|\leq \sup_{g(\omega,\cdot)\in\mcH_k\cap\mcF_n(\omega)}|\Phi_n(g(\omega,\cdot))|.
    $$
    Taking the supreme with respect to $f_{\mcD_n}\in\mcF_n$, we have \eqref{app_eq_13}. Note that since the the other direction of the inequality is obvious, for any fixed $n$ and fixed sample point, it holds that
    $$
    \sup_{f_{\mcD_n}\in\mcF_n}|\Phi_n(f_{\mcD_n(\omega)})| = \sup_{f_{\mcD_n(\omega)}\in\mcH_k\cap\mcF_n(\omega)}|\Phi_n(f_{\mcD_n(\omega)})|.
    $$
    The proof is completed by noting that 
    $$
    \mcH_k\cap\mcF_n(\omega)\subseteq \{f_{\mcD_n(\omega)}\in\mcH_k\mid \|f_{\mcD_n(\omega)}\|_{L_1(P_X)}\leq1\ \text{for every } \omega\}
    \subseteq\mcF_n(\omega).
    $$
\end{proof}

\subsection{Proof of Theorem~\ref{th_02}}\label{app_sec_023}
\begin{proof}
    The proof for the null hypothesis includes two steps. First, we calculate the difference between $T_k$ and $\hat T_k$, where  
    $$
    \hat T_k:=\frac{1}{n_2(n_2-1)}\sum_{i\neq j}^{n_2}\{Y_{2,i}-m(X_{2,i})\}\{Y_{2,j}-m(X_{2,j})\}\hat U(X_{2,i},X_{2,j}), 
    $$    
    Second, we consider the difference between $\hat T_k$ and $\tilde T_k$ with   
    $$
    \tilde T_k:=\frac{1}{n_2(n_2-1)}\sum_{i\neq j}^{n_2}\{Y_{2,i}-m(X_{2,i})\}\{Y_{2,j}-m(X_{2,j})\} U(X_{2,i},X_{2,j}),
    $$
    where $U(X,X'):= k(X,X')-E_X(k(X,X'))-E_{X'}(k(X,X'))+E_{XX'}(k(X,X'))$. 
    The proof is completed by deriving the asymptotic distribution of $\tilde T_k$. 

    \textbf{Step 1. ($T_k$ and $\hat T_k$).}
    A direct calculation yields that 
    \begin{equation}\label{app_eq_17}
        \begin{aligned}
        T_k &= \hat T_k + \frac{1}{n_2(n_2-1)}\sum_{i\neq j}^{n_2}\Delta_{n_1}(X_{2,i})\{Y_{2,j}-m(X_{2,j})\}\hat U(X_{2,i},X_{2,j}) \\
        &\quad\ + \frac{1}{n_2(n_2-1)}\sum_{i\neq j}^{n_2}\Delta_{n_1}(X_{2,j})\{Y_{2,i}-m(X_{2,i})\}\hat U(X_{2,i},X_{2,j})\\
        &\quad\ + \frac{1}{n_2(n_2-1)}\sum_{i\neq j}^{n_2}\Delta_{n_1}(X_{2,i})\Delta_{n_1}(X_{2,j})\hat U(X_{2,i},X_{2,j})\\
        &=: \hat T_k + B_1 + B_2 + B_3. 
    \end{aligned}
    \end{equation}
    Next we calculate $B_1,B_2,B_3$ in order. 
    
    For simplicity, we denote $\epsilon_{2,i}=Y_{2,i}-m(X_{2,i})$, $i=1,\dots,n_2$.  
    Note that since $E(\epsilon_{2,j}|X_{2,i},X_{2,j},\mcD_{n_1})=E(\epsilon_{2,j}|X_{2,j})=0$, it follows that 
    $$
    E(B_1)=E\{E(\tilde B_1|X_{2,i},X_{2,j},\mcD_{n_1})\}=E\{\Delta_{n_1}(X_{2,i})\hat U(X_{2,i},X_{2,j})E(\epsilon_{2,j}|X_{2,i},X_{2,j},\mcD_{n_1})\}=0.
    $$
    For the second moment, it can be bounded by 
    \begin{equation}\label{app_eq_14}
        \begin{aligned}
        E(B_1^2) &= \frac{1}{n_2^2(n_2-1)^2}E\bigg[\bigg\{\sum_{i\neq j}^{n_2}\Delta_{n_1}(X_{2,i})\epsilon_{2,j}\hat U(X_{2,i},X_{2,j})\bigg\}^2\bigg]\\
        &= \frac{1}{n_2^2(n_2-1)^2}E\bigg\{\sum_{i\neq j}^{n_2}\Delta_{n_1}(X_{2,i})^2\epsilon_{2,j}^2\hat U(X_{2,i},X_{2,j})^2\\
        &\ \quad +\sum_{j\neq i_1\neq i_2}^{n_2}\Delta_{n_1}(X_{2,i_1})\Delta_{n_1}(X_{2,i_2})\epsilon_{2,j}^2\hat U(X_{2,i_1},X_{2,j})\hat U(X_{2,i_2},X_{2,j})\bigg\}\\
        &\lesssim E\bigg\{{n_2^{-2}}\Delta_{n_1}(X_{2,i})^2\hat U(X_{2,i},X_{2,j})^2\\
        &\ \quad +{n_2^{-1}}\Delta_{n_1}(X_{2,i_1})\Delta_{n_1}(X_{2,i_2})\hat U(X_{2,i_1},X_{2,j})\hat U(X_{2,i_2},X_{2,j})\bigg\},
        \end{aligned}
    \end{equation} 
    where the last inequality uses 
    \begin{align*}
        E\{\Delta_{n_1}(X_{2,i})^2\epsilon_{2,j}^2\hat U(X_{2,i},X_{2,j})^2\}
    &=E[E\{\Delta_{n_1}(X_{2,i})^2\epsilon_{2,j}^2\hat U(X_{2,i},X_{2,j})^2|X_{2,i},X_{2,j},\mcD_{n_1}\}]\\
    &=E\{\Delta_{n_1}(X_{2,i})^2\hat U(X_{2,i},X_{2,j})^2E(\epsilon_{2,j}^2|X_{2,j})\}\\
    &\lesssim E\{\Delta_{n_1}(X_{2,i})^2\hat U(X_{2,i},X_{2,j})^2\}, 
    \end{align*}
    and similar technique for $E\{\Delta_{n_1}(X_{2,i_1})\Delta_{n_1}(X_{2,i_2})\epsilon_{2,j}^2\hat U(X_{2,i_1},X_{2,j})\hat U(X_{2,i_2},X_{2,j})\}$. 
    Under the null hypothesis, H\"older's inequality gives that
    $$
    E\{\Delta_{n_1}(X_{2,i})^2\hat U(X_{2,i},X_{2,j})^2\}\leq r_{n_1}^2E\{\hat U(X_{2,i},X_{2,j})^2\},
    $$
    and 
    \begin{align*}
        &\ \quad E\{\Delta_{n_1}(X_{2,i_1})\Delta_{n_1}(X_{2,i_2})\hat U(X_{2,i_1},X_{2,j})\hat U(X_{2,i_2},X_{2,j})\}\\
        &\leq E|\Delta_{n_1}(X_{2,i_1})\Delta_{n_1}(X_{2,i_2})\hat U(X_{2,i_1},X_{2,j})\hat U(X_{2,i_2},X_{2,j})|\\
        &\leq r_{n_1}^2E\{\hat U(X_{2,i},X_{2,j})^2\}.
    \end{align*}
    For this direction, we claim that 
    \begin{equation}\label{app_eq_43}
        E\{\hat U(X_{2,i},X_{2,j})^2\}=O(1). 
    \end{equation}
    In fact, denote $\mu(x):=E(k(x,X))$ and $\eta:=E(k(X,X'))$, which follows that $U(x,x')=k(x,x')-\mu(x)-\mu(x')+\eta$. Let $\xi_i:= n_1^{-1}\sum_{l=1}^{n_1}\{k(X_{2,i},X_{1,l})-\mu(X_{2,i})\}$ and $\zeta:= n_1^{-1}(n_1-1)^{-1}\sum_{l\neq m}^{n_1}\{k(X_{1,m},X_{1,l})-\eta\}$. 
    A direct calculation leads to 
    $$
    E(\xi_i)=E\{E(\xi_i|X_{2,i})\}= 0, 
    $$
    and 
    $$
    E(\xi_i^2)=E\{E(\xi_i^2|X_{2,i})\}=\frac{1}{n_1}E\{\Var(k(X_{2,i},X_{1,1}|X_{2,i}))\}\leq\frac{1}{n_1}E\{k(X_{2,i},X_{1,1})^2\} \lesssim n_1^{-1}. 
    $$
    Another similar calculation yields $E(\zeta)=0$ and $E(\zeta^2)=O(n_1^{-2})$. 
    Note that since 
    $$
    E\{U(X_{2,i_1},X_{2,j})U(X_{2,i_2},X_{2,j})\} = E[E\{U(X_{2,i_1},X_{2,j})|X_{2,j}\}\{ U(X_{2,i_2},X_{2,j})|X_{2,j}\}]=0, 
    $$
    and $\hat U(X_{2,i},X_{2,j}) = U(X_{2,i},X_{2,j}) - \xi_i -\xi_j + \zeta$, it holds that \eqref{app_eq_43}. Combining it with \eqref{app_eq_14}, it follows that $E(B_1^2)=O(r_{n_1}^2n_2^{-1})$, and then
    \begin{equation}\label{app_eq_15}
        B_1=O(r_{n_1}n_2^{-1/2}).  
    \end{equation}
    It also applies to $B_2$ by the symmetry. 
    
    Next, we consider the upper bound for $B_3$. A use of the triangle inequality and Cauchy-Schwarz inequality shows that
    \begin{align*}
        E|B_3|&\leq E|\Delta_{n_1}(X_{2,i})\Delta_{n_1}(X_{2,j})\hat U(X_{2,i},X_{2,j})|\\
        &\leq [E\{\Delta_{n_1}(X_{2,i})^2\Delta_{n_1}(X_{2,j})^2\}E\{\hat U(X_{2,i},X_{2,j})^2\}]^{1/2}\\
        &=O(r_{n_1}^2),
    \end{align*}
    where the last inequality uses \eqref{app_eq_43}. It is followed by  
    \begin{equation}\label{app_eq_16}
        B_3=O_p(r_{n_1}^2). 
    \end{equation}

    Combining \eqref{app_eq_17}, \eqref{app_eq_15} and \eqref{app_eq_16} together, we have
    \begin{equation}\label{app_eq_18}
        T_k=\hat T_k + O_p\big(r_{n_1}n_2^{-1/2}+r_{n_1}^2\big). 
    \end{equation}

    \textbf{Step 2. ($\hat T_k$ and $\tilde T_k$).} By the orthogonality of the Hoeffding decomposition, it holds that 
    \begin{equation}\label{app_ep_19}
        E\{\hat U(X_{2,i},X_{2,j})-U(X_{2,i},X_{2,j})\}^2=E(-\xi_i-\xi_j+\zeta)^2=E(\xi_i^2+\xi_j^2+\zeta^2)=O(n_1^{-1}). 
    \end{equation}
    It follows that
    \begin{align*}
        E\{(\hat T_k-\tilde T_k)^2\}&\lesssim \frac{1}{n_2^4}E\bigg[\sum_{i\neq j}^{n_2}\epsilon_{2,i}\epsilon_{2,j}\{U(X_{2,i},X_{2,j})-\hat U(X_{2,i},X_{2,j})\}\bigg]^2\\
        &\overset{(i)}{\lesssim} \frac{1}{n_2^4}E\bigg[\sum_{i\neq j}^{n_2}\epsilon_{2,i}^2\epsilon_{2,j}^2\{U(X_{2,i},X_{2,j})-\hat U(X_{2,i},X_{2,j})\}^2\bigg]\\
        &\overset{(ii)}{\lesssim} \frac{1}{n_2^2}E[\{U(X_{2,i},X_{2,j})-\hat U(X_{2,i},X_{2,j})\}^2]\\
        &\overset{(iii)}{=} O(n_1^{-1}n_2^{-2}),
    \end{align*}
    where Step $(i)$ uses the fact that 
    \begin{align*}
        &E\big[\epsilon_{2,i_1}\epsilon_{2,j_1}\epsilon_{2,i_2}\epsilon_{2,j_2}\{U(X_{2,i_1},X_{2,j_1})-\hat U(X_{2,i_1},X_{2,j_1})\}\{U(X_{2,i_2},X_{2,j_2})-\hat U(X_{2,i_2},X_{2,j_2})\}\big]\\
        =\;& E\big[E\big[\epsilon_{2,i_1}\epsilon_{2,j_1}\epsilon_{2,i_2}\epsilon_{2,j_2}\{U(X_{2,i_1},X_{2,j_1})-\hat U(X_{2,i_1},X_{2,j_1})\}\{U(X_{2,i_2},X_{2,j_2})-\hat U(X_{2,i_2},X_{2,j_2})\}|X_{2,i_1}\big]\big]\\
        =\;& E\big[\epsilon_{2,j_1}\epsilon_{2,i_2}\epsilon_{2,j_2}\{U(X_{2,i_1},X_{2,j_1})-\hat U(X_{2,i_1},X_{2,j_1})\}\{U(X_{2,i_2},X_{2,j_2})-\hat U(X_{2,i_2},X_{2,j_2})\}E[\epsilon_{2,i_1}|X_{2,i_1}]\big]
        =0, 
    \end{align*}
    when $i_1\neq i_2$ and $i_1\neq j_2$, and Step $(ii)$ holds since 
    \begin{align*}
       & E[\epsilon_{2,i}^2\epsilon_{2,j}^2\{U(X_{2,i},X_{2,j})-\hat U(X_{2,i},X_{2,j})\}^2]\\
    =\;&E\big[E[\epsilon_{2,i}^2\epsilon_{2,j}^2\{U(X_{2,i},X_{2,j})-\hat U(X_{2,i},X_{2,j})\}^2|X_{2,i},X_{2,j}]\big]\\
    =\;&E\big[\{U(X_{2,i},X_{2,j})-\hat U(X_{2,i},X_{2,j})\}^2E(\epsilon_{2,i}^2|X_{2,i})E(\epsilon_{2,j}^2|X_{2,j})\big]\\
    \lesssim\;&E\big[\{U(X_{2,i},X_{2,j})-\hat U(X_{2,i},X_{2,j})\}^2\big]. 
    \end{align*}
    Moreover, Step $(iii)$ follows by \eqref{app_ep_19}. Therefore, a use of Markov's inequality yields
    \begin{equation}\label{app_eq_20}
        \hat T_k=\tilde T_k + O_p(n_1^{-1/2}n_2^{-1}). 
    \end{equation}
Combining \eqref{app_eq_18} and \eqref{app_eq_20}, we have 
    $$
    n_2T_k=n_2\tilde T_k + O_p\big(r_{n_1}n_2^{1/2}+ r_{n_1}^2n_2\big) = n_2\tilde T_k + o_p(1).
    $$

    Define
\[
    h(z,z'):=\{y-m(x)\}\{y'-m(x')\}U(x,x'),
\]
where $z=(x,y),z'=(x',y')$. Define the integral operator $\mcT_h:L_2(P_{XY})\to L_2(P_{XY})$ by
\[
    (\mcT_h f)(z):=\int h(z,z')f(z')\,\dd P_{XY}(z').
\]
Next, we prove that $\mcT_h$ is a compact self-adjoint operator on $L_2(P_{XY})$.

Let $Z=(X,Y)$ and $Z'=(X',Y')$ be i.i.d.\ from $P_{XY}$. By a direct calculation, we have
\[
\begin{aligned}
    E\{h(Z,Z')^2\}&=E\left[E(\epsilon^2|X)E\{(\epsilon')^2| X'\}U(X,X')^2\right] \\
    &\leq C^2 E\{U(X,X')^2\}<\infty.
\end{aligned}
\]
Hence $\mcT_h$ is a Hilbert-Schmidt integral operator on $L_2(P_{XY})$, with
\[
    \|\mcT_h\|_{\mathrm{HS}}^2
    =\iint h(z,z')^2\,\dd P_{XY}(z)\dd P_{XY}(z')<\infty.
\]
Consequently, $\mcT_h$ is compact. Since $U$ is symmetric, $h(z',z)=h(z,z')$. For any $f,g\in L_2(P_{XY})$,
\[
\begin{aligned}
    &\ \quad\iint
    |h(z,z')f(z')g(z)|
    \,\dd P_{XY}(z')\dd P_{XY}(z) \\
    &\leq
    \|h\|_{L_2(P_{XY}\otimes P_{XY})}\|f\|_{L_2(P_{XY})}\|g\|_{L_2(P_{XY})}<\infty.
\end{aligned}
\]
Fubini's theorem therefore gives
\[
\begin{aligned}
    \langle \mcT_h f,g\rangle_{L_2(P_{XY})}
    &=
    \iint h(z,z')f(z')g(z)
    \,\dd P_{XY}(z')\dd P_{XY}(z)\\
    &=\iint h(z',z)f(z')g(z)\,\dd P_{XY}(z)\dd P_{XY}(z') \\
    &=\langle f,\mcT_h g\rangle_{L_2(P_{XY})}.
\end{aligned}
\]
Therefore $\mcT_h$ is self-adjoint. Hence, $\mcT_h$ is a compact self-adjoint operator on $L_2(P_{XY})$.
    
Note that since $\tilde T_k$ is a degenerate U-statistic of order two, a direct use of the asymptotic theory for degenerate U-statistic \citep[Section~3.2]{Lee1990UStatisticsTA} yields that
    $$
    n_2\tilde T_k \overset{d}{\to}\sum_{r=1}^\infty\lambda_r(Z_r^2-1),
    $$
    where $Z_r$ are independent standard Gaussian random variables, and $\lambda_r$ are eigenvalues of the operator $\mcT_h$; that is, there exists an orthonormal basis $\{g_r\}_{r=1}^{\infty}$ of $L_2(P_{XY})$ such that
    $$
    \lambda_rg_r(x,y)=\int\{y-m(x)\}\{Y-m(X)\}U(x,X)g_r(X,Y)\;\dd P_{XY}, 
    $$
    which completes the proof. 
\end{proof}

\subsection{Proof of Theorem~\ref{th_power}}\label{app_sec_026}
\begin{proof}
Consider the operator $\mathcal{T}_U$ induced by the kernel $U(X,X')$ defined on $L_2(P_X)$ as 
    $$
    (\mathcal{T}_Uf)(x):=\int U(x,x')f(x')\dd P_X(x').
    $$
Since $k$ is positive definite, the Cauchy-Schwarz inequality gives
\[
    |k(x,x')|^2\leq k(x,x)k(x',x')
\]
for all $x,x'\in\mbS$. Hence, for independent $X,X'\sim P_X$,
\[
\begin{aligned}
    E\{k(X,X')^2\}\leq
    E\{k(X,X)k(X',X')\} =
    \{E k(X,X)\}^2
    <\infty.
\end{aligned}
\]
Moreover, a direct calculation yields
$$    
E\{U(X,X')^2\}\leq16\{E k(X,X)\}^2<\infty.
$$
Therefore, $U\in L_2(P_X\otimes P_X)$. It follows that $\mcT_U$ is a Hilbert-Schmidt integral operator on $L_2(P_X)$, with
\[
    \|\mcT_U\|_{\mathrm{HS}}^2=\iint U(x,x')^2\,\dd P_X(x)\dd P_X(x')<\infty.
\]
Hence $\mcT_U$ is compact. Since $k$ is symmetric, $U(x,x')=U(x',x)$.
For any $f,g\in L_2(P_X)$, by the Cauchy-Schwarz inequality,
\[
\begin{aligned}
    \iint
    |U(x,x')f(x')g(x)|
    \,\dd P_X(x')\dd P_X(x) \leq
    \|U\|_{L_2(P_X\otimes P_X)}
    \|f\|_{L_2(P_X)}\|g\|_{L_2(P_X)}<\infty.
\end{aligned}
\]
Thus Fubini's theorem applies, and
\[
\begin{aligned}
    \langle \mcT_U f,g\rangle_{L_2(P_X)}
    &=\iint U(x,x')f(x')g(x)\,\dd P_X(x')\dd P_X(x) \\
    &=\iint U(x',x)f(x')g(x)\,\dd P_X(x)\dd P_X(x') \\
    &=\langle f,\mcT_U g\rangle_{L_2(P_X)}.
\end{aligned}
\]
Therefore $\mcT_U$ is a compact self-adjoint Hilbert-Schmidt operator. Then, the spectral decomposition is given by
    $$
    U(x,x') = \sum_{j=1}^{\infty} \lambda_{Uj}\phi_{Uj}(x)\phi_{Uj}(x'),
    $$
    where $\{\phi_{Uj}\}_{j=1}^{\infty}$ is an orthonormal basis of $L_2(P_{X})$, and eigenvalues $\lambda_{U1}\geq\lambda_{U2}\geq\dots>0$. 
    Denote $\Delta_{n_1}=m-m_{\mcD_{n_1}}$, $b_{n_1}=E(\Delta_{n_1}(X)|\mcD_{n_1})$ and $\Delta_{n_1}^c = \Delta_{n_1}-b_{n_1}$. The centered error has the orthogonal decomposition
$$
\Delta_{n_1}^c=\sum_{j\geq1}a_{j,n_1}\phi_{Uj},\quad  a_{j,n_1}=\langle{\Delta_{n_1}^c},{\phi_{Uj}}\rangle_{L_2(P_X)}.
$$
Define $\delta_{n_1}^2:=\|\Delta_{n_1}\|^2_{L_2(P_X)}=E\{\Delta_{n_1}(X)^2|\mcD_{n_1}\}$. 

    The proof is divided into three steps: Step~1. Bounding the discrepancy between the conditional means $E(T_k|\mcD_{n_1})$ and 
    $$
    S_c(\Delta):=E\{\Delta_{n_1}(X)\Delta_{n_1}(X')U(X,X')|\mcD_{n_1}\},
    $$
    where $U(X,X')= k(X,X')-E_X(k(X,X'))-E_{X'}(k(X,X'))+E_{XX'}(k(X,X'))$. 
    Step~2. Calculating the lower bound of $S_c(\Delta)$. 
    Step~3. Bounding the conditional variance $\Var(T_k|\mcD_{n_1})$. 

    \textbf{Step~1. (Bounding the discrepancy between the conditional means).} To begin with, a direct calculation shows
    \begin{equation}\label{app_eq_32}
        \begin{aligned}
        &\ \quad E(T_k|\mcD_{n_1})-S_c(\Delta)\\
        &=E\{\Delta_{n_1}(X)\Delta_{n_1}(X')(\hat U(X,X')-U(X,X'))|\mcD_{n_1}\}\\
        &=E[\{\Delta_{n_1}(X)-b_{n_1}\}\{\Delta_{n_1}(X')-b_{n_1}\}(\hat U(X,X')-U(X,X'))|\mcD_{n_1}]\\
        &\ \quad+ 2b_{n_1}E[\{\Delta_{n_1}(X)-b_{n_1}\}(\hat U(X,X')-U(X,X'))|\mcD_{n_1}]\\
        &\ \quad+ b_{n_1}^2E\{\hat U(X,X')-U(X,X')|\mcD_{n_1}\},
    \end{aligned}
    \end{equation}
    where $X,X'$ are independent of $\mcD_{n_1}$. Note that 
    \begin{align*}
        \hat U(X,X')-U(X,X')&=E_X(k(X,X'))-\frac{1}{n_1}\sum_{l=1}^{n_1}k(X',X_{1,l})+E_{X'}(k(X,X'))-\frac{1}{n_1}\sum_{l=1}^{n_1}k(X,X_{1,l})\\
    &\ \quad+\frac{1}{n_1(n_1-1)}\sum_{l\neq m}^{n_1}k(X_{1,m},X_{1,l})-E_{XX'}(k(X,X')), 
    \end{align*}
    which follows that 
    \begin{equation}\label{app_eq_33}
        E[\{\Delta_{n_1}(X)-b_{n_1}\}\{\Delta_{n_1}(X')-b_{n_1}\}(\hat U(X,X')-U(X,X'))|\mcD_{n_1}]=0.
    \end{equation}
    Note also that
    \begin{align*}
        &\ \quad\hat U(X,X')-U(X,X')\\
        &=E_{X'}(k(X,X'))-\frac{1}{n_1}\sum_{m=1}^{n_1}k(X,X_{1,m})+\frac{1}{n_1}\sum_{m=1}^{n_1}E_{X}(k(X,X_{1,m}))-E_{XX'}(k(X,X'))\\
    &\ \quad +E_X(k(X,X'))-\frac{1}{n_1}\sum_{l=1}^{n_1}k(X',X_{1,l})+\frac{1}{n_1(n_1-1)}\sum_{l\neq m}^{n_1}\bigg\{k(X_{1,m},X_{1,l})-E_{X}(k(X,X_{1,m}))\bigg\}\\
    &=:U_1+ U_2. 
    \end{align*}
    Then, Cauchy-Schwarz inequality with some calculations yields that 
    \begin{align*}
        &\ \quad E[\{\Delta_{n_1}(X)-b_{n_1}\}U_1|\mcD_{n_1}] \\
        &= E(\Delta_{n_1}(X)U_1|\mcD_{n_1}) \\
        &= E\bigg(\Delta_{n_1}(X)\bigg\langle E(k(X,\cdot))-k(X,\cdot),\frac{1}{n_1}\sum_{l=1}^{n_1}k(X_{1,l},\cdot)-E(k(X',\cdot))\bigg\rangle_{\mcH_k}|\mcD_{n_1}\bigg)\\
        &\leq \|E[\Delta_{n_1}(X)\{E(k(X,\cdot))-k(X,\cdot)\}|\mcD_{n_1}]\|_{\mcH_k}\bigg\|\frac{1}{n_1}\sum_{l=1}^{n_1}k(X_{1,l},\cdot)-E(k(X',\cdot))\bigg\|_{\mcH_k}
    \end{align*}
    An elementary calculation gives that
    $$
    S_c(\Delta)=\|E[\Delta_{n_1}(X)\{E(k(X,\cdot))-k(X,\cdot)\}|\mcD_{n_1}]\|_{\mcH_k}^2,
    $$
    and 
    $$
    E\bigg\|\frac{1}{n_1}\sum_{l=1}^{n_1}k(X_{1,l},\cdot)-E(k(X',\cdot))\bigg\|_{\mcH_k}^2=O(n_1^{-1}).
    $$
    Another direct calculation yields that
    $$
    E[\{\Delta_{n_1}(X)-b_{n_1}\}U_2|\mcD_{n_1}]=0.
    $$
    Combining the above results together, we have 
    \begin{equation}\label{app_eq_355}
        E[\{\Delta_{n_1}(X)-b_{n_1}\}(\hat U(X,X')-U(X,X'))|\mcD_{n_1}]\leq\sqrt{S_c(\Delta)}\bigg\|\frac{1}{n_1}\sum_{l=1}^{n_1}k(X_{1,l},\cdot)-E(k(X',\cdot))\bigg\|_{\mcH_k}.
    \end{equation}
    Moreover, it also holds that
    \begin{equation}\label{app_eq_34}
    \begin{aligned}
        &\ \quad E\{\hat U(X,X')-U(X,X')|\mcD_{n_1}\}\\
        &=\frac{1}{n_1(n_1-1)}\sum_{l\neq m}^{n_1}k(X_{1,m},X_{1,l})+E_{XX'}(k(X,X'))-\frac{2}{n_1}\sum_{l=1}^{n_1}E_X(k(X,X_{1,l}))=O_p(n_1^{-1}).
    \end{aligned}
    \end{equation}
    Combining \eqref{app_eq_32}--\eqref{app_eq_34} together, we have
    \begin{align*}
        |E(T_k|\mcD_{n_1})-S_c(\Delta)|&\leq 2|b_{n_1}||E\{\Delta_{n_1}(X)(\hat U(X,X')-U(X,X'))|\mcD_{n_1}\}|\\
        &\ \quad+ b_{n_1}^2|E\{\hat U(X,X')-U(X,X')|\mcD_{n_1}\}|,
    \end{align*}
    and 
    \begin{equation}\label{app_eq_35}
        \frac{|E(T_k|\mcD_{n_1})-S_c(\Delta)|}{V_{1,n_1}}=O_p(1),
    \end{equation}
    where $V_{1,n_1}=|b_{n_1}|\sqrt{S_c(\Delta)}n_1^{-1/2}+b_{n_1}^2n_1^{-1}$. 

    \textbf{Step~2. (Calculating the lower bound of $S_c(\Delta)$).}  
    For this direction, a calculation gives
    \begin{align*}
        S_c(\Delta_{n_1})
        &=\iint \Delta_{n_1}(x)\Delta_{n_1}(z)U(x,z)\,\dd P_X(x)\dd P_X(z)\\
        &=\int\Delta_{n_1}(x)\bigg\{\int\Delta_{n_1}(z)U(x,z)\,\dd P_X(z)\bigg\}\dd P_X(x)\\
        &=\langle\Delta_{n_1},{\mcT_U\Delta_{n_1}}\rangle_{L_2(P_X)}.
    \end{align*}
    Since $\Delta_{n_1}(X)=b_{n_1}+\Delta_{n_1}^c(X)$ and $\mcT_Ub_{n_1}=0$, and the operator $\mcT_U$ is self-adjoint, it holds that
    $$
    \langle\Delta_{n_1},{\mcT_U\Delta_{n_1}}\rangle_{L_2(P_X)}=\langle{\Delta_{n_1}^c},{\mcT_U\Delta_{n_1}^c}\rangle_{L_2(P_X)}=\sum_{j\geq1}\lambda_{Uj} a_{j,n_1}^2.
    $$
    Using the centered error has the orthogonal decomposition and Assumption~\ref{ass_03}, 
    $$
        \langle{\Delta_{n_1}^c},{\mcT_U\Delta_{n_1}^c}\rangle_{L_2(P_X)}=\sum_{j\geq1}\lambda_{Uj} a_{j,n_1}^2\geq \lambda_{UJ_{n_1}}\sum_{j\geq1}^{J_{n_1}}a_{j,n_1}^2\geq \lambda_{UJ_{n_1}}(1-\rho_{n_1})\delta_{n_1}^2.
    $$
    Summarizing the results together, we have 
    \begin{equation}\label{app_eq_36}
        S_c(\Delta_{n_1})\geq \lambda_{UJ_{n_1}}(1-\rho_{n_1})\delta_{n_1}^2.
    \end{equation}

    \textbf{Step~3. (Bounding the conditional variance).} 
    Consider the Hoeffding decomposition of $T_k$ conditional on $\mcD_{n_1}$ as 
    \begin{equation}\label{app_eq_37}
        T_k- E(T_k|\mcD_{n_1}) = \frac{2}{n_2}\sum_{i=1}^{n_2}g_n(Z_{2,i})+\frac{1}{n_2(n_2-1)}\sum_{i\neq j}^{n_2}\tilde h_n(Z_{2,i},Z_{2,j})
    \end{equation}
    where $Z_{2,i}=(X_{2,i},Y_{2,i})$, $h_n(z,z'):=(y-m_{\mcD_{n_1}}(x))(y'-m_{\mcD_{n_1}}(x'))\hat U(x,x')$, $g_n(z)=E\{h_n(z,Z)|\mcD_{n_1}\}-E\{h_n(Z,Z')|\mcD_{n_1}\}$ and $\tilde h_n(z,z')=h_n(z,z')-g_n(z)-g_n(z')-E\{h_n(Z,Z')|\mcD_{n_1}\}$, $z=(x,y)$, $z'=(x',y')$. Since the two terms in the decomposition are orthogonal conditional on $\mcD_{n_1}$, it holds that
    \begin{equation}\label{app_eq_38}
        \Var(T_k|\mcD_{n_1})=\frac{4}{n_2}E\{g_n^2(Z)|\mcD_{n_1}\} +\frac{2}{n_2(n_2-1)}E\{\tilde h_n^2(Z,Z')|\mcD_{n_1}\}.
    \end{equation}
    Next, we calculate the two terms in order. To the end, note that
    \begin{align}\label{app_eq_39}
        E\{g_n^2(Z)|\mcD_{n_1}\}\leq E[\{Y-m_{\mcD_{n_1}}(X)\}^2E_{X'}^2\{(Y'-m_{\mcD_{n_1}}(X'))\hat U(X,X')|\mcD_{n_1}\}|\mcD_{n_1}].
    \end{align}
    A calculation follows as 
    \begin{align*}
        E_{X'}\{(Y'-m_{\mcD_{n_1}}(X'))\hat U(X,X')|\mcD_{n_1}\}&=E_{X'}\{\Delta_{n_1}(X')\hat U(X,X')|\mcD_{n_1}\}\\
        &= E_{X'}\{\Delta_{n_1}(X')\hat U_p(X,X')|\mcD_{n_1}\}+b_{n_1}\{\hat U(X,X')-\hat U_p(X,X')\}, 
    \end{align*}
    where 
    $$
        \hat U_p(x,x')=k(x,x')-\frac{1}{n_1}\sum_{l=1}^{n_1}k(x,X_{1,l})-\frac{1}{n_1}\sum_{m=1}^{n_1}k(X_{1,m},x')+\frac{1}{n_1(n_1-1)}\sum_{l\neq m}^{n_1}k(X_{1,m},X_{1,l}). 
    $$
    By Cauchy-Schwarz inequality, we have 
    \begin{align*}
        &\ \quad E_{X'}\{\Delta_{n_1}(X')\hat U_p(X,X')|\mcD_{n_1}\}\\
        &\leq \bigg\|E\bigg[\Delta_{n_1}(X')\bigg\{\frac{1}{n_1}\sum_{l=1}^{n_1}k(X_{1,l},\cdot)-k(X',\cdot)\bigg\}|\mcD_{n_1}\bigg]\bigg\|_{\mcH_k}\bigg\|\frac{1}{n_1}\sum_{l=1}^{n_1}k(X_{1,l},\cdot)-k(X,\cdot)\bigg\|_{\mcH_k}.
    \end{align*}
    By the triangle inequality and Cauchy-Schwarz inequality, we have 
    \begin{align*}
        &\ \quad \bigg\|E\bigg[\Delta_{n_1}(X')\bigg\{\frac{1}{n_1}\sum_{l=1}^{n_1}k(X_{1,l},\cdot)-k(X',\cdot)\bigg\}|\mcD_{n_1}\bigg]\bigg\|_{\mcH_k}\\
        &= \bigg\|E\bigg[\Delta_{n_1}(X')\bigg\{\frac{1}{n_1}\sum_{l=1}^{n_1}k(X_{1,l},\cdot)-E(k(X',\cdot))+E(k(X',\cdot))-k(X',\cdot)\bigg\}|\mcD_{n_1}\bigg]\bigg\|_{\mcH_k}\\
        &\leq \sqrt{S_c(\Delta)}+|b_{n_1}|\bigg\|\frac{1}{n_1}\sum_{l=1}^{n_1}k(X_{1,l},\cdot)-E(k(X,\cdot))\bigg\|_{\mcH_k}
    \end{align*}
    An elementary calculation yields that 
    $$
    \hat U(X,X')-\hat U_p(X,X')=\frac{1}{n_1^2(n_1-1)}\sum_{l\neq m}^{n_1}k(X_{1,m},X_{1,l})-\frac{1}{n_1^2}\sum_{l=1}^{n_1}k(X_{1,l},X_{1,l}).
    $$ 
    Combining these calculations with \eqref{app_eq_39} together, and using Assumptions~\ref{ass_01}--\ref{ass_02}, we have
    \begin{equation}\label{app_eq_40}
        \frac{E\{g_n^2(Z)|\mcD_{n_1}\}}{{S_c(\Delta)}+{b_{n_1}^2}{n_1^{-1}
        +{b_{n_1}^2}{n_1}^{-2}}}=O_p(1). 
    \end{equation}
    Moreover, we also have 
    \begin{equation}\label{app_eq_41}
        E\{\tilde h_n^2(Z,Z')|\mcD_{n_1}\}\leq E\{h_n^2(Z,Z')|\mcD_{n_1}\}=O_p(1).
    \end{equation}
    Combining \eqref{app_eq_38}--\eqref{app_eq_41} yields the upper bound of the conditional variance as
    \begin{equation}\label{app_eq_42}
        \frac{\Var(T_k|\mcD_{n_1})}{V_{2,n_1}^2}=O_p(1),
    \end{equation}
    where 
    $$
    V_{2,n_1}^2= \frac{S_c(\Delta)}{n_2}+\frac{b_{n_1}^2}{n_1n_2}+\frac{1}{n_2^2}. 
    $$

    Finally, we combine the results in \eqref{app_eq_35}, \eqref{app_eq_36}, and \eqref{app_eq_42} to complete the proof. For any $\varepsilon>0$, there exists $C>0$ such that, for all sufficiently large $n_1$,
    $$
    P\left(\frac{\Var(T_k|\mcD_{n_1})}{V_{2,n_1}^2}>C\right)<\frac{\varepsilon}{2}.
    $$
    Hence, by conditional Chebyshev's inequality, for every $M>0$,
    \begin{align*}
    &\ \quad P\left(\frac{\left|T_k-E(T_k|\mcD_{n_1})\right|}{V_{2,n_1}}>M\right)\\
    &\leq P\left(\frac{\Var(T_k|\mcD_{n_1})}{V_{2,n_1}^2}>C\right)+E\left[\mathbbm{1}\left\{\frac{\Var(T_k|\mcD_{n_1})}{V_{2,n_1}^2}\leq C\right\}P\left(\frac{|T_k-E(T_k|\mcD_{n_1})|}{V_{2,n_1}}>M\bigg|\mcD_{n_1}\right)\right]\\
    &\leq \frac{\varepsilon}{2}+\frac{C}{M^2}.
    \end{align*}
Taking $M$ sufficiently large gives
$$
\frac{T_k-E(T_k|\mcD_{n_1})}{V_{2,n_1}}=O_p(1).
$$
Moreover, note that since 
$$
\frac{V_{1,n_1}}{S_c(\Delta)}=o_p(1),\quad\frac{V_{2,n_1}}{S_c(\Delta)}=o_p(1).
$$
It follows that 
\begin{align*}
\frac{T_k-S_c(\Delta)}{S_c(\Delta)}=\frac{T_k-E(T_k|\mcD_{n_1})}{V_{2,n_1}}\frac{V_{2,n_1}}{S_c(\Delta)}+\frac{E(T_k|\mcD_{n_1})-S_c(\Delta)}{V_{1,n_1}}\frac{V_{1,n_1}}{S_c(\Delta)}=o_p(1).
\end{align*}
Thus,
$$
\frac{T_k}{S_c(\Delta)}\overset{p}{\to}1.
$$
Finally, for every fixed $C_1>0$,
\begin{align*}
P(n_2T_k\leq C_1)\leq
P\left(
\left|\frac{T_k}{S_c(\Delta)}-1\right|>\frac12
\right)+P\left(n_2S_c(\Delta)\leq 2C_1\right),
\end{align*}
which converges to $0$ as $n_1\to\infty$. It implies
$n_2T_k\overset{p}{\to}\infty$ and thus the proof is completed. 
\end{proof}

\subsection{Proof of Theorem~\ref{th_03}}\label{app_sec_024}
\begin{proof}
    We consider 
    $$
    \tilde T_k^b:=\frac{1}{n_2(n_2-1)}\sum_{i\neq j}^{n_2}\{Y_{2,i}-m(X_{2,i})\}\{Y_{2,j}-m(X_{2,j})\} U(X_{2,i},X_{2,j})e_{bi}e_{bj}.
    $$
    A decomposition similar to that in the proof of Theorem~\ref{th_02} follows as 
    \begin{equation}\label{app_eq_25}
        T_k^b = \tilde T_k^b + B_1^b+B_2^b+B_3^b+B_4^b. 
    \end{equation}
Here,  
$$
B_1^b:= \frac{1}{n_2(n_2-1)}\sum_{i\neq j}^{n_2}\Delta_{n_1}(X_{2,i})\epsilon_{2,j}\hat U(X_{2,i},X_{2,j})e_{bi}e_{bj}, 
$$
where $\Delta_{n_1}(X_{2,i})=m(X_{2,i})-m_{\mcD_{n_1}}(X_{2,i})$ and $\epsilon_{2,j}=Y_{2,j}-m(X_{2,j})$. $B_2^b$ is defined in a similar fashion. Moreover, 
$$
B_3^b:=\frac{1}{n_2(n_2-1)}\sum_{i\neq j}^{n_2}\Delta_{n_1}(X_{2,i})\Delta_{n_1}(X_{2,j})\hat U(X_{2,i},X_{2,j})e_{bi}e_{bj}. 
$$
$B_4^b$ is defined as 
$$
B_4^b:= \frac{1}{n_2(n_2-1)}\sum_{i\neq j}^{n_2}\epsilon_{2,i}\epsilon_{2,j}\{\hat U(X_{2,i},X_{2,j})-U(X_{2,i},X_{2,j})\}e_{bi}e_{bj}. 
$$
We use $E^*$ and $P^*$ to denote the conditional expectation and probability of a random variable conditioning on $\mcD_{n}$, respectively. A simple calculation leads to 
\begin{align}\label{app_eq_22}
    E^*\big({B_1^b}^2\big) &= \frac{1}{n_2^2(n_2-1)^2}\sum_{i\neq j}^{n_2}\Delta_{n_1}(X_{2,i})\Delta_{n_1}(X_{2,j})\epsilon_{2,i}\epsilon_{2,j}\hat U(X_{2,i},X_{2,j})^2\nonumber\\
    &\quad\ +\frac{1}{n_2^2(n_2-1)^2}\sum_{i\neq j}^{n_2}\Delta_{n_1}(X_{2,i})^2\epsilon_{2,j}^2\hat U(X_{2,i},X_{2,j})^2.
\end{align}
By $E\{\hat U(X_{2,i},X_{2,j})^2\}=O(1)$ in \eqref{app_eq_43} and H\"older's inequality, it holds that 
$$
E|E^*\big({B_1^b}^2\big)|=O\big(r_{n_1}^2n_2^{-2}\big).
$$
By Markov's inequality, the right-hand side of \eqref{app_eq_22} is $O_p\big(r_{n_1}^2n_2^{-2}\big)$ almost everywhere under the null. Similarly, we have $E^*\big({B_2^b}^2\big)=O_p\big(r_{n_1}^2n_2^{-2}\big)$. Another calculation yields that
\begin{align}\label{app_eq_23}
    E^*\big({B_3^b}^2\big) &= \frac{2}{n_2^2(n_2-1)^2}\sum_{i\neq j}^{n_2}\Delta_{n_1}(X_{2,i})^2\Delta_{n_1}(X_{2,j})^2\hat U(X_{2,i},X_{2,j})^2\nonumber\\
    &= O_p\big(r_{n_1}^4n_2^{-2}\big), 
\end{align}
where the last equation uses Markov's inequality and $E\{\hat U(X_{2,i},X_{2,j})^2\}=O(1)$ in \eqref{app_eq_43}, which has been proved in proof of Theorem~\ref{th_02}. 
Moreover, we have 
\begin{align}\label{app_eq_24}
    E^*\big({B_4^b}^2\big) &= \frac{2}{n_2^2(n_2-1)^2}\sum_{i\neq j}^{n_2}\epsilon_{2,i}^2\epsilon_{2,j}^2\{\hat U(X_{2,i},X_{2,j})-U(X_{2,i},X_{2,j})\}^2\nonumber\\
    &= O_p\big(n_1^{-1}n_2^{-2}\big), 
\end{align}
where the last equation holds by \eqref{app_ep_19} along with Markov's inequality. Combining \eqref{app_eq_22}-\eqref{app_eq_24} with \eqref{app_eq_25}, we have 
$$
E^*\{(T_k^b-\tilde T_k^b)^2\} = o_p(n_2^{-2}). 
$$
By Chebyshev's inequality, the difference between $T_k^b$ and $\tilde T_k^b$ is asymptotically negligible in the sense that
\begin{equation*}
    P(P^*(|n_2T_k^b-n_2\tilde T_k^b|\geq \delta_1)\geq\delta_2)\to0,\quad \text{for any } \delta_1>0,\delta_2>0. 
\end{equation*}
The proof is completed by noting that $n_2\tilde T_k^b\overset{D^*}{\to}\sum_{r=1}^\infty\lambda_r(Z_r^2-1)$ by the theory for bootstrapping for U-statistics \citep[Theorem~3.1]{Dehling1994RandomQF}. 
\end{proof}

\subsection{Proof of Theorem~\ref{th_04}}\label{app_sec_025}
\begin{proof}
Recalling the decomposition of \eqref{app_eq_25} in the proof on Theorem~\ref{th_02}, we investigate the asymptotic properties under the alternatives by controlling $B_1^b,B_2^b,B_3^b$. 

Recalling \eqref{app_eq_22}, we have 
\begin{equation}\label{app_eq_26}
\begin{aligned}
    E\Big\{E^*\big({B_1^b}^2\big)|\mcD_{n_1}\Big\} &= \frac{1}{n_2(n_2-1)}E\Big\{\Delta_{n_1}(X_{2,i})^2\epsilon_{2,j}^2\hat U(X_{2,i},X_{2,j})^2|\mcD_{n_1}\Big\}\\
     &\overset{(i)}{\leq} C_1{n_2^{-2}}E\Big\{\Delta_{n_1}(X_{2,i})^2\hat U(X_{2,i},X_{2,j})^2|\mcD_{n_1}\Big\}\\
     &\overset{(ii)}{\leq} \frac{C_2}{n_2^{2}}E\Big\{\Delta_{n_1}(X)^2k(X,X)|\mcD_{n_1}\Big\}+\frac{C_3}{n_2^{2}}E\Big\{\Delta_{n_1}(X)^2|\mcD_{n_1}\Big\}\\
     & = O_p(n_2^{-2}), 
\end{aligned}
\end{equation}
where Step $(i)$ holds by the law of iterated exceptions with the $\sigma$-field generated by $\{\mcD_{n_1},X_{2,i},X_{2,j}\}$. Step $(ii)$ expands the term $\hat U(X_{2,i},X_{2,j})^2$, and uses that $k(x,x')^2\leq k(x,x)k(x',x')$  and $k(x,x)+1\geq 2k^{1/2}(x,x)$ along with some elementary calculations. The same calculation also applies to $B_2^b$ by the symmetry. Moreover, it also holds that 
\begin{equation}\label{app_eq_27}
    \begin{aligned}
    E\Big\{E^*\big({B_3^b}^2\big)|\mcD_{n_1}\Big\} &= \frac{2}{n_2(n_2-1)}E\Big\{\Delta_{n_1}(X_{2,i})^2\Delta_{n_1}(X_{2,j})^2\hat U(X_{2,i},X_{2,j})^2|\mcD_{n_1}\Big\}\\
    &\leq \frac{C_4}{n_2^2}E\Big\{\Delta_{n_1}(X)^2k(X,X)|\mcD_{n_1}\Big\}E\Big\{\Delta_{n_1}(X)^2|\mcD_{n_1}\Big\}\\
    &\quad\ + \frac{C_5}{n_2^2}E^2\Big\{\Delta_{n_1}(X)^2k(X,X)|\mcD_{n_1}\Big\}\\
    &\quad\ + \frac{C_6}{n_2^2}E^2\Big\{\Delta_{n_1}(X)^2|\mcD_{n_1}\Big\}\\
    &= O_p(n_2^{-2}),
\end{aligned}
\end{equation}
where the inequality holds by a similar calculation to \eqref{app_eq_26}. 

Combining \eqref{app_eq_24}, \eqref{app_eq_26}, \eqref{app_eq_27}, and $n_2\tilde T_k^b\overset{D^*}{\to}\sum_{r=1}^\infty\lambda_r(Z_r^2-1)$, we have 
$E^*(n_2^2{T_k^b}^2)=O_p(1)$. A use of the conditional Markov's inequality yields that
$$
P^*(|n_2T_k^b|>M)\leq \frac{E^*(n_2^2{T_k^b}^2)}{M^2}. 
$$
For any $M>0$ and $\delta>0$, by Markov's inequality, it holds that
\begin{equation}\label{app_eq_30}
    \limsup_{n\to\infty}P(P^*(|n_2T_k^b|>M)>\delta)<\frac{1}{M^2\delta}E\{E^*(n_2^2{T_k^b}^2)\}.
\end{equation}
Note that since $E\{E^*(n_2^2{T_k^b}^2)\}<\infty$, taking $M\to\infty$ on the both sides in \eqref{app_eq_30} yields that 
$$\lim_{M\to\infty}\limsup_{n_1,n_2\to\infty}P(P^*(|n_2T_k^b|>M)>\delta)=0,$$
which implies that $n_2T_k^b=O_{P^*}(1)$. 


To further analyze power, we consider 
$$
T_{k,\alpha}^*:= \inf\{t:P^*(n_2T_k^b\leq t)\geq1-\alpha\}. 
$$
For any $M>0$, note that since the event $\{T_{k,\alpha}^*>M\}\subseteq\{P^*(n_2T_k^b>M)>\alpha\}$, we have
\begin{equation}\label{app_eq_28}
    P(T_{k,\alpha}^*>M)\leq P(P^*(n_2T_k^b>M)>\alpha). 
\end{equation}
By the set operation and sub-additivity of the probability, for any $M>0$, it holds that 
\begin{equation}\label{app_eq_29}
P(n_2T_k>T_{k,\alpha}^*)\geq P(n_2T_k>M, T_{k,\alpha}^*\leq M) \geq 1- P(n_2T_k\leq M)-P(T_{k,\alpha}^*>M).
\end{equation}
Invoking Theorem~\ref{th_02}, for any given $\delta_3>0$, there exists $N$ such that for $n>N$, we have $P(n_2T_k\leq M)<\delta_3/2$. By \eqref{app_eq_30}, there exists $M>0$ such that 
$P(P^*(n_2T_k^b>M)>\alpha)<\delta_3/2$. Combining with \eqref{app_eq_28} and \eqref{app_eq_29}, the proof is then completed. 
\end{proof}

\section{Distributions of $T_f$ under Alternatives}\label{app_sec_04}
Under the alternatives, by the proof of Theorem~\ref{th_01} in Section~\ref{app_sec_022}, we have 
$$
\sqrt{n_2}(T_f-A_3-A_4)\overset{d}{\to}\mcN(0,V_f). 
$$
where
$$
A_3 = \frac{1}{n_2}\sum_{i=1}^{n_2}\big\{m(X_{2,i})-m_{\mcD_{n_1}}(X_{2,i})\big\}\big\{f(X_{2,i})-E(f(X))\big\}, 
$$
and 
$$
A_4=\frac{1}{n_2}\sum_{i=1}^{n_2}\big\{m(X_{2,i})-m_{\mcD_{n_1}}(X_{2,i})\big\}\big\{E(f(X))-\hat m_f\big\}.
$$
The limiting distribution of $E_f:=A_3+A_4$ can be established under the additional Lyapounov conditions, which is required in applying the central limit theorem of the triangle array. 

\begin{assumption}\label{app_ass_01}
There exists $\delta_1,\delta_2>0$ such that $n_2^{-\delta_1/2}s_V^{2+\delta_1}E(|V-E_V|^{2+\delta_1}|\mcD_{n_1})=o_p(1)$, where $V=\{m(X)-m_{\mcD_{n_1}}(X)\}\{f(X)-E(f(X))\}$, $E_V=E(V|\mcD_{n_1})$, and $s_V^2= \Var(V|\mcD_{n_1})$, and  
$n_2^{-\delta_2/2}s_U^{2+\delta_2}E(|U-E_U|^{2+\delta_2}|\mcD_{n_1})=o_p(1)$ with 
$U=m(X)-m_{\mcD_{n_1}}(X)$, $E_U=E(U|\mcD_{n_1})$, and $s_U^2= \Var(U|\mcD_{n_1})$.     
\end{assumption}

To facilitate our discussion, we denote $V_i=\{m(X_{2,i})-m_{\mcD_{n_1}}(X_{2,i})\}\{f(X_{2,i})-E(f(X))\}$, and $U_i=m(X_{2,i})-m_{\mcD_{n_1}}(X_{2,i})$ $i=1,\dots,n_2$. A calculation along with the central limit theorem of the triangle array \citep[Corollary~9.5.11]{Ocappe2005} yields that 
$$
\frac{\sqrt{n_2}}{s_V}(A_3-E_V)=\frac{\sqrt{n_2}}{s_V}\bigg(\frac{1}{n_2}\sum_{i=1}^{n_2}V_i-E_V\bigg)\overset{d}{\to}\mcN(0,1). 
$$
It follows that 
\begin{equation}\label{app_eq_08}
    A_3 = E_V + \frac{s_V}{\sqrt{n_2}}Z_3 + o_p(n_2^{-1/2}s_V),
\end{equation}
where $Z_3$ is a standard Gaussian random variable. Similarly, it holds
$$
\frac{\sqrt{n_2}}{s_U}\bigg(\frac{1}{n_2}\sum_{i=1}^{n_2}U_i-E_U\bigg)\overset{d}{\to}\mcN(0,1). 
$$
It follows that 
\begin{equation}\label{app_eq_09}
    A_4 = \bigg\{E_U + \frac{s_U}{\sqrt{n_2}}Z_4 + o_p(n_2^{-1/2}s_U)\bigg\}\bigg\{\frac{\sqrt{\Var(f(X))}}{\sqrt{n_1}}Z_5+o_p(n_1^{-1/2})\bigg\},
\end{equation}
where $Z_4$ and $Z_5$ are asymptotically independent standard Gaussian random variables, and are also independent of $Z_3$. Combining the results in \eqref{app_eq_08} and \eqref{app_eq_09} implies the asymptotic property of $E_f=A_3+A_4$, which follows that, under the alternatives, if $n_1,n_2\to\infty$, it holds that 
    $$
    \sqrt{n_2}(T_f-E_f)\overset{d}{\to}\mcN(0,V_f).
    $$
    
Assume that $E_U=0$, $s_U=O_p(1)$ and $s_V=O_p(1)$. If $m(X)-m_{\mcD_{n_1}}(X)$ is orthogonal to the projection direction $f(X)$ for almost every $\mcD_{n_1}$, then $E_V=0$ almost surely. Consequently, $E_f=O_p(1)$ and $E(E_f)=o_p(1)$, implying that the test is not consistent. Moreover, even when $m(X)-m_{\mcD_{n_1}}(X)$ is not orthogonal to $f(X)$, $E_V$ does not admit a uniform positive lower bound. This is because the alternatives impose a rate condition only on $\inf_x|m(x)-m_{\mcD_{n_1}}(x)|$, without restricting the sign of $m(x)-m_{\mcD_{n_1}}(x)$.

\section{Dimension-Agnostic Results}\label{app_sec_E}
The theoretical results for the kernel-based testing are established by setting the dimension $p$ being fixed. In this section, we allow the distribution of the observations, the dimension of $X$, and the kernel to depend on $n$. 
The derivations require additional assumptions, which are naturally satisfied in the fixed-dimensional setting. Similar conclusions are investigated in \cite{WangXu2022AnAR} and \cite{cyx2025jrssb}. 

\subsection{Limiting Distribution of the Statistic}
With a little abuse of notation, we define $Z_{2,i}:=(X_{2,i},Y_{2,i})$ for $i=1,\dots,n_2$, and 
$$
h(z,z'):=(y-m(x))(y'-m(x'))U(x,x'),
$$
where $z=(x,y)$, $z'=(x',y')$. We first focus on the asymptotic distribution of 
$$
    \tilde T_k=\frac{1}{n_2(n_2-1)}\sum_{i\neq j}^{n_2}h(Z_{2,i},Z_{2,j}).
    $$
Since $E(Y-m(X)| X)=0$, a simple calculation yields that 
$$E\{h(Z_{2,i},Z_{2,j})|Z_{2,i}\}=0\ a.e.$$ 
Let $\sigma^2(x):=E((Y-m(X))^2|X=x)$, and define 
$$
K_n(x,x'):=\sigma(x)U(x,x')\sigma(x').
$$

The associated integral operator $\mathcal{T}_K$ defined on $L_2(P_X)$ as 
    $$
    (\mathcal{T}_Kf)(x):=\int K_n(x,x')f(x')\dd P_X(x').
    $$
Assumption~\ref{ass_01} and $E(k(X,X))<\infty$ imply that
$$
\|K_n\|_{L_2(P_{X}\otimes P_{X})}^2\leq\left[E\{\sigma_n^2(X_{n,1})U_n(X_{n,1},X_{n,1})\}\right]^2<\infty,
$$
where we used $|U_n(x,x')|^2\le U_n(x,x)U_n(x',x')$. Hence $K_n\in L_2(P_{X}\otimes P_{X})$. Consequently, the integral operator $\mathcal T_n$ is compact, self-adjoint, Hilbert-Schmidt, and nonnegative. Then, there exist nonnegative eigenvalues $\kappa_{n1}\geq\kappa_{n2}\geq\dots\geq0$ and an orthonormal basis of $\{\psi_{nr}\}_{r=1}^{\infty}\subset L_2(P_{X})$ such that 
    $$
    K_n(x,x') = \sum_{r=1}^{\infty} \kappa_{nr}\psi_{nr}(x)\psi_{nr}(x'),
    $$
    and
$$
        s_n^2:=\mathbb E\{h(Z_{n,1},Z_{n,2})^2\}=\|K_n\|_{L_2(P_{X,n}\otimes P_{X,n})}^2=\sum_{r=1}^{\infty} \kappa_{nr}^2.
$$
Without loss of generality, suppose $\liminf_{n\to\infty} s_n^2>0$ and $s_n=\Omega(1)$. Let $\lambda_{nr}={\kappa_{nr}}/{s_n}$ for $r\geq1$. Then $\lambda_{n1}\ge\lambda_{n2}\ge\cdots\ge0$ and $\sum_{r\ge1}\lambda_{nr}^2=1$.
We first introduce the following spectral limits assumption. 

\begin{assumption}\label{app_ass_E1}
There exist nonnegative constants $\lambda_1,\lambda_2,\dots$ such that $\lambda_{nr}\to \lambda_r$ for $r\ge1$. 
\end{assumption}

This assumption yields the following result to measure the escaping mass. 

\begin{lemma}\label{app_lemma_E1}
    Suppose Assumption~\ref{app_ass_E1} holds. There a determinist sequence of positive integers $R_n\to\infty$ such that, as $n_2\to\infty$, 
    \begin{equation}\label{app_eq_D01}
        \sum_{r=1}^{R_n}\lambda_{nr}^2\to\sum_{r=1}^\infty \lambda_r^2.
    \end{equation}
\end{lemma}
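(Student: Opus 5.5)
The plan is a diagonal argument built on pointwise convergence of the normalized eigenvalues. First we check that the limit series converges. For every fixed $R$ we have $\sum_{r=1}^R\lambda_{nr}^2\le\sum_{r\ge1}\lambda_{nr}^2=1$. Letting $n\to\infty$ and using Assumption~\ref{app_ass_E1} gives $\sum_{r=1}^R\lambda_r^2\le 1$. Since this holds for every $R$, the limit $L:=\sum_{r=1}^\infty\lambda_r^2$ exists, is finite, and satisfies $L\le1$.

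Next, for each fixed $R$, Assumption~\ref{app_ass_E1} gives a finite-sum convergence. The map $(\lambda_{n1},\dots,\lambda_{nR})\mapsto\sum_{r\le R}\lambda_{nr}^2$ is a finite sum of continuous functions of the coordinates. Hence $a_n(R):=\sum_{r=1}^R\lambda_{nr}^2\to L_R:=\sum_{r=1}^R\lambda_r^2$ as $n\to\infty$. In addition, $L_R\uparrow L$ as $R\to\infty$.

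The main step is choosing $R_n$ so that $R_n\to\infty$ while the finite-sum approximation stays uniform. Choose increasing thresholds $N_1<N_2<\cdots$ such that $|a_n(R)-L_R|\le 1/R$ for all $n\ge N_R$. This holds simultaneously for every $R'\le R$, which is possible by the previous step with a finite maximum over $R'\le R$. Then define the deterministic sequence $R_n:=\max\{R:N_R\le n\}$, with $R_n:=1$ if $n<N_1$. Because each $N_R$ is finite, $R_n$ is nondecreasing and $R_n\to\infty$. Here $n$ indexes the sequence along which $n_2\to\infty$, since the distributions are allowed to depend on $n$.

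Finally, $n\ge N_{R_n}$ by construction, so
$$|a_n(R_n)-L|\le|a_n(R_n)-L_{R_n}|+|L_{R_n}-L|\le R_n^{-1}+(L-L_{R_n})\to0,$$
which is \eqref{app_eq_D01}.

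The only delicate point is the diagonal choice: $R_n$ must grow slowly enough that the approximation is still valid at index $n$. The thresholds $N_R$ handle exactly this, and no tightness condition on the tails $\sum_{r>R}\lambda_{nr}^2$ is needed. Indeed the lemma claims no such condition: the escaping mass $1-L$ may be positive, and it is handled separately later.
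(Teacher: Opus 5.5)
Your proof is correct and follows the paper's diagonal construction: thresholds $N_R$ come from the fixed-$R$ convergence, $R_n$ is the largest admissible index, and a triangle inequality splits off the tail $\sum_{r>R_n}\lambda_r^2$. The only differences are cosmetic: you bound $\sum_r\lambda_r^2\le 1$ through finite partial sums rather than Fatou's lemma, and you index directly by $R$ rather than through the paper's auxiliary sequence $M_j$ with $\sum_{r>M_j}\lambda_r^2\le 1/j$.
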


For $r\ge1$, define
$$
        \xi_{n,i,r}=
        \begin{cases}
        \displaystyle\frac{(Y_{2,i}-m(X_{2,i}))\psi_{nr}(X_{2,i})}{\sigma(X_{2,i})},& \sigma(X_{2,i})>0,\\
        0, & \sigma(X_{2,i})=0.
        \end{cases}
$$
For the tail part and $j=1,\dots,n_2$, define
$$
        G_{n,j}=\frac{\sqrt 2}{(n_2-1)s_n}\sum_{1\leq i<j}\sum_{\ell>R_n}\kappa_{n\ell}\xi_{n,i,\ell}\xi_{n,j,\ell},
$$
with the convention that $G_{n,1}=0$. For every $a\in\mbR^R$ and $b\in\mbR$, define 
$$
        \Delta_{n,j}(a,b):= n_2^{-1/2}a^\top(\xi_{n,j,1},\dots,\xi_{n,j,R})^\top+bG_{n,j}. 
$$
Note that for every nonzero eigenvalue $\kappa_{nr}$, the corresponding eigenfunction satisfies $\psi_{nr}=0$ a.e.\ on $\{\sigma=0\}$. Therefore, $E(\xi_{n,1,r}\xi_{n,1,s})=\mathbbm 1(r=s)$ whenever the corresponding eigenvalues are nonzero. Let $\mathcal F_{n,j}=\sigma(Z_{n,1},\dots,Z_{n,j})$, $\mathcal F_{n,0}=\{\varnothing,\Omega\}$. The following Lindeberg condition is further required to restrict the tail behaviors of the eigenfunctions. 

\begin{assumption}\label{app_ass_E3}
For every fixed $R\geq1$, every $a\in\mbR^R$ and $b\in\mbR$, we have
$$
        \sum_{j=1}^{n_2}E\{\Delta_{n,j}(a,b)^2|\mathcal F_{n,j-1}\}\overset{p}{\to}\|a\|_2^2+b^2\rho^2,
$$
and, for every $\eta>0$,
$$
        \sum_{j=1}^{n_2}E\left[\Delta_{n,j}(a,b)^2\mathbbm 1\{|\Delta_{n,j}(a,b)|>\eta\}|\mathcal F_{n,j-1}\right]\overset{p}{\to}0.
$$
\end{assumption}

The joint Lindeberg condition can also be checked separately for the fixed spectral head and the tail. For the head, a convenient sufficient condition is that, for every fixed $R$ with $\lambda_R>0$, there exist $\delta>0$ and $n_R<\infty$ such that
\[
    \sup_{n\ge n_R}E\|(\xi_{n,1,1},\dots,\xi_{n,1,R})^\top\|_2^{2+\delta}<\infty.
\]
For the tail, the conditional Lyapunov condition
\[
    \sum_{j=1}^{n_2}E(|G_{n,j}|^{2+\delta}|\mcF_{n,j-1})\overset{p}{\to}0
\]
implies the tail Lindeberg condition. It follows from the martingale argument in \cite{Hall1980MartingaleLT} and is implied by ${\lambda_{n,R_n+1}^2}/{\sum_{\ell>R_n}\lambda_{n\ell}^2}\to0$. 
These two separate bounds imply the Lindeberg condition for every linear combination $\Delta_{n,j}(a,b)$, by splitting the event $\{|x+y|>\eta\}$ according as $|x|>\eta/2$ or $|y|>\eta/2$.

The following theorem establishes the limiting distribution of the statistic with a varying kernel. 

\begin{theorem}\label{app_th_01}
Suppose Assumptions~\ref{ass_01}, \ref{app_ass_E1}--\ref{app_ass_E3} hold and $E(k(X,X))<\infty$. Under the null $H_0$, if $n_1,n_2\to\infty$ and $n_2=o(r_{n_1}^{-2})$, we have 
$$
        \frac{n_2 T_k}{\sqrt 2 s_n}
        \overset{d}{\to}\frac{1}{\sqrt 2}\sum_{r=1}^\infty \lambda_r(Z_r^2-1)+\bigg(1-\sum_{r=1}^{\infty}\lambda_r^2\bigg)^{1/2} Z_0,
$$
where $Z_0,Z_1,Z_2,\dots$ are independent standard Gaussian random variables. The infinite series is well-defined in $L_2$.
\end{theorem}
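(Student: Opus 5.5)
The argument has two parts. First, reduce $T_k$ to the oracle statistic $\tilde T_k$ with the true mean and the population-centred kernel $U$, exactly as in Steps~1--2 of the proof of Theorem~\ref{th_02}. Second, prove a martingale CLT for $\tilde T_k$ after normalising by $s_n$.

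\textbf{Reduction.} The bounds \eqref{app_eq_18} and \eqref{app_eq_20} used only Assumption~\ref{ass_01}, $E(k(X,X))<\infty$, $H_0$, and $E\{\hat U^2\}=O(1)$. That last fact follows from $|k(x,x')|^2\le k(x,x)k(x',x')$, and I will keep track of $E k(X,X)$ so that it is controlled relative to $s_n$; I will check that $E\{U^2\}\lesssim s_n^2$ up to the variance bound $C$. This gives
\[
n_2T_k=n_2\tilde T_k+O_p\big(r_{n_1}n_2^{1/2}+r_{n_1}^2n_2+n_1^{-1/2}\big).
\]
Dividing by $s_n=\Omega(1)$ and using $n_2=o(r_{n_1}^{-2})$ shows the remainder is $o_p(1)$. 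So it suffices to find the limit of $n_2\tilde T_k/(\sqrt2 s_n)$.

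\textbf{Spectral split.} Write $h(z,z')=\{y-m(x)\}\{y'-m(x')\}U(x,x')$ and expand $K_n$ in its eigenbasis. Wherever $\sigma>0$ this gives $h(Z_i,Z_j)=\sum_r\kappa_{nr}\xi_{n,i,r}\xi_{n,j,r}$, with the series converging in $L_2$. On $\{\sigma=0\}$ the residual vanishes almost surely, so the identity still holds there. Fix $R$ and, using Lemma~\ref{app_lemma_E1}, take $R_n\to\infty$.

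\emph{Head, $r\le R$.} By the identity $\sum_{i\ne j}\xi_{i,r}\xi_{j,r}=(\sum_i\xi_{i,r})^2-\sum_i\xi_{i,r}^2$ and the law of large numbers (the $\xi$'s have unit variance), the head contribution is
\[
\frac1{\sqrt2}\sum_{r\le R}\lambda_{nr}\big(W_{nr}^2-1\big)+o_p(1),\qquad W_{nr}=n_2^{-1/2}\sum_i\xi_{n,i,r}.
\]
\emph{Middle, $R<r\le R_n$.} Its second moment is $2\sum_{R<r\le R_n}\lambda_{nr}^2+o(1)$. By Assumption~\ref{app_ass_E1} and Lemma~\ref{app_lemma_E1}, this tends to $2\sum_{r>R}\lambda_r^2$, which is small when $R$ is large.

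\emph{Tail, $r>R_n$.} This part equals $n_2^{-1/2}\cdot\frac{n_2}{n_2-1}\sqrt{n_2}\sum_j G_{n,j}$, asymptotically $\sum_jG_{n,j}$. Its variance is $\sum_{\ell>R_n}\lambda_{n\ell}^2\to1-\sum_r\lambda_r^2=:\rho^2$.

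\textbf{Joint CLT.} Both the head sums $n_2^{-1/2}\sum_j\xi_{n,j,r}$ and $\sum_j G_{n,j}$ are sums of martingale differences with respect to $\mathcal F_{n,j}$. Assumption~\ref{app_ass_E3} supplies exactly the conditional-variance and Lindeberg conditions of the martingale CLT \citep{Hall1980MartingaleLT} for every linear combination $\Delta_{n,j}(a,b)$. The Cramér--Wold device then gives $(W_{n1},\dots,W_{nR},\sum_jG_{n,j})\overset{d}{\to}(Z_1,\dots,Z_R,\rho Z_0)$ with all components independent. The continuous mapping theorem yields the limit $\frac1{\sqrt2}\sum_{r\le R}\lambda_r(Z_r^2-1)+\rho Z_0$.

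Finally, let $R\to\infty$ using a standard approximation lemma (e.g.\ Billingsley's Theorem~3.2). The middle block is asymptotically negligible uniformly, and the limiting series converges in $L_2$ because $\sum\lambda_r^2\le1$.

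The main obstacle is the bookkeeping around $R_n$. The middle block's variance must vanish as $R\to\infty$ uniformly in $n$, which is precisely what the escaping-mass Lemma~\ref{app_lemma_E1} provides. The tail variance must also converge to exactly $\rho^2$, consistent with Assumption~\ref{app_ass_E3} (its $b^2\rho^2$ term). In addition, the head/tail cross-covariances must vanish; this follows from orthogonality of the $\psi_{nr}$, since $E(\xi_{n,1,r}\xi_{n,1,\ell})=0$ for $r\ne\ell$.
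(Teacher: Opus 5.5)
Your proposal is correct and follows the paper's proof essentially step for step. The shared steps are: the reduction $n_2T_k=n_2\tilde T_k+o_p(1)$ from Steps~1--2 of Theorem~\ref{th_02} together with $s_n=\Omega(1)$; the head/middle/tail spectral split at $R$ and $R_n$ (Lemma~\ref{app_lemma_E1}); a joint martingale CLT for $(W_{n1},\dots,W_{nR},\sum_jG_{n,j})$ from Assumption~\ref{app_ass_E3} via Cram\'er--Wold; a Chebyshev bound on the middle block; and letting $R\to\infty$.

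Two small points need fixing. First, in this triangular-array setting the head law of large numbers $n_2^{-1}\sum_i\xi_{n,i,r}^2\overset{p}{\to}1$ does not follow from unit variance alone. Take it, as the paper does, from the Lindeberg part of Assumption~\ref{app_ass_E3} with $(a,b)=(e_r,0)$. Second, your side check $E\{U^2\}\lesssim s_n^2$ does not follow from Assumption~\ref{ass_01}, which bounds $\sigma^2$ only from above and so gives $s_n^2\le C^2E\{U^2\}$ rather than the reverse. It is unnecessary anyway, since the reduction only needs $E\{\hat U^2\}=O(1)$ and $s_n=\Omega(1)$.
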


\subsection{Validity of the Wild Bootstrap}
Let $e_1,\dots,e_{n_2}$ be i.i.d.\ sub-Gaussian random variables with $E(e_1)=0$ and $\Var(e_1)=1$, independent of the data, and define
$$
        \tilde T_k^*=
        \frac{1}{n_2(n_2-1)}\sum_{1\le i\ne j\le n_2}h(Z_{2,i},Z_{2,j})e_ie_j .
$$
Let
$$
        \mathbf H_n=\left(\frac{\mathbbm 1(i\ne j)h(Z_{2,i},Z_{2,j})}{n_2-1}\right)_{1\le i,j\le n_2}.
$$
Then $n_2\tilde T_k^*=e^\top\mathbf H_n e$ with $e=(e_1,\dots,e_{n_2})^\top$. 
Define
$$
        \hat s_n^2=\frac{1}{n_2(n_2-1)}\sum_{1\le i\ne j\le n_2}h(Z_{2,i},Z_{2,j})^2 .
$$
Since $\mathbf H_n$ has zero diagonal, it holds that 
$$
        \operatorname{tr}(\mathbf H_n)=0,\quad
        \|\mathbf H_n\|_F^2=\frac{n_2}{n_2-1}\hat s_n^2, 
$$
where $\|\cdot\|_F$ denotes the Frobenius norm of a matrix. Let $\hat\mu_{n1}\ge\hat\mu_{n2}\ge\dots\ge\hat\mu_{nn_2}$ be the eigenvalues of $\mathbf H_n$ in non-increasing order, and let $\hat v_{n1},\dots,\hat v_{nn_2}$ be corresponding orthonormal eigenvectors. 
be a spectral decomposition. On the event $\widehat s_n>0$, we denote $\hat\lambda_{n\ell}={\hat\mu_{n\ell}}/{\hat s_n}$.
We first introduce two assumptions required in the validity of the wild bootstrap. 

\begin{assumption}\label{app_ass_E4}
${\hat s_n}/{s_n}\overset{p}{\to}1$ and $\hat\lambda_{nr}\overset{p}{\to}\lambda_r$ for fixed $r\ge1$. 
\end{assumption}

Assumption~\ref{app_ass_E4} entails the convergence of empirical spectral approximations. The Hoeffding decomposition shows that the sufficient conditions
\[
    \frac{E\{E\{h_n(Z_{n,1},Z_{n,2})^2|Z_{n,1}\}^2\}}{n_2s_n^4}\to0,
    \quad
    \frac{E\{h_n(Z_{n,1},Z_{n,2})^4\}}{n_2^2s_n^4}\to0
\]
imply $\hat s_n/s_n\to_p1$. Moreover, a sufficient condition for the empirical spectral approximation $\hat\lambda_{nr}\overset{p}{\to}\lambda_r$ is the empirical spectrum converges to the population one in $L_2$ metric, which particularly holds for fixed Hilbert–Schmidt kernel; see \cite{KoltchinskiiGine2000}. This assumption indicates the following result. 

\begin{lemma}\label{app_lemma_E2}
    Suppose Assumption~\ref{app_ass_E4} holds. There a determinist sequence of positive integers $J_n\to\infty$ with $J_n\leq n_2$ such that, as $n_2\to\infty$, 
    \begin{equation}\label{app_eq_D02}
        \sum_{\ell=1}^{J_n}\hat\lambda_{n\ell}^2\overset{p}{\to}
        \sum_{\ell=1}^\infty\lambda_\ell^2,\quad
        \sum_{\ell>J_n}^{n_2}\hat\lambda_{n\ell}^2\overset{p}{\to}\rho^2.
    \end{equation}
\end{lemma}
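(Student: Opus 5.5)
The plan is a diagonal (slowly growing truncation) argument built on the fixed-$r$ convergence in Assumption~\ref{app_ass_E4}, combined with the exact Frobenius identity for $\mathbf H_n$. Throughout I read $\rho^2$ as $1-\sum_{r\ge1}\lambda_r^2$, which is the quantity appearing in Theorem~\ref{app_th_01}.

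\textbf{Step 1 (the limit is well defined).} Since $\sum_{r}\lambda_{nr}^2=1$ for every $n$, Fatou's lemma for series together with $\lambda_{nr}\to\lambda_r$ (Assumption~\ref{app_ass_E1}) gives $\sum_r\lambda_r^2\le 1$. Hence $\rho^2\ge0$, and the tail $\tau_J:=\sum_{r>J}\lambda_r^2$ tends to $0$ as $J\to\infty$.

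\textbf{Step 2 (fixed truncation level).} Fix $J$. Each $\hat\lambda_{n\ell}\overset{p}{\to}\lambda_\ell$, and the map $x\mapsto x^2$ is continuous. Since $\sum_{\ell\le J}\hat\lambda_{n\ell}^2$ is a finite sum, it follows that
$$
D_n(J):=\Big|\sum_{\ell=1}^{J}\hat\lambda_{n\ell}^2-\sum_{\ell=1}^{J}\lambda_\ell^2\Big|\overset{p}{\to}0 .
$$
This holds on the event $\hat s_n>0$, whose probability tends to one because $\hat s_n/s_n\overset{p}{\to}1$ and $s_n=\Omega(1)$.

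\textbf{Step 3 (choosing $J_n$).} For each $k\ge1$, pick $N_k$ such that $P\{D_n(k)>1/k\}<1/k$ and $n_2\ge k$ for all $n\ge N_k$. Make the $N_k$ strictly increasing. Then define the deterministic sequence $J_n:=k$ for $N_k\le n<N_{k+1}$, and $J_n:=1$ for $n<N_1$. By construction $J_n\to\infty$ and $J_n\le n_2$. Moreover $P\{D_n(J_n)>1/J_n\}<1/J_n\to0$, so $D_n(J_n)\overset{p}{\to}0$. Combining with Step 1,
$$
\Big|\sum_{\ell\le J_n}\hat\lambda_{n\ell}^2-\sum_{\ell\ge1}\lambda_\ell^2\Big|\le D_n(J_n)+\tau_{J_n}\overset{p}{\to}0,
$$
which is the first claim in \eqref{app_eq_D02}.

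\textbf{Step 4 (the tail).} Since $\mathbf H_n$ is symmetric, $\sum_{\ell=1}^{n_2}\hat\mu_{n\ell}^2=\|\mathbf H_n\|_F^2=\frac{n_2}{n_2-1}\hat s_n^2$. Dividing by $\hat s_n^2$ gives the exact identity $\sum_{\ell=1}^{n_2}\hat\lambda_{n\ell}^2=n_2/(n_2-1)\to1$. Subtracting the head sum from Step 3 yields
$$
\sum_{\ell>J_n}^{n_2}\hat\lambda_{n\ell}^2\overset{p}{\to}1-\sum_\ell\lambda_\ell^2=\rho^2 .
$$

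\textbf{Main obstacle.} The only delicate point is Step 3: the truncation must be deterministic, monotone and at most $n_2$, while still converging uniformly enough that the random head sum tracks the full limiting sum. The diagonal choice of $N_k$ handles this. One should also check that the indexing of $\hat\lambda_{n\ell}$ used in Assumption~\ref{app_ass_E4} (by signed eigenvalue order) is the same indexing used in the sums. This causes no difficulty, because the argument only uses index-wise convergence and the total sum of squares, which does not depend on the ordering.
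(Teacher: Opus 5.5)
Your proof is correct and follows essentially the same route as the paper. Both choose a deterministic truncation level $J_n\le n_2$ diagonally from the fixed-level convergence $\sum_{\ell\le J}\hat\lambda_{n\ell}^2\overset{p}{\to}\sum_{\ell\le J}\lambda_\ell^2$, then use the exact identity $\sum_{\ell=1}^{n_2}\hat\lambda_{n\ell}^2=n_2/(n_2-1)$ for the tail; you differ only cosmetically by indexing the truncation levels by the integers $k$ and letting $\tau_{J_n}\to0$ absorb the tail (the paper preselects $M_j$ with tail at most $1/j$), and by checking $\sum_r\lambda_r^2\le1$ explicitly via Fatou rather than borrowing it from the argument for Lemma~\ref{app_lemma_E1}.
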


\begin{assumption}\label{app_ass_E5}
For every fixed $R\ge1$, conditionally on the data,
$$
        \left(\hat v_{n1}^{\top}e,\dots,\hat v_{nR}^{\top}e,\,
        \frac{1}{\sqrt 2}\sum_{\ell>J_n}^{n_2}\hat\lambda_{n\ell}\{(\hat v_{n\ell}^{\top}e)^2-1\}\right)\overset{D^*}{\to}(G_1,\dots,G_R,\rho G_0),
$$
where $G_0,G_1,\dots,G_R$ are independent standard Gaussian random variables. 
\end{assumption}

Assumption~\ref{app_ass_E5} is required for Rademacher multipliers, while the replacement of $(\hat v_{n\ell}^{\top}e)$ by independent standard Gaussian variables is exact only for Gaussian multipliers. For Rademacher multipliers, a standard sufficient condition for the finite-dimensional part is that, for $R\ge1$, it holds
$$
        \max_{1\le i\le n_2}\sum_{\ell=1}^{R}\hat v_{n\ell,i}^2\overset{p}{\to}0.
$$
For the joint convergence with the quadratic form, let
$$
        \mathbf A_{n,J_n}=\sum_{\ell>J_n}^{n_2}\hat\mu_{n\ell}\hat v_{n\ell}\hat v_{n\ell}^{\top},\quad \mathbf A_{n,J_n}^c=\mathbf A_{n,J_n}-\operatorname{diag}(\mathbf A_{n,J_n}).
$$
It is sufficient to verify
$$
        \frac{\|\mathbf A_{n,J_n}^c\|_F^2}{\hat s_n^2}
        \overset{p}{\to}\rho^2,
        \quad
        \frac{\|\mathbf A_{n,J_n}^c\|_{\mathrm{op}}}
             {\|\mathbf A_{n,J_n}^c\|_F}
        \overset{p}{\to}0. 
$$
Then the joint convergence holds by verifying the bounded-degree low-influence invariance principle \citep{MosselODonnellOleszkiewicz2010} and the limiting theorem for vectors of multiple Gaussian integrals \citep{PeccatiTudor2005}.

The following theorem establishes the validity results of wild bootstrap with varying kernel. 

\begin{theorem}\label{app_th_02}
Suppose Assumptions~\ref{ass_01}, \ref{app_ass_E1}--\ref{app_ass_E5} hold and $E(k(X,X))<\infty$. Under the null $H_0$, if $n_1,n_2\to\infty$ and $n_2=o(r_{n_1}^{-2})$, we have
$$
        \sup_{t\in\mbR}
        \left|P^*\left(\frac{n_2 T_k^b}{\sqrt 2\hat s_n}\le t\right)-P\left(\frac{n_2\tilde T_k}{\sqrt 2 s_n}\le t\right)\right|\overset{p}{\to}0 ,
$$
where $P^*(\cdot)=P(\cdot| \mcD_{n})$.
\end{theorem}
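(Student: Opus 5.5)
The plan has three stages. First, reduce the bootstrap statistic $T_k^b$ to its oracle counterpart $\tilde T_k^*$, built from the true residuals $\epsilon_{2,i}=Y_{2,i}-m(X_{2,i})$ and the population kernel $U$. Second, identify the conditional limit of $n_2\tilde T_k^*/(\sqrt2\hat s_n)$ through the spectral decomposition of $\mathbf H_n$, and show it coincides with the limit law of $n_2\tilde T_k/(\sqrt2 s_n)$ from Theorem~\ref{app_th_01}. Third, upgrade convergence in distribution to uniform (Kolmogorov) closeness via continuity of the limiting law.

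\textbf{Step 1 (reduction).} Decompose as in \eqref{app_eq_25}:
\[
T_k^b=\tilde T_k^b+B_1^b+B_2^b+B_3^b+B_4^b .
\]
The bounds \eqref{app_eq_22}--\eqref{app_eq_24} only use second moments of the multipliers. For sub-Gaussian $e_i$ with unit variance they give
\[
E^*\{(n_2T_k^b-n_2\tilde T_k^b)^2\}=O_p\big(r_{n_1}^2+r_{n_1}^4+n_1^{-1}\big).
\]
Here $E\{\hat U^2\}$ must be controlled via $E(k(X,X))<\infty$ uniformly in $n$, as in \eqref{app_eq_43}. Dividing by $\hat s_n$, which is bounded away from zero in probability by Assumption~\ref{app_ass_E4} and $\liminf s_n^2>0$, the difference is $o_{P^*}(1)$ in probability. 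Slutsky's lemma then lets us replace $\hat s_n$ by $s_n$, or keep $\hat s_n$, as convenient.

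\textbf{Step 2 (spectral representation).} Write $n_2\tilde T_k^*=e^\top\mathbf H_n e=\sum_{\ell\le n_2}\hat\mu_{n\ell}(\hat v_{n\ell}^\top e)^2$. Since $\operatorname{tr}(\mathbf H_n)=0$, this equals $\sum_\ell\hat\mu_{n\ell}\{(\hat v_{n\ell}^\top e)^2-1\}$. After dividing by $\sqrt2\hat s_n$, split the sum at a fixed $R$, the range $R<\ell\le J_n$, and the range $\ell>J_n$, with $J_n$ from Lemma~\ref{app_lemma_E2}.

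1. The head $\ell\le R$ converges conditionally to $2^{-1/2}\sum_{r\le R}\lambda_r(G_r^2-1)$. This follows from Assumption~\ref{app_ass_E5}, Assumption~\ref{app_ass_E4} ($\hat\lambda_{nr}\to\lambda_r$), and the continuous mapping theorem.
2. The tail $\ell>J_n$ converges jointly to $\rho G_0$, again by Assumption~\ref{app_ass_E5}. Here $\rho^2=1-\sum_r\lambda_r^2$, which is consistent with $\sum_\ell\hat\lambda_{n\ell}^2\to1$ and Lemma~\ref{app_lemma_E2}.
3. The middle block $R<\ell\le J_n$ must be shown to have conditional second moment bounded by $C\sum_{R<\ell\le J_n}\hat\lambda_{n\ell}^2$. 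By \eqref{app_eq_D02} and $\hat\lambda_{n\ell}\to\lambda_\ell$ for fixed $\ell$, this is close to $\sum_{\ell>R}\lambda_\ell^2$, which vanishes as $R\to\infty$.

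Item 3 is the main obstacle. For Rademacher or sub-Gaussian $e$, the variables $\hat v_{n\ell}^\top e$ are not independent across $\ell$, so the moment bound cannot be read off coordinatewise. I would instead rewrite the middle block as the quadratic form $e^\top \mathbf B e-E^*(e^\top\mathbf B e)$ with $\mathbf B=\sum_{R<\ell\le J_n}\hat\mu_{n\ell}\hat v_{n\ell}\hat v_{n\ell}^\top$. Its conditional variance is bounded by $C\|\mathbf B\|_F^2$ using only fourth moments of $e_1$, and $\|\mathbf B\|_F^2=\hat s_n^2\sum_{R<\ell\le J_n}\hat\lambda_{n\ell}^2$. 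A standard approximation argument (Billingsley's Theorem 3.2: first $n\to\infty$, then $R\to\infty$) then gives conditional convergence in probability to
\[
\mathcal L:=2^{-1/2}\sum_r\lambda_r(Z_r^2-1)+\rho Z_0 ,
\]
where the series converges in $L_2$.

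\textbf{Step 3 (uniform closeness).} By Theorem~\ref{app_th_01}, $n_2\tilde T_k/(\sqrt2 s_n)\overset{d}{\to}\mathcal L$. The law of $\mathcal L$ is continuous: either $\rho>0$, so it has a Gaussian component, or some $\lambda_r>0$, so it has a nondegenerate chi-square component, since $\sum_r\lambda_r^2+\rho^2=1$. Pólya's theorem therefore turns both convergences into uniform convergence of distribution functions. Applied to the bootstrap law, this holds in probability via a subsequence argument. The triangle inequality then yields the stated supremum bound.
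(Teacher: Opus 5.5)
Your proposal is correct and follows the paper's proof essentially step for step. You reduce $T_k^b$ to the oracle $\tilde T_k^*$ using the bounds from the proof of Theorem~\ref{th_03}, and write $e^\top\mathbf H_n e$ in its trace-zero spectral form split into head, middle block $R<\ell\le J_n$ and tail. You control the middle block by the quadratic-form variance bound $\lesssim\|\mathbf B\|_F^2$ (the paper's $\mathbf A_{n,R}$), and finish with continuity of the limit law, P\'olya's theorem (Lemma~2.11 of van der Vaart) and the triangle inequality. The differences are minor: you allow general sub-Gaussian multipliers through a fourth-moment variance bound, whereas the paper uses the Rademacher identity $2\sum_{i\ne j}A_{ij}^2$. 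You also invoke continuity of $\mathcal L$ explicitly before passing from the $o_{P^*}(1)$ reduction to Kolmogorov distance, a step the paper's first display glosses over.
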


To end the discussion, we remark that the proofs of Theorem~\ref{th_power} in Section~\ref{app_sec_026} and Theorem~\ref{th_04} in Section~\ref{app_sec_025} also apply to varying kernel settings.

\subsection{Technical Proofs}
\subsubsection{Proof of Lemma~\ref{app_lemma_E1}}
\begin{proof}
In fact, Fatou's lemma gives that
$$
     \sum_{r=1}^{\infty}\lambda_r^2
    =\sum_{r=1}^{\infty}\liminf_{n\to\infty}\lambda_{nr}^2\leq\liminf_{n\to\infty}\sum_{r=1}^{\infty}\lambda_{nr}^2=1. 
$$
Take a strictly increasing sequence of positive integers $M_j\uparrow\infty$ such that 
$$
    \sum_{r>M_j}^{\infty}\lambda_{r}^2\leq \frac{1}{j}. 
$$
For every fixed $M_j$, Assumption~\ref{app_ass_E1} yields that 
$$\sum_{r=1}^{M_j}\lambda_{nr}^2\to\sum_{r=1}^{M_j} \lambda_r^2.$$
Therefore, we choose a strictly increasing sequence of
positive integers $N_j\uparrow\infty$, with $N_j\geq j$, such that for $n_2\geq N_j$, 
$$
\bigg|\sum_{r=1}^{M_j}\lambda_{nr}^2-\sum_{r=1}^{M_j} \lambda_r^2\bigg|\leq \frac{1}{j}. 
$$
For $n\geq N_1$, define
\[
    j(n):=\max\{j\geq1:N_j\leq n\},\quad R_n:=M_{j(n)},
\]
and define $R_n=1$ for $n<N_1$. Since $N_j\uparrow\infty$, we have $j(n)\to\infty$, and then $R_n=M_{j(n)}\to\infty$. Moreover,
$$
\bigg|\sum_{r=1}^{R_n}\lambda_{nr}^2-\sum_{r=1}^\infty\lambda_{nr}^2\bigg|
\leq \bigg|\sum_{r=1}^{M_{j(n)}}\lambda_{nr}^2-\sum_{r=1}^{M_{j(n)}}\lambda_{r}^2\bigg|+\sum_{r>M_{j(n)}}^{\infty}\lambda_{r}^2\leq\frac{2}{j(n)},
$$
which converges to $0$. This completes the proof of \eqref{app_eq_D01}. 
\end{proof}

\subsubsection{Proof of Theorem~\ref{app_th_01}}
\begin{proof}

In the proof of Theorem~\ref{th_02} in Section~\ref{app_sec_023}, we have proved that 
$$n_2T_k=n_2\tilde T_k+o_p(1).$$
Since $s_n=\Omega(1)$ in high- and fixed-dimensional settings, it sufficient to show that, as $n_2\to\infty$, 
$$
        \frac{n_2\tilde T_k}{\sqrt 2 s_n}\overset{d}{\to}\frac{1}{\sqrt 2}\sum_{r=1}^\infty \lambda_r(Z_r^2-1)+\rho Z_0. 
$$

The $L_2$-spectral expansion gives, for $i\ne j$,
$$
        h(Z_{2,i},Z_{2,j})=\sum_{r=1}^\infty\kappa_{nr}\xi_{n,i,r}\xi_{n,j,r},
$$
where the equality holds in $L_2$. Hence
$$
\begin{aligned}
        \frac{n_2\tilde T_k}{\sqrt 2 s_n}=\frac{1}{\sqrt 2 s_n(n_2-1)}\sum_{1\leq i\neq j\leq n_2}h(Z_{2,i},Z_{2,j})                                     
        =:\frac{n_2}{n_2-1}\frac{1}{\sqrt 2}\sum_{r=1}^\infty\lambda_{nr}A_{nr},
\end{aligned}
$$
where
$$
        A_{nr}=\frac{1}{n_2}\left\{\left(\sum_{i=1}^{n_2}\xi_{n,i,r}\right)^2-\sum_{i=1}^{n_2}\xi_{n,i,r}^2\right\}.
$$
Fix $R\geq1$. By taking $(a,b)=(e_r,0)$ and $(a,b)=(e_r+e_s,0)$ in Assumption~\ref{app_ass_E3} and some calculations, for every fixed $R\geq1$, it holds that
\begin{equation}\label{app_E_eq_04}
    \max_{1\le r,s\le R}\left|\frac{1}{n_2}\sum_{i=1}^{n_2}\xi_{n,i,r}\xi_{n,i,s}
    -\mathbbm 1(r=s)\right|\overset{p}{\to}0,
\end{equation} 
where $e_r$ denotes an $R$-dimensional vector with the $r$-th element being $1$ and the others being $0$. The martingale central limit theorem along with some calculations implies
$$
    \left(n_2^{-1/2}\sum_{i=1}^{n_2}\xi_{n,i,1},\dots,n_2^{-1/2}\sum_{i=1}^{n_2}\xi_{n,i,R},\sum_{j=1}^{n_2}G_{n,j}\right)\overset{d}{\to}(Z_1,\dots,Z_R,\rho Z_0),
$$
where $Z_0,Z_1,\dots,Z_R$ are independent standard Gaussian variables and $\rho^2=1-\sum_{r=1}^{\infty}\lambda_r^2$. Combining this with \eqref{app_E_eq_04},  Assumption~\ref{app_ass_E1} and Slutsky's theorem yields
$$
        \frac{n_2}{n_2-1}\frac{1}{\sqrt 2}\sum_{r=1}^R\lambda_{nr}A_{nr}
        +\sum_{j=1}^{n_2}G_{n,j}\overset{d}{\to}\frac{1}{\sqrt 2}\sum_{r=1}^R\lambda_r(Z_r^2-1)+\rho Z_0 .
$$
Here, the tail martingale is exactly the spectral tail, i.e., 
$$
    \sum_{j=1}^{n_2}G_{n,j}=\frac{n_2}{n_2-1}\frac{1}{\sqrt2}\sum_{r>R_n}\lambda_{nr}A_{nr}.
$$

It remains to show that the intermediate block $R<r\le R_n$ is negligible as $R\to\infty$. For $r,s\ge1$,
$$
        E(A_{nr}A_{ns})=\frac{2(n_2-1)}{n_2}\mathbbm 1(r=s).
$$
Therefore, by the orthogonality of eigenfunctions,
$$
\begin{aligned}
        E\left[\left\{\frac{n_2}{n_2-1}\frac{1}{\sqrt 2}\sum_{r=R+1}^{R_n}\lambda_{nr}A_{nr}\right\}^2\right]&=\frac{n_2}{n_2-1}\sum_{r=R+1}^{R_n}\lambda_{nr}^2 .
\end{aligned}
$$
For every fixed $R$, Assumption~\ref{app_ass_E1} with \eqref{app_eq_D01} gives
$$
\sum_{r=R+1}^{R_n}\lambda_{nr}^2\to \sum_{r>R}\lambda_r^2,
$$
which is followed by 
$$
        \limsup_{n\to\infty}E\left[\left\{\frac{n_2}{n_2-1}\frac{1}{\sqrt 2}\sum_{r=R+1}^{R_n}\lambda_{nr}A_{nr}\right\}^2\right]=\sum_{r>R}\lambda_r^2 .
$$
Therefore, for every $\varepsilon>0$, Chebyshev's inequality implies
    $$
    \limsup_{n\to\infty}P\bigg(\bigg|\frac{n_2}{n_2-1}\frac{1}{\sqrt 2}\sum_{r=R+1}^{R_n}\lambda_{nr}A_{nr}\bigg|>\varepsilon\bigg)\leq
    \frac{1}{\varepsilon^2}\sum_{r>R}\lambda_r^2.
    $$
Since $\sum_{r=1}^{\infty}\lambda_r^2<\infty$, it follows that
    $$
    \lim_{R\to\infty}\limsup_{n\to\infty}P\bigg(\bigg|\frac{n_2}{n_2-1}\frac{1}{\sqrt 2}\sum_{r=R+1}^{R_n}\lambda_{nr}A_{nr}\bigg|>\varepsilon\bigg)=0.
    $$

Finally, as $R\to\infty$,
\[
    E\bigg[\bigg\{\frac1{\sqrt2}\sum_{r>R}\lambda_r(Z_r^2-1)\bigg\}^2\bigg]=\sum_{r>R}\lambda_r^2\to0.
\]
Markov's inequality along with some calculations yields that 
$$
        \frac{n_2\tilde T_k}{\sqrt 2 s_n}
        \overset{d}{\to}
        \frac{1}{\sqrt 2}\sum_{r=1}^\infty \lambda_r(Z_r^2-1)+\rho Z_0,
$$
where the series on the right-hand side converges in $L_2$ since $\sum_r\lambda_r^2<\infty$. It completes the proof.
\end{proof}

\subsubsection{Proof of Lemma~\ref{app_lemma_E2}}
\begin{proof}
Take a strictly increasing sequence of positive integers $M_j\uparrow\infty$ such that
\[
    \sum_{\ell>M_j}^{\infty}\lambda_\ell^2\leq\frac1j.
\]
For every fixed $M_j$, the continuous mapping theorem yields that
\[
    \sum_{\ell=1}^{M_j}\hat\lambda_{n\ell}^2\overset{p}{\to}\sum_{\ell=1}^{M_j}\lambda_\ell^2.
\]
Therefore, we can choose a strictly increasing sequence of positive integers $N_j\uparrow\infty$, with $N_j\geq j$, such that, for every $n\geq N_j$,
\[
    P\bigg(\bigg|\sum_{\ell=1}^{M_j}\hat\lambda_{n\ell}^2-\sum_{\ell=1}^{M_j}\lambda_\ell^2\bigg|>\frac1j\bigg)\leq\frac1j,
\]
and such that $n_2\geq M_j$ whenever $n\geq N_j$.

For $n\geq N_1$, define
\[
    j(n):=\max\{j\geq1:N_j\leq n\}, \quad J_n:=M_{j(n)},
\]
and define $J_n=1$ for $n<N_1$. Since $N_j\uparrow\infty$, we have $j(n)\to\infty$, and hence $J_n=M_{j(n)}\to\infty$.  Moreover, $J_n\leq n_2$ for every $n\geq N_1$. It follows that
\[
 P\bigg(\bigg|\sum_{\ell=1}^{J_n}\widehat\lambda_{n\ell}^2-\sum_{\ell=1}^{\infty}\lambda_\ell^2\bigg|>\frac{2}{j(n)}\bigg)
 \leq P\bigg(\bigg|\sum_{\ell=1}^{M_{j(n)}}\widehat\lambda_{n\ell}^2-\sum_{\ell=1}^{M_{j(n)}}\lambda_\ell^2\bigg|>\frac{1}{j(n)}\bigg)
     \leq\frac{1}{j(n)}.
\]
Since $j(n)\to\infty$, it follows that
\[
    \sum_{\ell=1}^{J_n}\hat\lambda_{n\ell}^2\overset{p}{\to}\sum_{\ell=1}^{\infty}\lambda_\ell^2.
\]

Finally, by the identity
\[
    \sum_{\ell=1}^{n_2}\hat\lambda_{n\ell}^2=\frac{n_2}{n_2-1},
\]
we have
\[
    \sum_{\ell>J_n}^{n_2}\widehat\lambda_{n\ell}^2=
    \frac{n_2}{n_2-1}
    -\sum_{\ell=1}^{J_n}\widehat\lambda_{n\ell}^2\overset{p}{\to}
    1-\sum_{\ell=1}^{\infty}\lambda_\ell^2
    =\rho^2.
\]
This completes the proof.
\end{proof}

\subsubsection{Proof of Theorem~\ref{app_th_02}}
\begin{proof}
In the proof of Theorem~\ref{th_03} in Section~\ref{app_sec_024}, we have proved that 
$$
E^*\{(T_k^b-\tilde T_k^b)^2\} = o_p(n_2^{-2}). 
$$
Since $\tilde T_k^b\overset{d}{=}\tilde T_k^*$ and $\hat s_n = \Omega_p(1)$, conditional Markov's inequality along with some elementary derivations gives that
$$
        \sup_{t\in\mbR}
        \left|P^*\left(\frac{n_2 T_k^b}{\sqrt 2\hat s_n}\le t\right)-P^*\left(\frac{n_2 \tilde T_k^*}{\sqrt 2\hat s_n}\le t\right)\right|\overset{p}{\to}0 ,
$$
By the triangle inequality, it is sufficient to prove 
$$
        \sup_{t\in\mbR}
        \left|P^*\left(\frac{n_2\tilde T_k^*}{\sqrt 2\hat s_n}\le t\right)-P\left(\frac{n_2\tilde T_k}{\sqrt 2 s_n}\le t\right)\right|\overset{p}{\to}0.  
$$

Since $\operatorname{tr}(\mathbf H_n)=0$, $\sum_{\ell=1}^{n_2}\hat\mu_{n\ell}=0$. Therefore, conditionally on the data,
\begin{equation}\label{app_eq_D03}
    \frac{n_2\tilde T_k^*}{\sqrt 2\hat s_n}=\frac{1}{\sqrt 2\hat s_n}e^\top\mathbf H_n e=\frac{1}{\sqrt 2}\sum_{\ell=1}^{n_2}\hat\lambda_{n\ell}\{(\hat v_{n\ell}^{\top}e)^2-1\}.
\end{equation}
Fix $R\ge1$. By Assumptions~\ref{app_ass_E4} and \ref{app_ass_E5},
\begin{equation}\label{app_eq_D04}
        \frac{1}{\sqrt 2}\sum_{\ell=1}^{R}\hat\lambda_{n\ell}\{(\hat v_{n\ell}^{\top}e)^2-1\}
        +\frac{1}{\sqrt 2}\sum_{\ell>J_n}^{n_2}\hat\lambda_{n\ell}\{(\hat v_{n\ell}^{\top}e)^2-1\}\overset{D^*}{\to}\frac{1}{\sqrt 2}\sum_{\ell=1}^{R}\lambda_\ell(G_\ell^2-1)+\rho G_0.
\end{equation}

It remains to control the intermediate block $R<\ell\le J_n$.  Let
$$
        \mathbf A_{n,R}=\sum_{\ell=R+1}^{J_n}\hat\mu_{n\ell}\hat v_{n\ell}\hat v_{n\ell}^{\top}.
$$
and 
$$
        M_{n,R}^*=\frac{1}{\sqrt 2\hat s_n}\{e^\top \mathbf A_{n,R}e-\operatorname{tr}(\mathbf A_{n,R})\}.
$$
A direct calculation shows that
$$
        \Var^*\{e^\top \mathbf A_{n,R}e-\operatorname{tr}(\mathbf A_{n,R})\}=2\sum_{i\ne j}(\mathbf A_{n,R})_{ij}^2\le2\|\mathbf A_{n,R}\|_F^2.
$$
which is followed by 
$$
        \Var^*(M_{n,R}^*)\le\sum_{\ell=R+1}^{J_n}\hat\lambda_{n\ell}^2 .
$$
By Lemma~\ref{app_lemma_E2},
$$
        \lim_{R\to\infty}\limsup_{n\to\infty}P\left\{\Var^*(M_{n,R}^*)>\eta\right\}=0\quad\text{for every }\eta>0.
$$
Markov's inequality then yields for every $\epsilon,\eta>0$,
\[
    \lim_{R\to\infty}\limsup_{n\to\infty}
    P\{P^*(|M_{n,R}^*|>\epsilon)>\eta\}=0. 
\] 
Combining it with \eqref{app_eq_D03} and \eqref{app_eq_D04} yields that
$$
        \frac{n_2\tilde T_k^*}{\sqrt 2\hat s_n}
        \overset{D^*}{\to}L:=\frac{1}{\sqrt 2}\sum_{\ell=1}^{\infty}\lambda_\ell(G_\ell^2-1)+\rho G_0.
$$

Note that the distribution function of $L$ is continuous. Indeed, if $\rho>0$, this follows from convolution with a nondegenerate Gaussian distribution. If $\rho=0$, then at least one $\lambda_\ell$ is nonzero and $L$ is the sum of an absolutely continuous random variable and an independent remainder.
Therefore, by Lemma~2.11 in \cite{vanDerVaart1998} and conditional weak convergence,
$$
        \sup_{t\in\mbR}\left|P^*\left(\frac{n_2\tilde T_k^*}{\sqrt 2\hat s_n}\le t\right)-P(L\le t)\right|\overset{p}{\to}0 .
$$
By Theorem~\ref{app_th_01} and the same continuity argument,
$$
        \sup_{t\in\mbR}\left|P\left(\frac{n_2\tilde T_k}{\sqrt 2 s_n}\le t\right)
        -\mathbb P(L\le t)\right|\to0.
$$
Combining the last two displays with the triangle inequality proves the theorem.  
\end{proof}

\section{Additional Numerical Results}\label{app_sec_05}
\subsection{Rejection Rates with Correlated Predictors}\label{app_sec_simu}
In this subsection, we report the results in Simulations~1--3 in the main text with correlated predictors. As illustrated in Figures~\ref{figure_1correlated}--\ref{figure_3correlated}, the rejection rates with correlated predictors perform similarly to those with the independent predictors. 
In general, our method appears to control the Type-I error well and shows a good power performance across the settings. 
The HCZ test may suffer an inflated Type-I error rate with a small sample size and loss of power with low-dimensional predictors. 
Moreover, although the AN test exhibits the highest power in low-dimensional settings as shown in Figure~\ref{figure_2correlated}, our method performs close to the AN test with a large sample size and shows a superior to the HCZ and RP test.

\begin{figure}[htbp]\centering
\includegraphics[width=1\textwidth]{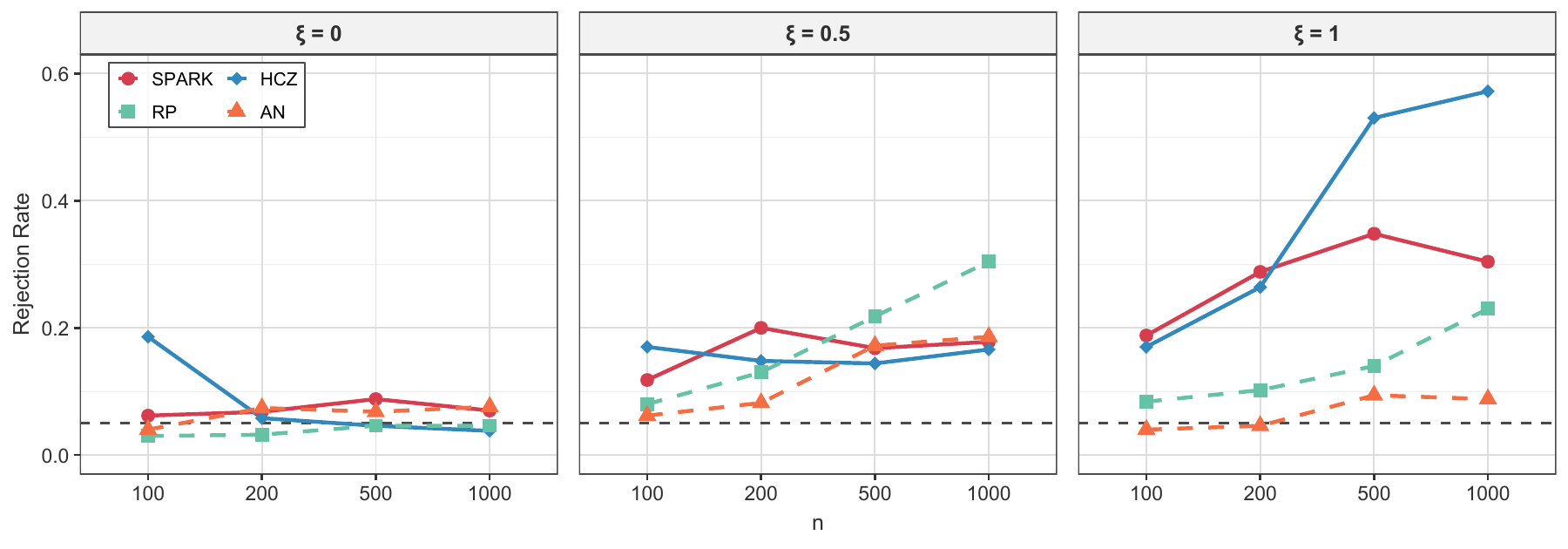}
\caption{Empirical Type-I error and power under Simulation~1 with correlated predictors. The horizontal axis denotes the sample size $n$, and the vertical axis shows the rejection rate. The black dashed line indicates the nominal significance level of $0.05$. }\label{figure_1correlated}
\end{figure}

\begin{figure}[htbp]\centering
\includegraphics[width=1\textwidth]{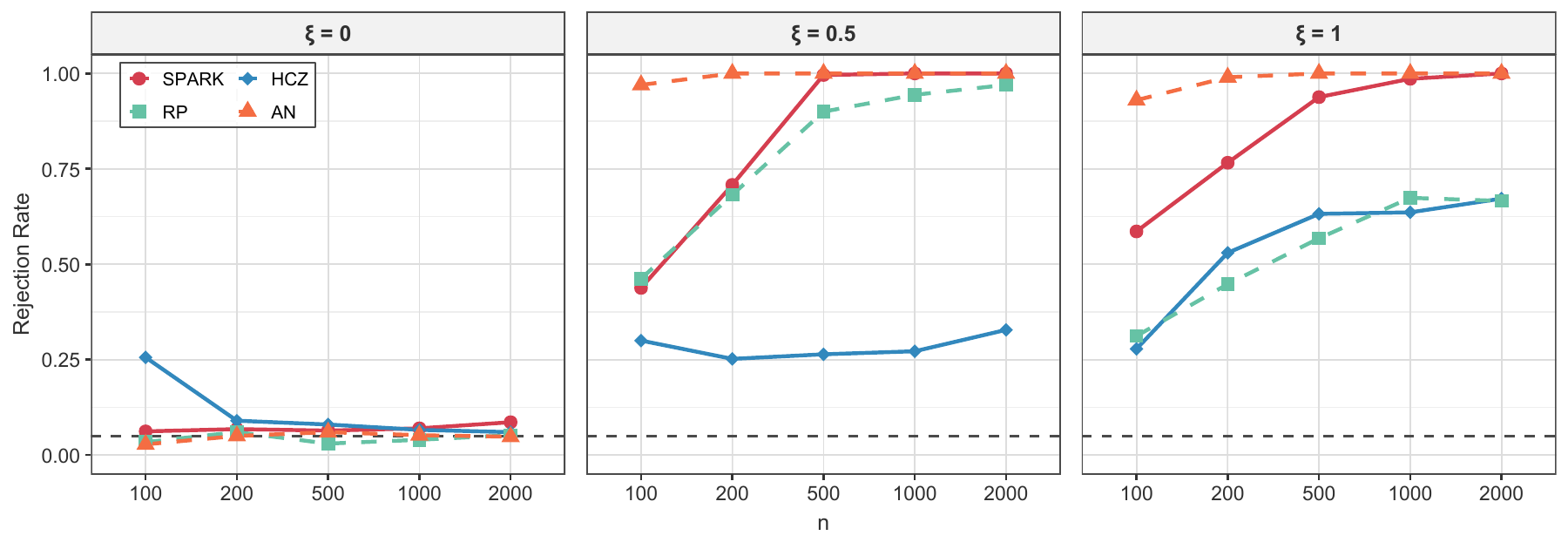}
\caption{Empirical Type-I error and power under Simulation~2 with correlated predictors. The horizontal axis denotes the sample size $n$, and the vertical axis shows the rejection rate. The black dashed line indicates the nominal significance level of $0.05$. }\label{figure_2correlated}
\end{figure}

\begin{figure}[htbp]\centering
\includegraphics[width=1\textwidth]{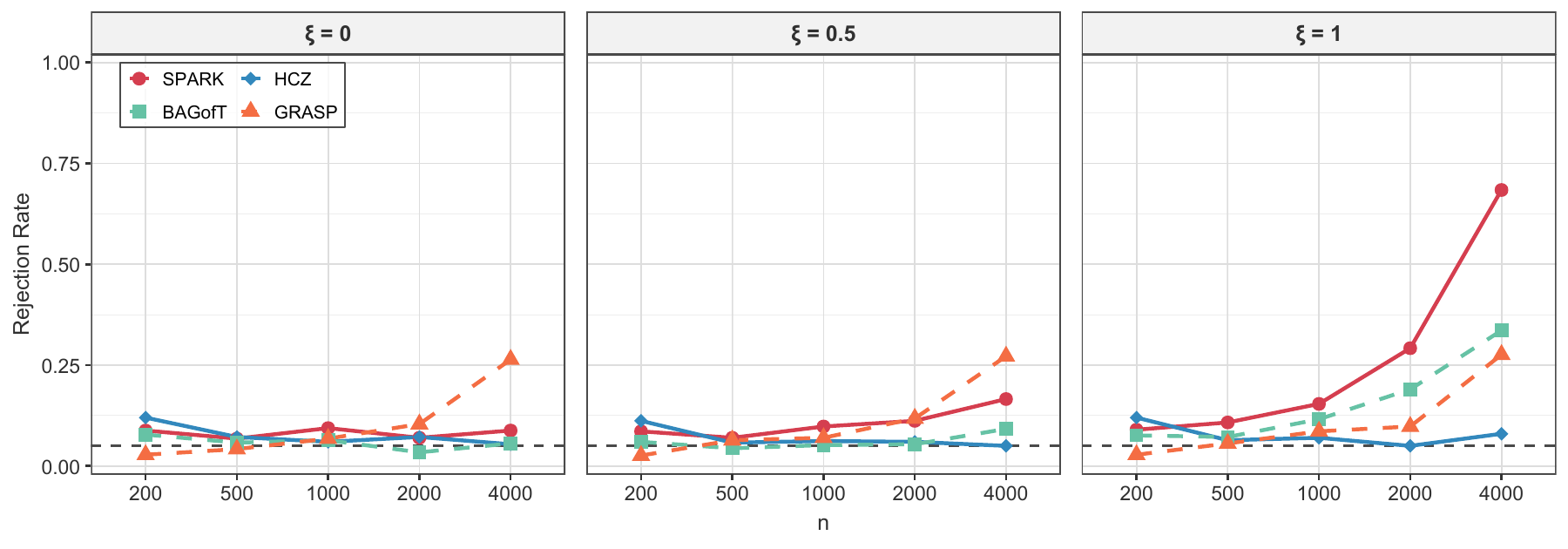}
\caption{Empirical Type-I error and power under Simulation~3 with correlated predictors. The horizontal axis denotes the sample size $n$, and the vertical axis shows the rejection rate. The black dashed line indicates the nominal significance level of $0.05$. }\label{figure_3correlated}
\end{figure}

\subsection{Cortisol Stress Reactivity Dataset}\label{app_sec_simu_2}
In this subsection, we analyze cortisol stress reactivity dataset, which is a well-established high-dimensional dataset widely used in mediation analysis to examine the role of DNA methylation in mediating the relationship between childhood trauma and cortisol stress reactivity \citep{Houtepen2016GenomewideDM,Guo2022HighDimensionalMA,Hehexu2025ADA}. 
The dataset consists of $n=85$ observations and $385,882$ DNA methylation loci, childhood trauma status, cortisol stress reactivity, six immune cell proportions, and confounding variables such as age and sex. The dataset is publicly available at \url{https://www.ebi.ac.uk/biostudies/arrayexpress/studies/EGEOD-77445}. 

Following \cite{He2025AGA}, we investigate the association between DNA methylation loci and the CD8 T cell proportion, a key immune component. 
We first using a marginal screening procedure \citep{Fan2008SureIS} to reduce the number of features to $p = 1000$. We then apply our method and HCZ to assess the performance of the six methods described in Section~\ref{sec_52}, LASSO, SCAD, SVR, RF, XGBoost, and FNN in predicting CD8 T cell proportions. 

Due to the limited sample size, we employ a fixed splitting ratio with $n_1=20$ for training and $n_2=65$ for testing for our method. We adopt a fixed data split, allocating $n_1 = 70$ and $n_2 = 15$ for HCZ as they used.
Figure~\ref{cortisol_boxplots_combined} presents the $p$-values of our test and HCZ and the MSEs obtained over $100$ replications, where each $p$-value is obtained using $10$ multiple splittings with the Cauchy combination. 

\begin{figure}[htbp]\centering
\includegraphics[width=1\textwidth]{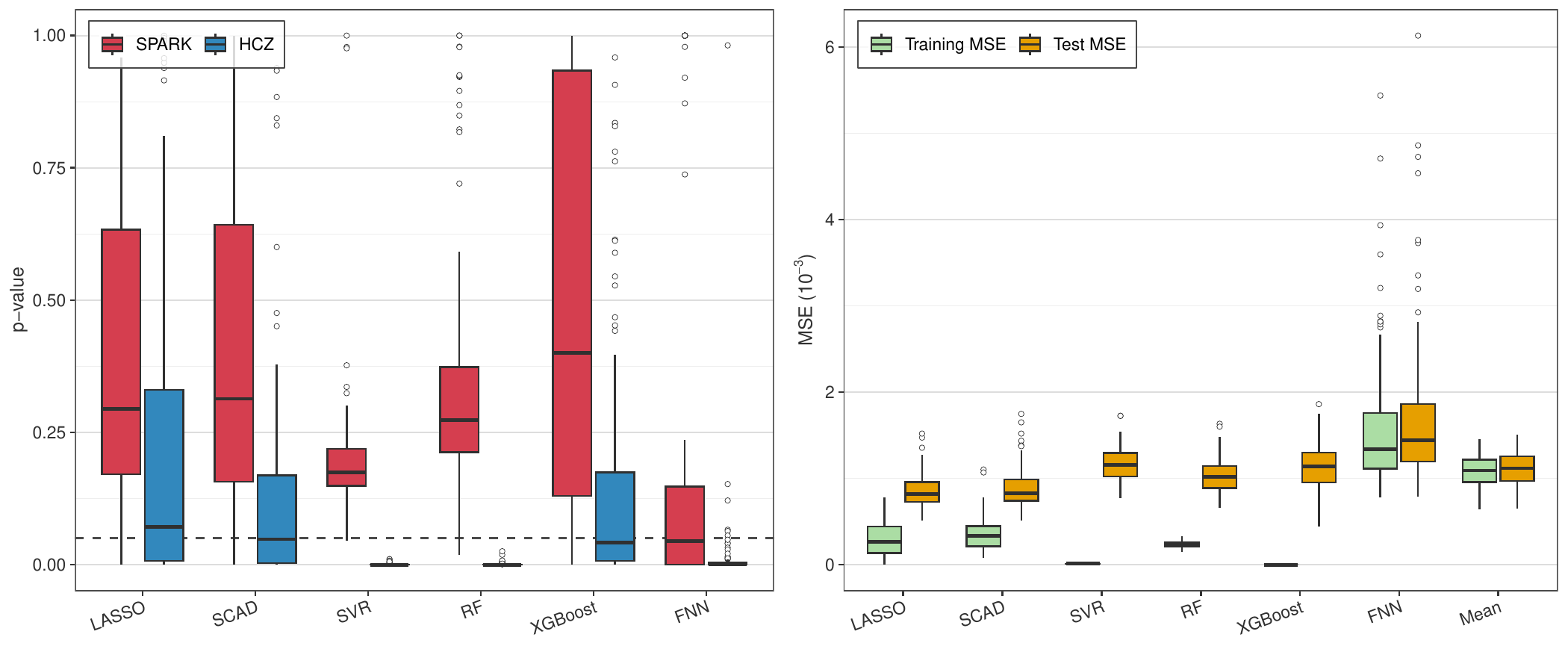}
\caption{
Boxplots of the $p$-values and MSEs for cortisol stress reactivity dataset. 
The horizontal axis indicates different learning procedure. The black dashed line indicates the nominal significance level of $0.05$. MSEs are estimated using training size $n_1=45$ and testing size $n_2=40$. }\label{cortisol_boxplots_combined}
\end{figure}


Our results show that FNN may not fit this dataset well, while there is not enough evidence to suggest that the other methods fail to converge to the true regression function. Therefore, a linear model, as used in \cite{Guo2022HighDimensionalMA}, could be a reasonable assumption for the relationship between DNA methylation and CD8 T-cell proportion. 
However, the results of HCZ suggest that only LASSO, SCAD, and XGBoost may converge, with strong evidence against the null hypothesis for the other three methods. 
Note that only the test MSEs of FNN are significantly larger than those obtained by using the mean of the training responses, thus our findings are more in line with the MSE results. 
Moreover, given the inflated Type I error of HCZ in small-sample, high-dimensional settings shown in Simulation~1, our results are likely to be more reliable. 

\bibliographystyle{apalike}
\bibliography{reference}
\end{appendices}

\end{document}